\newcommand{\beginsupplement}{
        
        \setcounter{table}{0}
        \renewcommand{\thetable}{S\arabic{table}}%
        \setcounter{figure}{0}
        \renewcommand{\thefigure}{S\arabic{figure}}%
     }
\newtheorem{thm}{Theorem}
\newtheorem{defi}{Definition}
\newtheorem{lem}{Lemma}
\newtheorem{rem}{Remark}
\newtheorem{example}{Example}
\newtheorem*{thm*}{Theorem}
\newcommand{\norm}[1]{\left\lVert#1\right\rVert}
\definecolor{maroon}{RGB}{139,0,0}
\definecolor{blueish}{RGB}{125,100,255}
\definecolor{greenish}{RGB}{26,102,46}
\definecolor{pink}{RGB}{255,0,144}
\definecolor{cyan}{RGB}{0,255,255}
\newcommand{\removelatexerror}{\let\@latex@error\@gobble}
\title{Byzantine-Resilient Distributed Hypothesis Testing \\ With Time-Varying Network Topology}
\author{Bo Wu, Steven Carr, Suda Bharadwaj, Zhe Xu, and Ufuk Topcu
\thanks{Bo Wu, Steven Carr, Suda Bharadwaj, and Ufuk Topcu are with the Department of Aerospace Engineering
and Engineering Mechanics, and the Oden Institute for Computational
Engineering and Sciences, University of Texas, Austin, 201 E 24th
St, Austin, TX 78712. Zhe Xu is with the School for Engineering of Matter, Transport, and Energy, Arizona State University, Tempe, AZ 85287. email: {\tt\small $\{$bwu3, stevencarr, suda.b, utopcu$\}$@utexas.edu, xzhe1@asu.edu}.  This work was partly funded by grants AFRL FA9550-19-1-0169 and DARPA D19AP00004.}}
\begin{document}

\maketitle
\begin{abstract}
\added{We study the problem of distributed hypothesis testing over a network of mobile agents with limited communication and sensing ranges to infer the true hypothesis collaboratively. 
In particular, we consider a scenario where there is an unknown subset of compromised agents that may deliberately share altered information to undermine the team objective. We propose two distributed algorithms where each agent maintains and updates two sets of beliefs  (i.e., probability distributions over the hypotheses), namely \emph{local} and \emph{actual} beliefs (LB and AB respectively for brevity). 
In both algorithms, at every time step, each agent shares its AB with other agents within its communication range and makes a local observation to update its LB. 
Then both algorithms can use the shared information to update ABs under certain conditions.
One requires receiving a certain number of shared  ABs at \emph{each time instant}; the other accumulates shared ABs \emph{over time} and updates after the number of shared ABs exceeds a prescribed threshold.
Otherwise, both algorithms rely on the agent's current LB and AB to update the new AB. We prove under mild assumptions that the AB for every non-compromised agent converges almost surely to the true hypothesis, without requiring connectivity in the underlying time-varying network topology. Using a simulation of a team of unmanned aerial vehicles aiming to classify adversarial agents among themselves, we illustrate and compare the proposed algorithms. Finally, we show experimentally that the second algorithm consistently outperforms the first algorithm in terms of the speed of convergence.}
\end{abstract}

\begin{IEEEkeywords}
Distributed hypothesis testing, multi-agent system, Byzantine attacks.
\end{IEEEkeywords}

\section{Introduction}
This paper studies a problem in distributed teams of cooperating agents performing tasks that are beyond the capability of an individual agent. Similar problems have attracted recent interest, see, e.g., \cite{olfati2006belief,cubuktepe2020policy,Allerton2019,tarighati2017decentralized,nedic2016distributed,liu2017distributed,liu2017communication,liu2018distributed,wu2015combined,Djeumou2020,CensusSTL2016}. As a running example, consider a team of mobile agents performing  persistent surveillance tasks as shown in Fig. \ref{fig:motivating example}. Each agent monitors a certain region by following a given trajectory for an indefinite period of time. Such a team of agents offers real-time surveillance and rapid response that covers a massive environment.

In adversarial environments, the agents may be subject to external influence (e.g., through a cyber attack) resulting in an a priori unknown subset of compromised (bad) agents that may behave adversely and follow different trajectories. To classify those bad agents, each non-compromised (good) agent may need to repeatedly sense the other agents' positions. Because of limited ranges, noisy sensor data, and individual surveillance task constraints, it may not be reasonable to anticipate that a single good agent can classify all bad agents. Instead, the agents must share their local information with their neighbors, i.e., the mobile agents within their communication range, to identify those bad agents collaboratively. Note that a bad agent may share arbitrarily altered information  to prevent itself from being identified.  Collaboration under the existence of bad agents raises the question of how to process the local and shared information so that the good agents can reach a consensus on the subset of bad agents correctly.  This classification problem fits into the framework of distributed hypothesis testing, where every possible subset of bad agents is a hypothesis. 

\begin{figure}[t]
\centering
\includegraphics[width=0.475\textwidth]{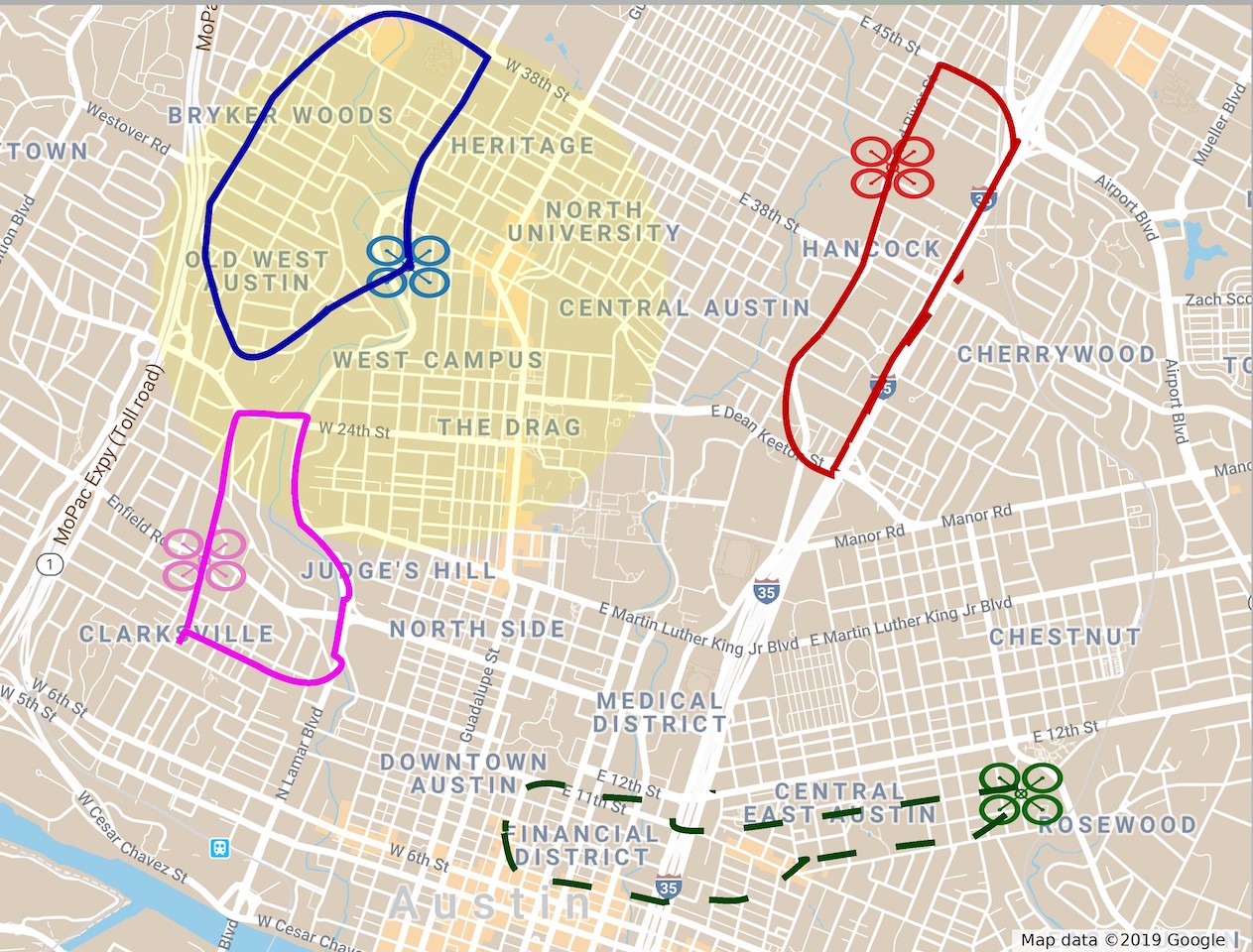}
\caption{{The motivating example consists of four agents (unmanned aerial vehicles or UAVs). The shaded yellow region represents the sensing and communication ranges.  Solid and dashed lines represent the trajectories of an agent, depending on whether it is good or bad.} The green agent is bad, and {it is following a dashed trajectory.}} \label{fig:motivating example}
\end{figure}

\added{Take Fig. \ref{fig:motivating example} as an example, the compromised green UAV  follows a different trajectory from its assigned one. Due to the limited sensing range, no good agent may observe this bad agent at every time step. Consequently each agent cannot infer which agent is bad individually based on its local observations of agent positions. Also due to limited communication ranges, agents may only share their local information occasionally to mobile agents within communication ranges, where the bad agent may also deliberately share contrived information that can trick other agents. Therefore, in this scenario, we need a resilient distributed solution so that each agent can make local observations, share its local information, and collaboratively identify the bad agent over time, regardless of the influence of the bad agent.}

{In distributed hypothesis testing,  a team of agents} makes local observations  and  collaboratively infer the unknown true hypothesis that generates their observations.  Distributed hypothesis testing finds a wide spectrum of applications, for example, in social learning \cite{rhim2014distributed,jadbabaie2012non,lalitha2018social}, sensor networks \cite{tarighati2017decentralized,alanyali2004distributed,olfati2006belief}, and wireless communication \cite{salehkalaibar2018hypothesis,rahman2012optimality}.
{The major challenge of distributed hypothesis testing is to design interaction rules to process local and shared information so that the agents will converge to the unknown true hypothesis.}

In one approach, the agents do not directly communicate with each other but send their local information to a fusion center for centralized processing \cite{veeravalli1993decentralized,rhim2014distributed,tarighati2017decentralized}. However, {such centralized processing} may place communication and computation burdens  on the fusion center as the number of  agents increases. Furthermore, the team objective will fail with a compromised fusion center.
To improve the scalability and resilience, distributed solutions where each agent communicates along a graph to its neighbors without a fusion center are growing in popularity, e.g., \cite{alanyali2004distributed,olfati2006belief,jadbabaie2012non,nedic2017fast,lalitha2018social,mitra2019new}. 

This paper considers a  distributed hypothesis testing problem over a network of mobile agents with a time-varying network topology.   
\added{Specifically, each agent maintains and updates two sets of \emph{beliefs}, namely \emph{local} and \emph{actual} beliefs \cite{mitra2019new} (LB and AB for brevity), based on its local observations and neighbors' ABs. A belief is a probability distribution over the hypotheses. We are interested in designing algorithms that perform belief updates guaranteeing that each agent's AB converges to the true hypothesis with resilience to bad agents that share arbitrarily altered information.}


In a preliminary version \cite{Bo2019distributed} of this paper, we proposed a resilient belief update algorithm. \added{At every time step, each agent shares its AB to its neighbors, makes a local observation, and updates its LB. To perform the AB update with the shared beliefs, the algorithm requires a sufficient number of shared beliefs to filter out the impact of the bad agents \emph{at the same time instant}}. Thus, {we refer to this updating method as a \emph{synchronous} belief update}. \added{When there are insufficient shared ABs to perform an AB update, the algorithm updates the agents' ABs as a function of their local and ABs. In \cite{Bo2019distributed}, the algorithm, after filtering out the impact of the bad agents, then takes the minimum of the neighbors' ABs on each hypothesis.} 
{In \cite{Bo2019distributed}, we proved the almost-sure convergence to the true hypothesis without requiring connectivity in the underlying network topology.}

This paper makes significant extensions on \cite{Bo2019distributed} and introduces additional belief update algorithms. \added{In the new algorithm, each agent collects the shared ABs \emph{over time} until there are enough of them to make the AB update.} Since there is no explicit time dependence on information, we call this process an \emph{asynchronous} belief update algorithm. We prove the almost-sure convergence to the true hypothesis under mild assumptions. \added{We also show that, besides taking the minimum, taking the average of the shared ABs over each hypothesis guarantees the convergence.} With low sensor noise, the minimum rule converges faster than the average rule since it can quickly rule out the unlikely hypotheses. Conversely, when the sensor noise is high, the average rule converges faster with lower variance. 

We conduct simulations with a team of UAVs that collaboratively tries to classify the compromised agents in the team. These results empirically demonstrate the validity and compare the performance of the synchronous and asynchronous algorithms. We show that the asynchronous algorithm consistently outperforms the synchronous algorithm. We also compare the performance between the average and minimum rules under different sensor noises.
\added{Finally, we show that the algorithm convergences even when multiple bad agents coordinate to deceive the others.}

\noindent {\bf{Related work.}} Most existing belief update algorithms make use of consensus-based belief aggregation assuming a strongly connected (potentially time-varying) network topology, see  e.g., \cite{lalitha2018social,olfati2006belief,shahrampour2015distributed,nedic2016distributed,nedic2017fast}. However, none of these methods consider adversarial agents  that do not follow the update rule and may share arbitrarily altered beliefs. As a result, these rules will fail in the presence of compromised agents. \added{Recent results in \cite{7322210,9069226} consider the vulnerability of distributed algorithms. However, their settings are in cyber-physical systems that involve continuous dynamics, and the focus is on the stability of the system.} 

\added{Belief propagation (BP) \cite{pearl1982reverend, braunstein2005survey, weiss2001correctness,sui2018accuracy} considers computing the marginal distribution for each agent based on local and shared information. However, BP  generally also assumes certain connectivity constraints for convergence, does not consider time varying graphs and Byzantine agents, and mostly focuses on sum-product belief update rule.}

The works most related to this paper are \cite{mitra2019new} and \cite{su2019defending}, where the belief update algorithms are resilient against bad agents. These bad agents follow a \emph{Byzantine} adversary model where they may have access to complete knowledge of the team task, belief update algorithm, shared information, and true hypothesis. These adversaries may send arbitrarily altered beliefs to undermine the team objective. {The belief update algorithm proposed in \cite{mitra2019new} is resilient to adversarial agents and {almost surely converges} to the true hypothesis.} However, the guarantee in \cite{mitra2019new} assumes a fixed network topology ({the extended version \cite{mitra2019new1} considers a time-varying network topology, but it only applies in settings without adversarial agents}). Furthermore, the guarantee of convergence in \cite{mitra2019new}  relies on some  graph-theoretic connectivity requirements of the network topology.

Compared to the existing literature for distributed hypothesis testing, this paper has three principal contributions. First, we design belief update algorithms resilient against compromised agents considering  a \emph{time-varying} network topology. Second, we prove that every non-compromised agent will converge almost surely to the true hypothesis \emph{without} requiring connectivity in  the underlying network topology. The proposed approaches are not only applicable  to the classification  problem  in Fig.~\ref{fig:motivating example} but  also to other applications such as collaborative localization and distributed intrusion detection. In these cases, the proposed framework also naturally extends to   settings without adversarial agents. \added{Third, we show the validity of the proposed algorithms experimentally, where the asynchronous algorithm consistently converges faster than the synchronous algorithm. We also compare the performance between average and minimum rules that make use of shared ABs.   }


\section{Preliminaries and Modeling Framework}\label{section:Preliminaries}

We consider a set $\mathcal{N}=\{0,...,N-1\}$ of agents that move in a gridworld with a finite grid set $Q$. Let $\mathbb{Z}^{\geq 0}$ denote non-negative integers.  At time step $t\in\mathbb{Z}^{\geq 0}$, we denote  $q_{i,t}\in Q$ as the \emph{state} of an agent $i$ that represents its position at time $t$. \added{Each agent is moving under the  constraints of a directed graph $\mathcal{G}_m=(Q,E_m)$ where $E_m\subseteq Q\times Q$  and $m$ in the subscript indicates that this graph characterizes the \emph{motion} of an agent}. An agent can move from  $q$ to $q'$ in one time step if and only if $(q,q')\in E_m$.  

For agent $i$, we characterize its communication range by a function $H_i:Q\rightarrow 2^Q$. Agent $i$ at state $q$ can communicate to another agent $j$ at state $q'$ if and only if $q'\in H_i(q)$ (note that we set $q\in H_i(q)$). \added{Then we characterize the network topology at time $t$ for the team of agents by a directed graph $\mathcal{G}_{c,t}=(\mathcal{N},E_{c,t})$, where the subscript $c$ indicates that this graph is a result of an agent's \emph{communication} between the agents that are within its communication range}.  An edge $(i,j)\in E_{c,t}\subseteq \mathcal{N}\times \mathcal{N}$ if and only if $q_{j,t}\in H_i(q_{i,t})$.  \added{In such a case, we say that agent $i$ is a \emph{neighbor} of agent $j$ at time $t$ meaning that agent $j$ is within agent $i$'s communication range, and thus, agent $i$ can communicate to agent $j$ (but not necessarily vice versa since we consider a general case where each agent may have difference communication range).} We denote $\mathcal{N}_{i,t}:=\{j\in\mathcal{N}|q_{i,t}\in H_j(q_{j,t})\}\subseteq\mathcal{N}$ as the set of all neighbors of  agent $i$ at time $t$.
\subsection{Hypothesis, Observations, and Local Likelihood Functions}
There is a finite set $\Theta$ of possible hypotheses. We denote the total number of hypotheses as $m=|\Theta|$. At each time step $t$, an agent $i$ at a state $q_t\in Q$ makes an observation $s\in S_i$ where $S_i$ denotes a set of observations for agent $i$. 

The probability of observing $s$ is given by a conditional likelihood function $l_i(s|\theta^*,q_{i,t})$, where $l_i(s|\theta^*,q_{i,t})\in [0,1]$, and $\sum_{s \in S_i}l_i(s|\theta^*,q_{i,t})=1$. We denote $\theta^*\in\Theta$ as the unknown but fixed \emph{true} hypothesis to be learned. {The conditional likelihood functions characterize the sensor noise conditioned on the agent's position and the true hypothesis.} \added{Each agent $i$ only has the knowledge of its likelihood functions $\{l_i(\cdot|\theta,q_{i,t}), \forall \theta\in\Theta, q_{i,t}\in Q\}$, which may not be identical across the agents.}
  
\subsection{Agent Trajectories and Identities}
Each agent $i$, starting at $t=0$, moves in the gridworld following a sequence of states $(q_{i,0},q_{i,1},q_{i,2},...)$ which we denote as a \emph{local state path}. Obviously, at any time $t$, $(q_{i,t},q_{i,t+1})\in E_m$. We assume each agent follows a given local state path. Furthermore, the local likelihood function for $\theta^*$ only depends on an agent's current state $q_t$. Therefore, the observation sequence for each agent is an i.i.d random process. We define the set of state observation paths as follows.

\begin{defi}[State observation paths]
\added{Given an agent $i$ and a local state path $(q_{i,0},q_{i,1},q_{i,2}...)$, its set $\Omega_i$ of local state observation paths is defined as $\Omega_i:=\{\omega_i|\omega_i=(q_{i,0},s_{i,0})(q_{i,1},s_{i,1})(q_{i,2},s_{i,2})...,\forall s_{i,t}\in S_i,q_{i,t}\in Q,\forall t\in\mathbb{N}\}$ with $P_{i,\theta^*}(\omega_i)=\prod_{t=0}^\infty l_i(s_{i,t}|\theta^*,q_{i,t})$.}  The set $\Omega$ of global state observation paths is defined as $\Omega:=\prod_{i}\Omega_i$.
\end{defi}

Within the team of agents, there is a subset of non-compromised (good) agents defined as $G\subseteq\mathcal{N}$. Good agents follow their given state paths and the distributed hypothesis testing rule. We assume that, for an agent $i\in G$, at any time $t$, there are at most $f$ bad neighboring agents, even though the identities of these bad agents are not known. The bad agents are characterized by the Byzantine fault model \cite{dolev1986reaching}. Each of them has full access to all agents' state paths, their local likelihood functions, any information shared over the network topology, and the distributed hypothesis testing rule used by the team. If an agent is bad, it may follow a different state path. To prevent the team of agents from achieving the hypothesis testing objective, bad agents may collaboratively share arbitrarily altered information to their neighbors. 

\subsection{Source Location and Source Agent}
The objective of this paper is to design a distributed hypothesis testing rule such that, when time goes to infinity, every good agent $i\in G$ is able to determine the true hypothesis $\theta^*\in\Theta$ almost surely. To this end, we define the following:
\begin{defi}[Kullback–Leibler (KL) divergence \cite{kullback1951information}]
KL divergence  $D(P_1||P_2)$ of two discrete probabilistic distributions $P_1$ and $P_2$ is given by 
\begin{equation}
D(P_1||P_2) := \sum_x P_1(x)\log(\frac{P_1(x)}{P_2(x)}).  
\end{equation}
\end{defi}

\begin{defi}[Source state]
A state $q\in Q$ is called a source state for a pair of hypothesis $\theta$ and $\theta'\in\Theta$ and an agent $i$  if and only if  $D(l_i(\cdot|\theta,q)||l_i(\cdot|\theta',q))>0$\footnote{\added{Here $>0$ indicates an information gain over $\theta$.}}.
\end{defi}
{We further define a source state set $O_i(\theta,\theta')\subseteq Q$ for agent $i$ as   $O_i(\theta,\theta'):=\{q\in Q|D(l_i(\cdot|\theta,q)||l_i(\cdot|\theta',q))>0\}.$
Intuitively, $O_i(\theta,\theta')$ denotes all the source states where $\theta$ and $\theta'$ incur different likelihood functions for agent $i$. However, as we will see in Section \ref{section:learning rules}, it requires an infinite number of visits to at least one source state in $O_i(\theta,\theta')$ for agent $i$  to distinguish $\theta$ and $\theta'$. Therefore, we define: }

\begin{defi}[Source agent]\label{def:S(theta,theta')}
{An agent $i$ with a local state path $(q_{i,0},q_{i,1},q_{i,2}...)$ is a source agent for a pair of hypothesis $\theta$ and $\theta'\in\Theta$ if and only if}

\begin{equation}\label{equation:lemma convergence}
 \added{   \lim_{T\rightarrow\infty}\sum_{t=0}^{T}I_{ O_i(\theta,\theta')}(q_{i,t})=\infty} \footnote{\added{Here the term $\infty$ indicates that we are interested in experiencing $q_{i,t} \in O_i(\theta,\theta^{\prime}),$ for all the time instants as much as possible.}},
\end{equation}
where {$I_{ O_i(\theta,\theta')}(q_{i,t}):Q\rightarrow\{0,1\}$} is the indicator function. {$I_{ O_i(\theta,\theta')}(q_{i,t})=1$ if $q_{i,t}\in O_i(\theta,\theta')$, and $I_{ O_i(\theta,\theta')}(q_{i,t})=0$ otherwise.}
\end{defi}

Similarly, we define a \textit{source agent set} $S(\theta,\theta')\subseteq\mathcal{N}$ where
\added{$$S(\theta,\theta'):=\{i\in \mathcal{N}|    \lim_{T\rightarrow\infty}\sum_{t=0}^{T}I_{ O_i(\theta,\theta')}(q_{i,t})=\infty\}.$$}

{By Definition \ref{def:S(theta,theta')}, agent $i$ belongs to the set $S(\theta,\theta')$ if it visits at least one source state $q\in O_i(\theta,\theta')$ infinitely often.}

\section{Synchronous Distributed Hypothesis Algorithm}\label{section:learning rules}
\begin{figure}[!t]
		\removelatexerror
		\begin{algorithm}[H]
			\SetKwData{Left}{left}\SetKwData{This}{this}\SetKwData{Up}{up}
			\SetKwFunction{Union}{Union}\SetKwFunction{FindCompress}{FindCompress}
			\SetKwInOut{Input}{input}\SetKwInOut{Output}{output}
			\Input{Agent $i$, its location $q_{i,t+1}$, neighbor set $\mathcal{N}_{i,t+1}$, and observation $s_{i,t+1}$.}
			\BlankLine
			\nl     \For{$\theta\in\Theta$}{
			\nl \added{Compute the new LB} \label{algorithm:LB update }\begin{equation}\label{equation:LB update rule}
			        b^l_{i,t+1}(\theta) = \frac{l_i(s_{i,t+1}|\theta,q_{i,t+1})b^l_{i,t}(\theta)}{\sum_{p=1}^m l_i(s_{i,t+1}|\theta_p,q_{i,t+1})b^l_{i,t}(\theta_p)}.
    \end{equation}  \Comment{\added{LB update with Bayesian rule}}\;
			\nl     \eIf{  for all $\theta'\neq\theta$, $|S(\theta,\theta')\cap\mathcal{N}_{i,t+1}|\geq 2f+1$\label{algorithm:AB update }}{
                                        \nl Remove $f$ neighboring agents with the lowest $f$ beliefs and save the   rest of agents to $\mathcal{N}^\theta_{i,t+1}$\label{algorithm:remove beliefs}\;
                                        \nl \added{Compute the new AB as}
                                            \begin{equation}\label{equation:acutual belief update rule for case one}
                                    \tilde{b}^a_{i,t+1}(\theta) = \min\{\{b^a_{j,t}(\theta)\}_{j\in \mathcal{N}^\theta_{i,t+1}},b^l_{i,t+1}(\theta)\}.
                            \end{equation}	\label{algorithm:AB update case one}	\Comment{\added{Case one for AB update.}}}	
			                {\nl \added{Compute the new AB as} \begin{equation}\label{equation:acutual belief update rule for case two}
        \tilde{b}^a_{i,t+1}(\theta) = \min\{b^a_{i,t}(\theta),b^l_{i,t+1}(\theta)\}.
    \end{equation}\label{algorithm:AB update case two}\Comment{\added{Case two for AB update.}}}

			        }
	    \nl Normalization step. For each $\theta\in\Theta$, perform
	    \begin{equation}\label{equation:norm}
	        b^a_{i,t+1}(\theta) = \frac{\tilde{b}^a_{i,t+1}(\theta)}{\sum_{p=1}^m \tilde{b}^a_{i,t+1}(\theta_p)}.
	    \end{equation} \label{algorithm:normalization}

			\caption{Synchronized  Distributed Hypothesis Testing (SDHT)}\label{alg:Synchronized Distributed Hypothesis Testing}			
		\end{algorithm}
	\end{figure}
	
In this section, we propose  an algorithm that describes the belief update rule for each agent. Before making an observation at time $t+1$, agent $i$ maintains a local belief and an actual belief \cite{mitra2019new}:
\begin{itemize}
    \item The local belief (LB) 
    $$b^l_{i,t}:\Theta\rightarrow [0,1], 
    \sum_{\theta\in\Theta}b^l_{i,t}(\theta)=1.$$
    \item The actual belief (AB) 
    $$b^a_{i,t}:\Theta\rightarrow [0,1], 
    \sum_{\theta\in\Theta}b^a_{i,t}(\theta)=1.$$
\end{itemize} 
At $t=0$, the beliefs $b^l_{i,0}$ and $b^a_{i,0}$ are initialized according to  some \emph{a priori}  distribution. 

 We summarize the belief update procedure for one time step in Algorithm \ref{alg:Synchronized Distributed Hypothesis Testing} (SDHT). At time $t+1$, agent $i$ is at $q_{i,t+1}$ and makes an observation $s_{i,t+1}$. The algorithm proceeds as follows. 

\added{For each $\theta\in\Theta$, as shown in Line \ref{algorithm:LB update } of SDHT, the algorithm  first updates the LB $b^l_{i,t+1}(\theta)$ with \eqref{equation:LB update rule} following Bayesian rule.}
\added{Then the algorithm moves on to update the AB  as shown from Line \ref{algorithm:AB update } to Line \ref{algorithm:AB update case two} of SDHT. We update AB $b^a_{i,t+1}(\theta)$ according to one of the two cases. As shown in Line \ref{algorithm:AB update }, if for all $\theta'\neq\theta$, $|S(\theta,\theta')\cap\mathcal{N}_{i,t+1}|\geq 2f+1$, {i.e., the number of source agents for $\theta$ and $\theta'$ that are agent $i$'s neighbors at time $t+1$ exceeds $2f+1$, then agent $i$ updates its AB in case one. SDHT then sorts $b^a_{j,t+1}(\theta)$ for all $j\in \mathcal{N}_{i,t+1}$ and removes  $f$ neighbors with the lowest $f$ ABs on $\theta$.} We denote the set $\mathcal{N}^\theta_{i,t+1}$ as the remaining neighbors. Then the algorithm updates the  AB as in \eqref{equation:acutual belief update rule for case one}. }

\added{On the other hand, if the condition for case one is not satisfied, the AB is updated in case two, as shown in Line \ref{algorithm:AB update case two} of SDHT. In \eqref{equation:acutual belief update rule for case two}, we update the AB with the smaller value between the newly updated LB and the AB at time $t$.  Then the algorithm normalizes the ABs to make sure they sum up to one.}

\added{We start with the following lemma to show how LBs for any good agent $i$ evolve.}
\begin{lem}\label{lemma:convergence}
Consider a good agent $i\in G$, a local state path $\omega_i=(q_{i,0},q_{i,1},q_{i,2}...)$ and a pair of hypotheses $\theta^*$ and $\theta$, where $\theta^*$ denotes the true hypothesis and $\theta^*\neq\theta$. If $b^l_{i,0}(\theta^*)>0$ and $i\in S(\theta,\theta^*)$, then
\begin{equation}\label{equation:zero of nontrue hypothesis}
  b^l_{i,t}(\theta)\rightarrow 0  \text{ almost surely},
\end{equation}
and
\begin{equation}\label{equation:nonzero of true hypothesis}
  b^l_{i,t}(\theta^*)>0  \text{ for all $t$ almost surely}.
\end{equation}
\end{lem}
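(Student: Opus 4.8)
The plan is to track the ratio of beliefs rather than the beliefs themselves. Define the log-likelihood ratio $\varphi_{i,t}(\theta):=\log\frac{b^l_{i,t}(\theta)}{b^l_{i,t}(\theta^*)}$ and observe that the Bayesian update \eqref{equation:LB update rule} telescopes multiplicatively, so that
\begin{equation}\label{eq:my-telescope}
\varphi_{i,t}(\theta)=\varphi_{i,0}(\theta)+\sum_{\tau=1}^{t}\log\frac{l_i(s_{i,\tau}|\theta,q_{i,\tau})}{l_i(s_{i,\tau}|\theta^*,q_{i,\tau})}.
\end{equation}
The normalizing denominators in \eqref{equation:LB update rule} cancel in the ratio, which is precisely why working with $\varphi$ is cleaner than working with $b^l$ directly. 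First I would establish \eqref{equation:nonzero of true hypothesis}: since $b^l_{i,0}(\theta^*)>0$ and each Bayesian update multiplies by a strictly positive likelihood $l_i(s_{i,\tau}|\theta^*,q_{i,\tau})>0$ (the realized observation has positive probability, so its likelihood is nonzero almost surely) divided by a finite positive normalizer, $b^l_{i,t}(\theta^*)$ stays strictly positive for all $t$ with probability one. This also guarantees $\varphi_{i,t}(\theta)$ is well defined.

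The heart of the argument is showing $\varphi_{i,t}(\theta)\to-\infty$ almost surely, which gives \eqref{equation:zero of nontrue hypothesis}. The summands $X_\tau:=\log\frac{l_i(s_{i,\tau}|\theta,q_{i,\tau})}{l_i(s_{i,\tau}|\theta^*,q_{i,\tau})}$ are independent (the observation process is i.i.d.\ conditioned on the state path, as noted in the text) but \emph{not} identically distributed, because $q_{i,\tau}$ varies along the path. Taking the conditional expectation under the true hypothesis $\theta^*$ gives $\mathbb{E}_{\theta^*}[X_\tau]=-D\bigl(l_i(\cdot|\theta^*,q_{i,\tau})\,\|\,l_i(\cdot|\theta,q_{i,\tau})\bigr)\le 0$, and this KL term is strictly positive exactly when $q_{i,\tau}\in O_i(\theta^*,\theta)$, i.e.\ at source states. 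The partial sum of these negative drifts is therefore $-\sum_{\tau=1}^{t}D\bigl(l_i(\cdot|\theta^*,q_{i,\tau})\,\|\,l_i(\cdot|\theta,q_{i,\tau})\bigr)$, and because $i\in S(\theta,\theta^*)$ the path visits a source state infinitely often, so this cumulative drift diverges to $-\infty$.

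To turn the divergent mean into almost-sure divergence of the random sum, I would control the fluctuations via a strong-law-type argument for independent non-identically-distributed variables. The natural tool is Kolmogorov's strong law (or the martingale strong law): writing $\varphi_{i,t}(\theta)=\varphi_{i,0}(\theta)+\sum_\tau(X_\tau-\mathbb{E}[X_\tau])+\sum_\tau\mathbb{E}[X_\tau]$, the centered martingale $M_t:=\sum_\tau(X_\tau-\mathbb{E}[X_\tau])$ should be shown to be $o$ of the deterministic drift, so that the drift dominates and drives $\varphi_{i,t}(\theta)\to-\infty$. \textbf{The main obstacle} is exactly this fluctuation control in the non-i.i.d., possibly-unbounded-summand regime: the increments $X_\tau$ need not be bounded or have uniformly bounded variance across the finitely many states, and the drift can accumulate arbitrarily slowly (source states may be visited very sparsely), so one cannot simply divide by $t$. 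I would address this by exploiting that $Q$ and $S_i$ are finite, which makes the finitely many possible summand distributions have uniformly bounded variance, and then apply a strong law indexed by the cumulative drift rather than by $t$ — grouping the increments by the visited state and invoking the SLLN on the subsequence of source-state visits (which is infinite by hypothesis), while the bounded-variance non-source increments contribute a vanishing relative perturbation. Handling the degenerate case $l_i(s|\theta,q)=0$ with $l_i(s|\theta^*,q)>0$, where a single observation sends $\varphi$ to $-\infty$ deterministically, would be a short separate remark, since that only accelerates convergence.
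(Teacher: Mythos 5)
Your overall route is the paper's own: the same log-ratio $\varphi_{i,t}$, the same telescoping identity, the KL-divergence drift, and an SLLN applied per visited state along the (infinite) subsequence of source-state visits; your direct positivity argument for \eqref{equation:nonzero of true hypothesis} is also fine, and in fact simpler than the paper's. The gap is in your treatment of the non-source increments. You model them as independent, mean-zero, bounded-variance random variables and assert that they ``contribute a vanishing relative perturbation.'' That assertion is false in the generality you allow, and you yourself identify why: the drift accumulates only along source visits, which may be arbitrarily sparse. Concretely, if the path makes $N_t$ source visits by time $t$ with $N_t \sim \log t$, the drift is $-\Theta(\log t)$, while the partial sums of the $t-N_t$ genuinely random, mean-zero, bounded-variance non-source increments would fluctuate with magnitude $\Theta\bigl(\sqrt{t\log\log t}\bigr)$ by the law of the iterated logarithm; the fluctuation swamps the drift, and $\varphi_{i,t}\to-\infty$ cannot be concluded. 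A martingale strong law indexed by the cumulative drift fails for the same reason: $\sum_t \mathrm{Var}(X_t)/A_t^2$ need not converge when many non-source steps occur between consecutive source visits.

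The missing observation, which the paper uses and which rescues your argument, is that non-source increments are not merely mean zero; they are identically zero. A state $q\notin O_i(\theta,\theta^*)$ satisfies $D\bigl(l_i(\cdot|\theta^*,q)\,\|\,l_i(\cdot|\theta,q)\bigr)=0$, and for discrete distributions zero KL divergence forces $l_i(\cdot|\theta,q)=l_i(\cdot|\theta^*,q)$, hence $X_\tau=\log 1=0$ for every realized observation at such a state. All randomness therefore lives on the source-visit subsequence, where your per-state grouping does close the proof: since $Q$ is finite, at least one source state is visited infinitely often, the visits to any fixed state yield i.i.d.\ increments, and the per-visit mean is a fixed strictly negative constant (possibly $-\infty$), so the SLLN drives the sum to $-\infty$ almost surely, with states visited finitely often contributing only a finite amount. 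With that one-line substitution your plan becomes a complete proof; your separate remark on the degenerate case $l_i(s|\theta,q)=0$, $l_i(s|\theta^*,q)>0$ remains a sensible addition that the paper itself omits.
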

\begin{proof}
\added{Please find the proof to this lemma in the appendix.}
\end{proof}

\begin{rem}\label{remark:for lemma 1}
From Lemma \ref{lemma:convergence}, we can see the intuitive meaning of a source agent set $ S(\theta,\theta')$ for any hypothesis pair $\theta$ and $\theta'$ where $\theta\neq\theta'$. If $\theta=\theta^*$, we know that $b_{i,t}^l(\theta')\rightarrow 0$ almost surely for any $i\in S(\theta,\theta')$, which implies that any source agent for the hypothesis pair $\theta^*$ and $\theta'$ is able to distinguish between $\theta^*$ and $\theta'$ and rules out $\theta'$. \added{The AB $b_{i,t}^a(\theta')$ for $\theta'$ will also approach zero since it is upper-bounded by $b_{i,t}^l(\theta')$ as can be observed from \eqref{equation:acutual belief update rule for case one} and \eqref{equation:acutual belief update rule for case two}.}
\end{rem}

\begin{rem}\label{remark:lemma 1}
If we define a set $\hat{\Omega}\subseteq\Omega$ of global state observation path  such that $\omega=\prod\omega_j\in\hat{\Omega}$ if and only if for any good agent $i$,
\begin{itemize}
    \item for each $\theta\neq\theta^*$, if $i\in S(\theta,\theta^*)$,  $b^l_{i,t}(\theta)\rightarrow 0$, and 
    \item $\lim_{t\rightarrow\infty}b^l_{i,t}(\theta^*)$ exists with a given $\omega_i$.
\end{itemize}
By Lemma \ref{lemma:convergence} we know that $\hat{\Omega}$ has measure one. 
\end{rem}

\added{Lemma \ref{lemma:convergence} also states that, for a good agent $i\in G$, its LB $b^l_{i,t}(\theta^*)>0$ for all $t$ almost surely.} \added{But is it possible for the bad agents to influence their neighboring good agents such that the good agents' ABs on $\theta^*$ are set to zero?} The following lemma shows that this situation cannot happen with the proposed belief update rule.
\begin{lem}\label{lemma:nonzero of true hypothesis}
For any good agent $i\in G$, $b^a_{i,t}(\theta^*)>0$ for all $t$ almost surely.
\end{lem}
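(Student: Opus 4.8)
The plan is to prove a slightly stronger statement than the one quoted, by induction on $t$, on a single measure-one event: for \emph{every} good agent $j\in G$, $b^a_{j,t}(\theta^*)>0$ for all $t$. Strengthening the claim to all good agents simultaneously is essential, because the case-one update of any one agent $i$ (equation \eqref{equation:acutual belief update rule for case one}) reads off the shared ABs $b^a_{j,t}(\theta^*)$ of its neighbors; positivity of $b^a_{i,t+1}(\theta^*)$ therefore cannot be propagated unless we already know the \emph{good} neighbors carry positive mass on $\theta^*$. So I would first fix the measure-one event on which, by Lemma \ref{lemma:convergence} (specifically \eqref{equation:nonzero of true hypothesis}), $b^l_{j,t}(\theta^*)>0$ holds for all $t$ and all $j\in G$; since $G$ is finite this is a finite intersection of measure-one events and hence has measure one. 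On this event the rest of the argument is entirely deterministic.

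For the base case, the initialization puts positive mass on $\theta^*$ (as is already assumed for the LB in Lemma \ref{lemma:convergence}), so $b^a_{j,0}(\theta^*)>0$. For the inductive step, assume $b^a_{j,t}(\theta^*)>0$ for every good agent $j$ and fix an arbitrary good agent $i$, splitting on the two update cases. In case two, $\tilde{b}^a_{i,t+1}(\theta^*)=\min\{b^a_{i,t}(\theta^*),b^l_{i,t+1}(\theta^*)\}$ from \eqref{equation:acutual belief update rule for case two} is a minimum of two positive numbers and hence positive. In case one, the key combinatorial observation is that only \emph{bad} neighbors can report $b^a_{j,t}(\theta^*)=0$, since every good neighbor reports a positive value by the induction hypothesis, and there are at most $f$ bad neighbors of $i$. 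Because SDHT discards the $f$ neighbors with the smallest ABs on $\theta^*$ (Line \ref{algorithm:remove beliefs}), every neighbor reporting a zero — at most $f$ of them — is among those discarded, so every agent remaining in $\mathcal{N}^{\theta^*}_{i,t+1}$ reports a strictly positive belief; combined with $b^l_{i,t+1}(\theta^*)>0$, the minimum in \eqref{equation:acutual belief update rule for case one} is positive. The case-one guard $|S(\theta^*,\theta')\cap\mathcal{N}_{i,t+1}|\geq 2f+1$ ensures $\mathcal{N}^{\theta^*}_{i,t+1}$ is nonempty, so the minimum is well defined. Finally, the normalization \eqref{equation:norm} preserves positivity: each $\tilde{b}^a_{i,t+1}(\theta_p)\in[0,1]$ and the $\theta^*$-term is positive, so the denominator lies in $(0,m]$ and $b^a_{i,t+1}(\theta^*)>0$. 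As $i$ was arbitrary, this closes the induction.

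I expect the main obstacle to be making the ``at most $f$ zeros are trimmed'' argument airtight: one must handle ties at value zero carefully, note that a bad neighbor which instead reports a \emph{positive} value can never decrease the minimum below zero, and — most importantly — recognize at the outset that the induction has to be carried jointly over all good agents rather than for the fixed agent $i$ in isolation. The remaining pieces (base case, case two, and normalization) are routine once the joint induction hypothesis and the measure-one event are in place.
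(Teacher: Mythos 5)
Your proof is correct and takes essentially the same approach as the paper: the paper argues by contradiction at the first time instant any good agent's AB on $\theta^*$ hits zero, which is just the contrapositive packaging of your joint forward induction over all good agents. The key step is identical in both — only the (at most $f$) bad neighbors can report a zero belief on $\theta^*$, and since SDHT discards the $f$ lowest shared beliefs, every zero is trimmed, so the case-one minimum (together with the positive LB from Lemma \ref{lemma:convergence} and the trivial case-two argument) remains strictly positive.
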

\begin{proof}
\added{Please find the proof in the appendix. }
\end{proof}

The following theorem guarantees that SDHT almost surely converges to the true hypothesis.

\begin{thm}\label{theorem:main result}
For each agent $i\in G$ and its corresponding local state path $(q_{i,0},q_{i,1},q_{i,2}...)$, suppose the following conditions hold:
\begin{enumerate}
    \item The initial beliefs $b^l_{i,0}(\theta)>0$ and $b^a_{i,0}(\theta)>0$ for any $\theta\in\Theta$ and any  agent $i$.
    \item If case one in SDHT happens only finitely often for a  hypothesis $\theta\in\Theta$, then $i\in S(\theta,\theta')$ for any $\theta'\neq\theta$.
\end{enumerate}
Then SDHT ensures that $b^a_{i,t}(\theta^*)\rightarrow 1$ almost surely for every good agent $i\in G$ as $t\rightarrow\infty$.
\end{thm}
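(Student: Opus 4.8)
The plan is to prove the equivalent statement that $b^a_{i,t}(\theta)\to 0$ almost surely for every good agent $i$ and every $\theta\neq\theta^*$. This is equivalent because the ABs are normalized, $\sum_{\theta}b^a_{i,t}(\theta)=1$, and Lemma~\ref{lemma:nonzero of true hypothesis} guarantees $b^a_{i,t}(\theta^*)>0$, so once all competitors vanish the mass must concentrate on $\theta^*$; it is convenient to track the ratios $b^a_{i,t}(\theta)/b^a_{i,t}(\theta^*)$, which are invariant under the normalization in \eqref{equation:norm}. The first preparatory step is to upgrade Lemma~\ref{lemma:nonzero of true hypothesis} to a \emph{uniform} lower bound: there exist $T$ and $\beta^*>0$ with $b^a_{i,t}(\theta^*)\geq\beta^*$ for all $i\in G$ and $t\geq T$. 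This uses three ingredients: (a) the Bayesian LB $b^l_{i,t}(\theta^*)$ stays bounded below by a positive constant $\ell$ (its limit exists by Remark~\ref{remark:lemma 1} and is positive since $\theta^*$ is never ruled out); (b) the normalization denominator in \eqref{equation:norm} is $\sum_p\tilde b^a_{i,t+1}(\theta_p)\leq\sum_p b^l_{i,t+1}(\theta_p)=1$, so normalization can only \emph{increase} beliefs; and (c) in case one, after discarding the $f$ lowest ABs on $\theta^*$ and since at most $f$ neighbors are bad, the surviving minimum is at least the smallest good neighbor's AB on $\theta^*$. Combining these, a one-step induction on $\beta_t:=\min_{g\in G}b^a_{g,t}(\theta^*)$ gives $\beta_{t+1}\geq\min\{\beta_t,\ell\}$, hence the claimed $\beta^*$.

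Next I would dispatch the \emph{source agents}. The key structural observation is that the minimum in both \eqref{equation:acutual belief update rule for case one} and \eqref{equation:acutual belief update rule for case two} always includes the agent's own freshly updated LB, so $\tilde b^a_{i,t+1}(\theta)\leq b^l_{i,t+1}(\theta)$ regardless of which case fires. Therefore, for any good $i\in S(\theta,\theta^*)$, Lemma~\ref{lemma:convergence} gives $b^l_{i,t}(\theta)\to 0$, hence $\tilde b^a_{i,t+1}(\theta)\to 0$, and dividing by the denominator (which is at least $\tilde b^a_{i,t+1}(\theta^*)\geq\beta^*$) yields $b^a_{i,t}(\theta)\to 0$. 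This handles every source agent cleanly and with no case analysis.

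Finally I would treat the \emph{non-source agents}, which is where I expect the real difficulty. For $i\notin S(\theta,\theta^*)$, the contrapositive of hypothesis~(2) forces case one to occur infinitely often for $\theta$ at $i$. At each such time, $|S(\theta,\theta^*)\cap\mathcal{N}_{i,t+1}|\geq 2f+1$ together with the trimming of the $f$ lowest leaves at least $f+1$ source agents, of which at least one, say $j^*$, is good; the minimum then caps $\tilde b^a_{i,t+1}(\theta)$ by $b^a_{j^*,t}(\theta)$, which tends to $0$ by the previous paragraph. Thus along the infinite subsequence of case-one times, $b^a_{i,t}(\theta)\to 0$. \textbf{The main obstacle is controlling the behavior between these times.} In case two the update $\tilde b^a_{i,t+1}(\theta)=\min\{b^a_{i,t}(\theta),b^l_{i,t+1}(\theta)\}$ followed by normalization (division by a factor $\leq 1$) can \emph{inflate} $b^a_{i,t}(\theta)$ back toward the agent's own uninformative limiting LB value $b^l_{i,\infty}(\theta)>0$, so the full sequence need not follow the subsequence for free. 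The crux is therefore a quantitative argument showing that the repeated case-one corrections, whose magnitudes vanish, dominate this case-two drift; I would make this precise by bounding the per-step drift factor of $b^a_{i,t}(\theta)/b^a_{i,t}(\theta^*)$ against the decay of the reset values supplied by $j^*$, and pushing the bound through a $\limsup$ over the case-one times. Once $b^a_{i,t}(\theta)\to 0$ is secured for every $\theta\neq\theta^*$ and every good $i$, the reduction in the first paragraph delivers $b^a_{i,t}(\theta^*)\to 1$ almost surely.
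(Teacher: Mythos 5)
Your first two steps are sound and essentially coincide with the paper's own proof: your uniform lower bound $\beta^*$ on $b^a_{i,t}(\theta^*)$ is the paper's $\delta$ in \eqref{equation:delta1} (obtained there from Lemma~\ref{lemma:nonzero of true hypothesis} plus the same trimming argument), and your observation that $\tilde b^a_{i,t+1}(\theta)\le b^l_{i,t+1}(\theta)$ in both cases, which disposes of all good source agents, is exactly the paper's treatment culminating in \eqref{proof:epsilon1}. The genuine gap is precisely the part you flag and then leave as a sketch: the behaviour of a good non-source agent \emph{between} case-one instants. Moreover, the repair you sketch --- that the vanishing case-one corrections ``dominate'' the case-two drift --- cannot be pushed through under hypotheses (1)--(2) alone, because the drift rate per case-two step is bounded away from $1$ while condition (2) places no bound whatsoever on the gaps between case-one instants. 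Concretely, in case two the normalization \eqref{equation:norm} divides by $\sum_p\min\{b^a_{i,t}(\theta_p),b^l_{i,t+1}(\theta_p)\}$, which is strictly less than $1$ whenever $b^a_{i,t}(\theta^*)>b^l_{i,t+1}(\theta^*)$. For a non-source agent with two hypotheses and $b^l_{i,t}\equiv(1/2,1/2)$, starting from $b^a_{i,t}=(1-\eta,\eta)$, one case-two step gives $b^a_{i,t+1}(\theta)=\eta/(1/2+\eta)\approx 2\eta$, and iterating drives $b^a_{i,t}(\theta)$ geometrically back up toward the normalized LB value $1/2$. Hence a correction of size $\eta$ delivered at a case-one instant is erased after $O(\log(1/\eta))$ case-two steps; a $\limsup$ taken along the case-one subsequence controls only that subsequence, not the full sequence, and no quantitative comparison of the kind you describe can succeed unless the inter-correction gaps are bounded (or grow slower than the recovery time), which conditions (1)--(2) do not guarantee.

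You should also know that the paper's proof is thinnest at exactly this point, so your diagnosis is sharp. The paper asserts in \eqref{proof:case one second time}--\eqref{proof:case one until 2} that the case-two update preserves $b^a_{i,t}(\theta)<\epsilon$ between consecutive case-one instants, justified by \eqref{equation:acutual belief update rule for case two}, i.e., by $\tilde b^a_{i,t+1}(\theta)=\min\{b^a_{i,t}(\theta),b^l_{i,t+1}(\theta)\}\le b^a_{i,t}(\theta)$. But that inequality concerns the \emph{un-normalized} belief and ignores the same inflation factor $1/\sum_p\tilde b^a_{i,t+1}(\theta_p)$ exhibited above; the paper's $\epsilon^3\!\to\!\epsilon^2\!\to\!\epsilon$ bookkeeping absorbs one normalization at the source agents and one at the case-one step, but not the compounding inflation across an unbounded run of case-two steps. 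So your proposal, as written, does not prove the theorem --- the step you correctly identify as the crux is missing --- and closing it would require either strengthening hypothesis (2) to bound the gaps between case-one occurrences or finding a monotone quantity that survives the normalization; neither appears in your proposal (nor, for that matter, in the paper).
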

\begin{proof}
\added{Please find the proof in the appendix.}
\end{proof}
\begin{rem}
\added{
Intuitively, the second condition requires that, for a hypothesis $\theta$, if agent $i$ cannot distinguish between $\theta$ and $\theta'$ for every $\theta'\neq\theta$ with the help from its neighbors, then it  must be able to do so by itself. In the extreme case, if one agent has
no neighbors on its path, then it must be able to distinguish any hypothesis pair to converge to true hypothesis
by itself. This condition asks for just \emph{enough} level of interactions among the agent in the sense that an agent only needs to communication to other agents if that agent cannot distinguish between two hypotheses while other agents can. And enforcing such requirement is not prohibitively hard since there is no hard limit on the communication interval, as long as it happens infinitely often. Such a condition can be enforced by heuristics (like the ones that we used in our experiments), or theoretically and systematically guaranteed using formal methods like reactive synthesis \cite{bharadwaj2018synthesis}.}
\end{rem}


\section{{Learning rule with asynchronous updates}}\label{section:Learning rule with asynchronous update}
\added{While we prove that the learning rule proposed in Section \ref{section:learning rules}  converges almost surely, the algorithm  requires that  case one of AB update in ADHT must occur infinitely often for a hypothesis $\theta$ if there exists another hypothesis $\theta'\neq\theta$ such that $i\notin S(\theta,\theta')$, i.e., agent $i$ cannot distinguish $\theta$ and $\theta'$ on its own. However, to enter  case one of AB update, the algorithm requires $|S(\theta,\theta')\cap\mathcal{N}_{i,t}|\geq 2f+1$ for all $\theta'\neq\theta$, which implies that the number of neighbors that are source agents for $\theta$ and $\theta'$ must be at least $2f+1$ for all $\theta'\neq\theta$  at a \emph{single} time instant.} Such a requirement may be conservative in some cases, which may make the convergence slow, since case one may rarely happen. Therefore, in this section, we discuss how to relax such a condition while still guaranteeing convergence.

\begin{figure}[!t]
		\removelatexerror
		\begin{algorithm}[H]
			\SetKwData{Left}{left}\SetKwData{This}{this}\SetKwData{Up}{up}
			\SetKwFunction{Union}{Union}\SetKwFunction{FindCompress}{FindCompress}
			\SetKwInOut{Input}{input}\SetKwInOut{Output}{output}
			\Input{Agent $i$, its location $q_{i,t+1}$, neighbor set $\mathcal{N}_{i,t+1}$, and observation $s_{i,t+1}$.}
			\BlankLine
			\nl     \For{$\theta\in\Theta$}{
			\nl \added{Compute the new LB as in \eqref{equation:LB update rule}}\label{algorithm:asynchronous LB update }\;\Comment{\added{LB update with Bayesian rule}}\;
			\nl     \eIf{ABU($i,\theta,\mathcal{N}_{i,t+1}$)==True\label{algorithm:asynchronous AB update }}{
                                        \nl Remove $f$  agents with the $f$ lowest beliefs in $\{b^a_{j}(\theta)|j\in  \mathcal{N}^{\theta}_{i}\}$ and save the   rest of agents to $\tilde{\mathcal{N}}^\theta_{i}$\label{algorithm:asynchronous  remove beliefs}\;
                                        \nl \added{Compute the new AB}
                                            \begin{equation}\label{equation:asynchronous acutual belief update rule for case one}
                                    \tilde{b}^a_{i,t+1}(\theta) = \min\{\{b^a_{j}(\theta)\}_{j\in \tilde{\mathcal{N}}^\theta_{i}},b^l_{i,t+1}(\theta)\}.
                            \end{equation}	\label{algorithm: asynchronous AB update case one}	\Comment{\added{Case one for AB update}}\;}	
			                {\nl Compute $\tilde{b}^a_{i,t+1}(\theta)$ as in \eqref{equation:acutual belief update rule for case two}\label{algorithm:asynchronous  AB update case two}\;\Comment{\added{Case two for AB update}}\;}

			        }
	    \nl \added{Normalize ABs following \eqref{equation:norm}} \label{algorithm:asynchronous normalization}\;

			\caption{Asynchronous Distributed Hypothesis Testing (ADHT)}\label{alg:Asynchronous Distributed Hypothesis Testing}			
		\end{algorithm}
	\end{figure}

We summarize the proposed algorithm in Algorithm \ref{alg:Asynchronous Distributed Hypothesis Testing} (ADHT). \added{The LB update is identical to that of SDHT. The main difference is case one for AB update and the condition to enter it from Line \ref{algorithm:asynchronous AB update } to Line \ref{algorithm: asynchronous AB update case one}.} In Line \ref{algorithm:asynchronous AB update }, we use Algorithm \ref{alg:ABU} such that, at any time $t$ and for any $\theta\in\Theta$, if it returns true, the update rule will choose case one. 

\begin{figure}[!t]
		\removelatexerror
		\begin{algorithm}[H]
			\SetKwData{Left}{left}\SetKwData{This}{this}\SetKwData{Up}{up}
			\SetKwFunction{Union}{Union}\SetKwFunction{FindCompress}{FindCompress}
			\SetKwInOut{Input}{input}\SetKwInOut{Output}{output}
			\Input{Agent $i$, $\theta\in\Theta$, and $\mathcal{N}_{i,t+1}$}
			\Output{\emph{True} or \emph{False}}
			\BlankLine
			\nl \If{$t==0$ or $ResetFlag==True$\label{algorithm:initialization}}{
			\nl     Reset($i,\theta$)\;
			    }
			\nl	\For {$j\in \mathcal{N}_{i,t+1}$ \label{algorithm:neighbor}}{
			        \nl	 $\mathcal{N}^{\theta}_{i}=\mathcal{N}^{\theta}_{i}\bigcup j$\label{algorithm:increase by one}\;
			        
			        \nl $b_j^a(\theta)=b_{j,t}^a(\theta)$\label{algorithm:acutual belief}\;
			        
    			}
            \nl	\For {$\theta'\in\Theta,\theta'\neq\theta$}{
            			        \nl \If{$|\mathcal{N}^{\theta}_{i}\cap S(\theta,\theta')|< 2f+1$\label{algorithm:number of neighbor}}{
            			                \nl \Return{False}\;
            			        }
            			        }
			\nl $ResetFlag=True$\label{algorithm:reset}\;
		\nl \Return {True}\;
			
			\caption{Asynchronous belief update (ABU)}\label{alg:ABU}			
		\end{algorithm}
	\end{figure}
\begin{figure}[!t]
		\removelatexerror
		\begin{algorithm}[H]
			\SetKwData{Left}{left}\SetKwData{This}{this}\SetKwData{Up}{up}
			\SetKwFunction{Union}{Union}\SetKwFunction{FindCompress}{FindCompress}
			\SetKwInOut{Input}{input}\SetKwInOut{Output}{output}
			\Input{Agent $i$ and $\theta\in\Theta$.}
			\BlankLine

			\nl     \For{$j\in\mathcal{N},j\neq i$}{
			\nl    $b_j^a(\theta)=0$ \;
			        }
		
			\nl    $\mathcal{N}^{\theta}_{i}=\{\}$ \;
			        
			\nl $ResetFlag = False$\;

			\caption{Reset}\label{alg:reset}			
		\end{algorithm}
	\end{figure}

In Algorithm \ref{alg:ABU}, for agent $i$, hypothesis $\theta\in\Theta$ and neighbor set $\mathcal{N}_{i,t}$, Line \ref{algorithm:initialization} performs the initialization when $t=0$ or reset when $ResetFlag$ is true. \added{From Algorithm \ref{alg:reset}, the initialization sets  $b^a_j(\theta)$ to $0$ for all $j\in\mathcal{N},j\neq i$, where  $b^a_j(\theta)$ denotes the most recent AB of $\theta$  received from agent $j$.} Furthermore, Algorithm \ref{alg:reset} initializes $\mathcal{N}^\theta_{i}$ to an empty set. \added{The set $\mathcal{N}^\theta_{i}$ denotes the set of agents $j\neq i$ from which ABs are received and $j\in S(\theta,\theta')$ from some $\theta'\neq\theta$.} Finally, $ResetFlag$ gets set to $False$ to indicate that a reset has just been performed. Then Algorithm \ref{alg:ABU} loops over all agent $i$'s neighbors $j\in\mathcal{N}_{i,t}$ as shown in Line \ref{algorithm:neighbor}. The set $\mathcal{N}^\theta_{i}$ will include $j$ as shown in Line \ref{algorithm:increase by one}. Then we assign $b^a_j(\theta)$ the value of $b^a_{j,t}(\theta)$ in Line \ref{algorithm:acutual belief}.

\added{After all the ABs from neighbors are saved, as shown in Line \ref{algorithm:number of neighbor} we check if $|\mathcal{N}^\theta_{i}\cap S(\theta,\theta')|< 2f+1$ for any $\theta'\neq\theta$, i.e., the number of source agents for $\theta$ and $\theta'$ that  also have been agent $i$'s neighbors by time $t+1$ after last reset is less than $2f+1$. If yes, Algorithm \ref{alg:ABU} returns false to indicate there are not enough ABs received for $\theta'$ from the agents that can tell $\theta$ and $\theta'$ apart.} As a result, agent $i$ must select case two for $\theta$.

\added{If we reach Line \ref{algorithm:reset} in Algorithm \ref{alg:ABU}, it indicates that agent $i$ can safely update its AB of $\theta$ with case one. Therefore, we can use all the saved ABs for \eqref{equation:asynchronous acutual belief update rule for case one}, making them obsolete, and thus we need a reset at the next time step. Then Algorithm \ref{alg:ABU} returns true. }

In case one, like SDHT, we remove the $f$ lowest beliefs collected so far and use the minimum rule. \added{Note that, different from SDHT, we use ABs  $\{b^a_{j}(\theta)|j\in\mathcal{N}^{\theta}_{i}\}$ that are collected over time instead of  ABs of the neighboring agents at time $t+1$. If Algorithm \ref{alg:ABU} returns false, we will enter case two in ADHT where the rest will follow the same procedure as in SDHT.}

\begin{rem}\label{remark:ADHT}
\added{A key difference from ADHT from SDHT is the relaxed conditions to enter case one in the update rule.} In Section \ref{section:learning rules}, for an agent $i$ and hypothesis $\theta\in\Theta$, to enter case one, at a given time instant $t$, $|S(\theta,\theta')\cap\mathcal{N}_{i,t}|\geq 2f+1$ must be satisfied for all $\theta'\neq\theta$. 
\added{That is, the number of neighbors of agent $i$ at time $t$ that can differentiate $\theta$ and $\theta'$ must be no less than $2f+1$ {\bf{at that time instant.}}} \added{In ADHT, instead, we simply keep collecting the ABs for a hypothesis $\theta$ from agent $j\in S(\theta,\theta')$ {\bf{across possibly multiple time instants}}, until  the number of collected ABs from agents $j\in S(\theta,\theta')$ is at least $2f+1$ for any $\theta'\neq\theta$. This condition is also when Algorithm \ref{alg:ABU} returns true. It means that agent $i$ has collected enough ABs from agents that are once its neighbors up to time $t$ after the most recent reset to safely update its AB using \eqref{equation:acutual belief update rule for case one}.}

One can readily observe that the conditions to enter case one in SDHT imply that in ADHT. \added{Thus, the conditions in ADHT to update the AB using neighbor information are less conservative and more likely to be satisfied.} Therefore, the convergence rate can potentially improve due to more frequent use of non-local information. 
\end{rem}

With the proposed ADHT algorithm, we have the following theorem to show that the new update rule  also converges almost surely. 
\begin{thm}\label{theorem:main result1}
If the following conditions hold:
\begin{enumerate}
    \item The initial beliefs $b^l_{i,0}(\theta)>0$ and $b^a_{i,0}(\theta)>0$ for any $\theta\in\Theta$ and any  agent $i$.
    \item For any agent $i$, if case one in ADHT happens only finitely often for a  hypothesis $\theta\in\Theta$, then $i\in S(\theta,\theta')$ for any $\theta'\neq\theta$.
\end{enumerate}
Then ADHT ensures that $b^a_{i,t}(\theta^*)\rightarrow 1$ almost surely for any good agent $i\in G$ as $t\rightarrow\infty$.
\end{thm}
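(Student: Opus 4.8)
The plan is to follow the overall template of the proof of Theorem \ref{theorem:main result}, reducing the claim to showing that, for every good agent $i\in G$ and every $\theta\neq\theta^*$, the normalized AB $b^a_{i,t}(\theta)\to 0$ almost surely; since $\sum_{\theta}b^a_{i,t}(\theta)=1$ over the finite set $\Theta$, this immediately yields $b^a_{i,t}(\theta^*)\to 1$. I would argue on the event $\hat\Omega$ of measure one from Remark \ref{remark:lemma 1}, on which Lemma \ref{lemma:convergence} holds simultaneously for all good agents, and I would keep Lemma \ref{lemma:nonzero of true hypothesis} in hand to guarantee that the denominators in the normalization step \eqref{equation:norm} never vanish.

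First I would fix $\theta\neq\theta^*$ and split on whether $i$ can distinguish $\theta$ from $\theta^*$ by itself. If $i\in S(\theta,\theta^*)$, then Lemma \ref{lemma:convergence} gives $b^l_{i,t}(\theta)\to 0$, and since both AB updates \eqref{equation:asynchronous acutual belief update rule for case one} and \eqref{equation:acutual belief update rule for case two} take a minimum that always includes the fresh LB term $b^l_{i,t+1}(\theta)$, I get $\tilde b^a_{i,t}(\theta)\le b^l_{i,t}(\theta)\to 0$. If instead $i\notin S(\theta,\theta^*)$, then condition (2), in contrapositive form with $\theta'=\theta^*$, forces case one to be triggered infinitely often for $\theta$; equivalently, Algorithm \ref{alg:ABU} returns \emph{True} for the pair $(i,\theta)$ at infinitely many time steps. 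This is exactly where I would rely on the relaxed asynchronous accept condition rather than the single-instant condition of SDHT (cf.\ Remark \ref{remark:ADHT}).

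The central estimate I would establish is the robust-aggregation bound at a case-one step. Whenever ABU accepts, the accumulated set satisfies $|\mathcal{N}^\theta_i\cap S(\theta,\theta^*)|\ge 2f+1$, so after discarding the $f$ lowest beliefs at least $f+1$ source agents for $(\theta,\theta^*)$ survive in $\tilde{\mathcal{N}}^\theta_i$; provided at most $f$ of the contributors are Byzantine, at least one surviving source agent $j$ is good, whence the minimum in \eqref{equation:asynchronous acutual belief update rule for case one} is bounded above by that good agent's collected belief on $\theta$. By the first part applied to $j$ (a good source agent), that belief tends to $0$, and I would combine this with the reset of Algorithm \ref{alg:reset}: since ABU resets after each accept, the beliefs used at the $k$-th case-one event are all collected within the window following the $(k-1)$-th event, and these windows march to infinity. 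Hence the collected good-source beliefs feeding the minimum are evaluated at times tending to infinity and therefore vanish, giving $\tilde b^a_{i,t}(\theta)\to 0$ along the infinite subsequence of case-one events, after which I would propagate and control the decay through the intervening case-two steps exactly as in the proof of Theorem \ref{theorem:main result}. To pass from $\tilde b^a_{i,t}(\theta)\to 0$ to the normalized $b^a_{i,t}(\theta)\to 0$, I would use a uniform positive lower bound on $\tilde b^a_{i,t}(\theta^*)$: each normalization constant is at most one, because every $\tilde b^a_{i,t}(\theta')\le b^l_{i,t}(\theta')$ and $\sum_{\theta'}b^l_{i,t}(\theta')=1$, so the normalization never shrinks the mass on $\theta^*$, and Lemma \ref{lemma:convergence} keeps $b^l_{i,t}(\theta^*)$ bounded away from $0$; together these give $b^a_{i,t}(\theta)\le \tilde b^a_{i,t}(\theta)/\tilde b^a_{i,t}(\theta^*)\to 0$.

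The step I expect to be the main obstacle is precisely the asynchronous bookkeeping that separates this proof from that of Theorem \ref{theorem:main result}. Because $\mathcal{N}^\theta_i$ now aggregates neighbors across many time instants, I must argue that the number of Byzantine contributors to the accumulated set is still at most $f$ (so that discarding the $f$ lowest beliefs remains sufficient), reconciling this with the per-instant bound in the model; and I must show that the collected beliefs $b^a_j(\theta)$ inside \eqref{equation:asynchronous acutual belief update rule for case one}, which are \emph{frozen} at the last instant each neighbor $j$ was heard from, are nonetheless sampled at times tending to infinity as the case-one events recede --- this is what the reset in Algorithm \ref{alg:ABU} buys. The delicate point is that a case-two step does \emph{not} suppress a hypothesis that $i$ cannot distinguish on its own (the minimum against the non-informative LB can even let $b^a_{i,t}(\theta)$ drift back up), so convergence is genuinely driven by the recurring case-one events, and the quantitative interplay between the decay rate of the good-source beliefs and the effect of normalization between consecutive case-one events is where the estimates will need the most care.
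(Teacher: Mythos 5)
Your skeleton coincides with the paper's own proof: restrict to the measure-one set $\hat\Omega$ of Remark \ref{remark:lemma 1}; use condition (2) to split each good agent into the self-sufficient case ($i\in S(\theta,\theta^*)$, killed by Lemma \ref{lemma:convergence} plus the fact that both \eqref{equation:asynchronous acutual belief update rule for case one} and \eqref{equation:acutual belief update rule for case two} are dominated by the fresh LB) versus the case where case one recurs infinitely often; exploit the reset in Algorithm \ref{alg:reset} so the collected beliefs are sampled at times marching to infinity; use the remove-$f$-lowest mechanism with the $2f+1$ threshold to extract a surviving good source agent; and control the normalization \eqref{equation:norm} through a positive lower bound on $\tilde b^a_{i,t}(\theta^*)$. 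These are exactly the paper's ingredients, so the route is not different --- but one load-bearing step is justified in a way that would fail.

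The gap is the uniform lower bound on $\tilde b^a_{i,t}(\theta^*)$. You derive it from two facts only: the normalization constant is at most one, and Lemma \ref{lemma:convergence} keeps $b^l_{i,t}(\theta^*)$ away from zero. That suffices for case-two steps, but at a case-one step $\tilde b^a_{i,t}(\theta^*)$ is a minimum over the \emph{collected neighbor ABs}, and a Byzantine contributor can submit an arbitrarily small value for $\theta^*$; LB positivity gives no protection there. Moreover, the ``at least one surviving good source agent'' argument you correctly use to kill $\theta\neq\theta^*$ only yields \emph{upper} bounds on the minimum; for the lower bound on $\theta^*$ you need that \emph{every} surviving collected value is at least $\delta$, which the paper proves by a pigeonhole contradiction (a survivor below $\delta$ must be bad, hence all $f$ removed values are below $\delta$, hence some removed agent is good with value $\ge\delta$ --- contradiction; see \eqref{proof2:case one}). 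That argument in turn needs the scaffolding occupying the whole first half of the paper's proof: the common time $\bar t_1$ after which every good agent's LB exceeds $\delta_1-\alpha$, the constant $\delta_2>0$ supplied by Lemma \ref{lemma:nonzero of true hypothesis}, the post-reset freshness guarantee \eqref{proof2:t'_i}, and a \emph{joint} induction over all good agents simultaneously (agent $i$'s case-one bound consumes agent $j$'s AB), culminating in \eqref{proof2:part 11}. Your per-agent, per-hypothesis framing omits this interdependence, so as written the normalization-control step collapses if even one bad agent ever enters a case-one update. Separately, your worry that case-two steps can let $b^a_{i,t}(\theta)$ drift upward through repeated renormalization is well-founded --- the paper's own induction (``following a similar reasoning that reaches \eqref{proof2:part 1}'') passes over this quickly --- so that caution is appropriate rather than a defect, but it too is flagged rather than resolved in your proposal.
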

\begin{proof}
We only consider paths $\omega\in\hat{\Omega}$ as defined in Remark \ref{remark:lemma 1}.
\added{The proof consists of two parts where we only consider any good agent $i\in G$. First, we prove that the AB over the true hypothesis $b^a_{i,t}(\theta^*)$ is lower-bounded. Then we show that  the AB over the rest of the hypotheses will become arbitrarily small. These two parts together are sufficient to prove that almost surely the $b^a_{i,t}(\theta^*)$ will be arbitrarily close to one. }

\added{For the first part that lower-bounds $b_i^a(\theta^*)$, we study two different scenarios. For the first scenario, if case one only happens finitely often to an agent $i\in G$ with respect to the true hypothesis $\theta^*$, then by the second condition of Theorem \ref{theorem:main result1}, we know that it must happen that $i\in S(\theta^*,\theta)$ for any $\theta\neq\theta^*$. In other words, agent $i$ can distinguish $\theta^*$ from any other hypothesis $\theta\neq\theta^*$. By Lemma \ref{lemma:convergence}, we know that agent $i$ can then correctly identify $\theta^*$ by only LB update \eqref{equation:acutual belief update rule for case two} that runs infinitely often, i.e., $\lim_{t\rightarrow\infty} b^a_{i,t}(\theta)\rightarrow 1$. Then the whole proof is done.}

The second scenario indicates that case one in ADHT happens infinitely often to an agent $i\in G$ and $\theta^*$. In this scenario, we first show that the AB over the true hypothesis $b^a_{i,t}(\theta^*)$ is lower-bounded. For each good agent $j\in G$, there exist a time $t_j$ and a constant $\alpha$ such that, for all $t\geq t_j$, we have $b^l_{j,t}(\theta^*)\geq\delta_1-\alpha$ where $\alpha<\delta_1$. We define 
\begin{equation}\label{proof:bar t 1}
\bar{t}_1 := \max_{j\in G}t_j. 
\end{equation}
We also define $\delta_2: = \min_{j\in G}b^a_{j,\bar{t}_1}(\theta^*)$\footnote{\added{Intuitively, $\bar{t}_1 $ indicates a time instant since which $b^l_{j,t}(\theta^*)$ is bounded below for any good agent $j\in G$. And $\delta_2$ refers to the minimum AB over the true hypothesis $\theta^*$ for any good agent at that time instant $\bar{t}_1 $.}}. By Lemma \ref{lemma:nonzero of true hypothesis}, we know $\delta_2 >0$. We further define 
\begin{equation}\label{equation:delta}
    \delta:=\min\{\delta_1-\alpha,\delta_2\}. 
\end{equation}

\noindent \added{Since case one happens infinitely often, for agent $i$, there must exist a time $t'_i\geq\bar{t}_1$ such that Algorithm \ref{alg:ABU} returns true. As a result, $ResetFlag$ is set to true and after AB update with \eqref{equation:asynchronous acutual belief update rule for case one} at $t'_i$, all the saved ABs are deleted at $t'_i+1$. Then after $t'_i$, we know that for any good agent $j\in G$ whose AB is collected by agent $i$, it is guaranteed that} 
\begin{equation}\label{proof2:t'_i}
    \added{b_{j,t}^a(\theta^*)\geq\delta,\forall t\geq t'_i.}
\end{equation}
\added{Again, since case one happens infinitely often, there must also exist a time $t''_i>t'_i$ such that case one happens. }
\added{Then we have the following holds.}
\begin{equation}\label{proof2:case one}
 \added{\tilde{b}^a_{i,t''_i}(\theta^*)=\min\{\{b^a_{j}(\theta^*)\}_{j\in \tilde{\mathcal{N}}^{\theta^*}_{i}},b^l_{i,t''_i}(\theta^*)\}\geq\delta. }
\end{equation}
\added{The inequality \eqref{proof2:case one} holds despite possibly altered ABs from $f$ bad agents because of the following. The second term $b^l_{i,t''_i}(\theta^*)$ is no less than $\delta$ by the definition of $\delta$. As to the first term $\{b^a_{j}(\theta^*)\}_{j\in \tilde{\mathcal{N}}^{\theta^*}_{i}}$, its minimum is also guaranteed to be no less than $\delta$. We show this by contradiction. If it does happen that $\min{\{b^a_{j}(\theta^*)\}_{j\in \tilde{\mathcal{N}}^{\theta^*}_{i}}}<\delta$, then from \eqref{proof2:t'_i} we know that this minimum value can only come from a bad agent. However, since there are at most $f$ bad agents and we only eliminate $f$ agents with the $f$ lowest beliefs from $\{b^a_{j}(\theta^*)|j\in  \mathcal{N}^{\theta^*}_{i}\}$ to get $\tilde{\mathcal{N}}_{i}^{\theta^*}$, then it means that there is at least one good agent $k$ whose AB is in the $f$ lowest beliefs and got eliminated. Then it implies that any AB in the remaining set  $\{b^a_{j}(\theta^*)\}_{j\in \tilde{\mathcal{N}}^{\theta^*}_{i}}$ must be no less than this good agent $k$'s AB which got eliminated. But from \eqref{proof2:t'_i}, we know that $b_{k}^a(\theta^*)\geq\delta$, this implies that any AB $\{b^a_{j}(\theta^*)\}_{j\in \tilde{\mathcal{N}}^{\theta^*}_{i}}$ must be no less than $\delta$, which reaches a contradiction.}
Then we perform the normalization as in \eqref{equation:norm} and can derive
\begin{equation}\label{proof2:normalization1}
\begin{split}
b^a_{i,t''_i}(\theta^*) &= 
  \frac{\tilde{b}^a_{i,t''_i}(\theta^*)}{\sum_{p=1}^m \tilde{b}^a_{i,t''_i}(\theta_p)}\geq \frac{\delta}{\sum_{p=1}^m \tilde{b}^a_{i,t''_i}(\theta_p)}\\
  &\geq\frac{\delta}{\sum_{p=1}^m b^l_{i,t''_i}(\theta_p)}=\delta.
\end{split}
\end{equation}
The last inequality in \eqref{proof2:normalization1} holds since by \eqref{equation:acutual belief update rule for case one}, we know that $\tilde{b}^a_{i,t''_i}(\theta)\leq b^l_{i,t''_i}(\theta)$ for any $\theta\in\Theta$.


\added{At $t''_i+1$, if case one happens again,  we know that $\tilde{b}^a_{i,t''_i+1}(\theta^*)\geq\delta$ by the same logic that reaches \eqref{proof2:case one}. Alternatively, if case two happens at $t''_i+1$, we use update rule \eqref{equation:acutual belief update rule for case two}, giving:} 
\begin{equation}\label{proof2:case two}
     \added{\tilde{b}^a_{i,t''_i}(\theta^*)=\min\{b^a_{i,t''_i+1}(\theta^*),b^l_{i,\bar{t}_1+1}(\theta^*)\}\geq\delta.}
\end{equation} 
Therefore, no matter which case occurs,  we have $\tilde{b}^a_{i,t''_i+1}(\theta^*)\geq\delta$ before normalization. By the same logic that reaches \eqref{proof2:normalization1}, we know that

$$
b^a_{i,t''_i+1}(\theta^*)\geq\delta.
$$
after normalization. Then by induction, we have
\begin{equation}\label{proof2:part 1}
    b_{i,t}^a(\theta^*)\geq\delta,\forall t\geq t''_i.
\end{equation}
\added{We further define $\tilde{t}_2 := \max_i t''_i.$} By definition, we have that 
\begin{equation}\label{proof2:part 11}
    \added{b_{i,t}^a(\theta^*)\geq\delta,\forall t\geq \tilde{t}_2,\forall i\in G.}
\end{equation}
We have just proved that $b_{i}^a(\theta^*)$ is lower bounded. \added{Now we move on to prove that the ABs over any $\theta\neq\theta^*$ are upper bounded.}  \added{Given a hypothesis $\theta\neq\theta^*$, for any agent $i\in S(\theta,\theta^*)$, we pick a small $0<\epsilon<1$ such that $\epsilon<\delta$ and define $t_i^\theta$ 
such that}
\begin{equation}\label{proof4: epsilon 3}
\added{b_{i,t}(\theta)^l(\theta)\leq\epsilon^3,\forall t\geq t_i^\theta.}
\end{equation}
We can always find such $\epsilon, \delta$, and $t_i^\theta$ that \eqref{proof4: epsilon 3} holds by definition of $S(\theta,\theta^*)$ and Lemma \ref{lemma:convergence}. Then we further define
$$
\added{\tilde{t}_3 := \max\{\tilde{t}_2,\max_{i\in S(\theta,\theta^*)}\{t_i^\theta\}\}.}
$$
\added{It immediately follows that 
$$
\tilde{b}^a_{i,t}(\theta)\leq b_{i,t}^l(\theta)\leq\epsilon^3\le\epsilon, \forall t\geq \tilde{t}_3+1, \forall i\in S(\theta,\theta^*)\cap G.
$$ before normalization no matter case one or case two occurs. }
Then we perform the normalization as in \eqref{equation:norm} and can derive 
\begin{equation}\label{proof2:epsilon}
\begin{split}
  b^a_{i,\tilde{t}_3+1}(\theta^*)&=\frac{\tilde{b}^a_{i,\tilde{t}_3+1}(\theta)}{\sum_{p=1}^m \tilde{b}^a_{i,\tilde{t}_3+1}(\theta_p)}\leq \frac{\epsilon^3}{\sum_{p=1}^m \tilde{b}^a_{i,\tilde{t}_3+1}(\theta_p)}\\
  &\leq\frac{\epsilon^3}{ b^a_{i,\tilde{t}_3+1}(\theta^*)}\leq\frac{\epsilon^3}{\delta}<\epsilon^2  .
\end{split}
\end{equation}
The last inequality is due to the fact $ \epsilon<\delta$. Therefore, by induction we have that
\begin{equation}\label{proof2:epsilon 2}
    \added{b_{i,t}^a(\theta)\leq\epsilon^2\leq\epsilon,\forall t\geq \tilde{t}_3+1,\forall i\in S(\theta,\theta^*)\cap G.}
\end{equation}
For any good agent $i\notin S(\theta,\theta^*)$, by condition 2 in Theorem \ref{theorem:main result1}, case one will happen infinitely often for $\theta$, \added{and there must exist two time instants $\tilde{t}_{i,1}^\theta$ and $\tilde{t}_{i,2}^\theta$ where case one happens for the first time and the second time after $t\geq\tilde{t}_3+1$.} Following  a similar reasoning that reaches \eqref{proof2:part 1}, we have that
\begin{equation}\label{proof2:part 2}
    \added{b_{i,t}^a(\theta)\leq \epsilon,\forall t\geq \tilde{t}_{i,2}^\theta.}
\end{equation}
Then we further define
\begin{equation}
    \added{\tilde{t}_4 := \max_\theta\max_{i\notin S(\theta,\theta^*)} \tilde{t}_{i,2}^\theta.}
\end{equation}
\added{By definition, we know that $\tilde{t}_4>\tilde{t}_3$,} then it holds that

\begin{equation}\label{proof2:part 22}
    \added{b_{i,t}^a(\theta)\leq\epsilon,\forall t\geq\tilde{t}_4,\forall i\in G, \forall \theta\neq\theta^*.}
\end{equation}
\added{Combining \eqref{proof2:part 11} and \eqref{proof2:part 22}, for any state observation path $\omega\in\hat{\Omega}$, \added{$\lim_{t\rightarrow\infty} b^a_{i,t}(\theta^*)= 1$}. Since the set $\hat{\Omega}$ has measure one as established in Remark \ref{remark:lemma 1}, $\lim_{t\rightarrow\infty} b^a_{i,t}(\theta^*)= 1$ almost surely. The proof of Theorem \ref{theorem:main result1} is thus complete.}
\end{proof}

\section{{Learning with average rule}}\label{section:average rule}
\added{Both the update rules in Section \ref{section:learning rules} and Section \ref{section:Learning rule with asynchronous update} use the minimum rule in case one  when updating the ABs using neighboring information, as shown in \eqref{equation:acutual belief update rule for case one} and \eqref{equation:asynchronous acutual belief update rule for case one}.} While we can prove the convergence, such an update algorithm may result in a large variance and waste the neighbors' information since it will only use information from 
\emph{one} of the neighboring agents for each hypothesis. \added{Therefore, in this section, we introduce an alternative approach for applying the neighbor's ABs in case one as shown below.}

We discuss the changes with respect to SDHT but these results naturally carry over to ADHT. Note that we only discuss the changes for case one, while case two remains the same. First, we change the condition to enter case one as in Line \ref{algorithm:AB update }  of Algorithm \ref{alg:Synchronized Distributed Hypothesis Testing} from \emph{for all $\theta'\neq\theta$, $|S(\theta,\theta')\cap\mathcal{N}_{i,t+1}|\geq 2f+1$} to  \emph{for all $\theta'\neq\theta$, $|S(\theta,\theta')\cap\mathcal{N}_{i,t+1}|\geq 2f+2$.}
Second, as shown in Line \ref{algorithm:remove beliefs} of Algorithm \ref{alg:Synchronized Distributed Hypothesis Testing}, at time $t+1$, notice that previously before updating an agent's AB for an particular hypothesis $\theta$, we first remove the $f$ lowest shared ABs. In this section, instead, we do the following. We sort the ABs with respect to an hypothesis $\theta$ shared by the neighbor set $\mathcal{N}_{i,t+1}$  and divide $\mathcal{N}_{i,t+1}$ into three pair-wise disjoint sets $\mathcal{L}_{i,t+1}^{\theta}$, $\mathcal{M}_{i,t+1}^{\theta}$, and $\mathcal{H}_{i,t+1}^{\theta}$, where
\begin{itemize}
    \item $\mathcal{L}_{i,t+1}^{\theta}$ is the set of neighboring agents that has the lowest $f$ ABs with respect to $\theta$;
    \item $\mathcal{H}_{i,t+1}^{\theta}$ is the smallest set of neighboring agents that has the highest ABs with respect to $\theta$ and 
    \begin{equation}\label{equation:average rule highest set}
        \mathcal{H}_{i,t+1}^{\theta}\cap S(\theta,\theta')\geq f+1, \forall\theta'\neq\theta;
    \end{equation}
    \item $\mathcal{M}_{i,t+1}^{\theta}:=\mathcal{N}_{i,t+1}\backslash(\mathcal{L}_{i,t+1}^{\theta}\bigcup\mathcal{H}_{i,t+1}^{\theta})$.
\end{itemize}
\begin{rem}
The set $\mathcal{L}_{i,t+1}^{\theta}$ is the same set that is eliminated in Line \ref{algorithm:remove beliefs} of Algorithm \ref{alg:Synchronized Distributed Hypothesis Testing}. The set $\mathcal{H}_{i,t+1}^{\theta}$ is more involved. The neighboring agents in this set has the highest ABs with respect to $\theta$, meaning that for any agents $j\in \mathcal{H}_{i,t+1}^{\theta}$ and $j'\notin\mathcal{H}_{i,t+1}^{\theta}$, it is guaranteed that $b^a_{j,t}(\theta)\geq b^a_{j',t}(\theta)$. Furthermore, note that $\mathcal{H}_{i,t+1}^{\theta}$ may not be unique if more than one neighboring agents have the same AB with respect to $\theta$ and this AB value is the lowest for any agents in $\mathcal{H}_{i,t+1}^{\theta}$. In this case, we may pick any combinations of these agents as long as \eqref{equation:average rule highest set} is satisfied (we will illustrate this in Example \ref{example:average rule} below). To find such $\mathcal{H}_{i,t+1}^{\theta}$, we can do a brute force search for agents with the highest ABs and increase the number of agents until \eqref{equation:average rule highest set} is satisfied for the first time, or a binary search if the number of agents and/or the number of hypotheses are large.
\end{rem}
\begin{rem}
Another question one may ask is that can $\mathcal{M}_{i,t+1}^{\theta}$  become empty, since the cardinality of $\mathcal{H}_{i,t+1}^{\theta}$ is only lower bounded by $f+1$ but may not be a fixed number. We show that this is not possible by contradiction. Since $\mathcal{H}_{i,t+1}^{\theta}$ is the smallest set  that satisfies \eqref{equation:average rule highest set}, there must exist at least one hypothesis $\theta'\neq\theta$ such that $\mathcal{H}_{i,t+1}^{\theta}\cap S(\theta,\theta')= f+1$. Then for this particular $\theta'$, if $\mathcal{M}_{i,t+1}^{\theta}$ is indeed empty, it will hold that $\mathcal{N}_{i,t+1}\cap S(\theta,\theta')\leq 2f+1$ since  $\mathcal{L}_{i,t+1}^{\theta}\cap S(\theta,\theta')\leq f$. This will contradict the condition to enter case one in Line \ref{algorithm:AB update }  of Algorithm \ref{alg:Synchronized Distributed Hypothesis Testing} with $2f+2$ instead of $2f+1$.
\end{rem}
\begin{example}\label{example:average rule}
We use this example to illustrate how $\mathcal{L}_{i,t+1}^{\theta}$, $\mathcal{M}_{i,t+1}^{\theta}$, and $\mathcal{H}_{i,t+1}^{\theta}$ are determined for SDHT. Suppose $f=1,\Theta=\{\theta,\theta'\}, N=10, i=0$, and $\mathcal{N}_{0,t+1}=\{0,1,2,3,4,5\}$, meaning that there are $10$ agents in total and at $t+1$, there are $6$ agents within the communication range of agent $0$ (note that agent $0$ is a neighbor of itself). We focus on $\theta$, where $S(\theta,\theta')=\{1,2,3,4\}$, therefore it follows that
$$S(\theta,\theta')\cap\mathcal{N}_{0,t+1}=\{1,2,3,4\}$$ 
and 
$$|S(\theta,\theta')\cap\mathcal{N}_{0,t+1}|=4\geq 2f+2=4.$$
As a result, the condition to enter case one for average rule is satisfied. The shared ABs with respect to $\theta$ are
$
b^a_{0,t}(\theta) = 0.25, b^a_{1,t}(\theta) = 0.22,b^a_{2,t}(\theta) = 0.35,b^a_{3,t}(\theta) = 0.35,b^a_{4,t}(\theta) = 0.35, b^a_{5,t}(\theta) = 0.37.
$
Then we sort ABs and $\mathcal{L}_{i,t+1}^{\theta}=\{1\}$ since agent $1$ has the lowest AB over $\theta$ and we only need $f=1$ agent in $\mathcal{L}_{i,t+1}^{\theta}$. For $\mathcal{H}_{i,t+1}^{\theta}$, we know that agent $5$ will be included since it has the hightest AB and any two of agents $\{2,3,4\}$ can be included since $b^a_{2,t}(\theta) = b^a_{3,t}(\theta)=b^a_{4,t}(\theta)=0.35$ and all of them belong to $S(\theta,\theta')$. Therefore  $\mathcal{H}_{i,t+1}^{\theta}$ is not unique and we can arbitrarily pick any combination of agents $\{2,3,4\}$ in this particular example. Suppose we pick  $\mathcal{H}_{i,t+1}^{\theta}=\{3,4,5\}$, then it immediately follows that  $\mathcal{M}_{i,t+1}=\{0,2\}$.
\end{example}
\begin{lem}\label{lemma:lower and upper bounded beliefs}
In SDHT (Algorithm \ref{alg:Synchronized Distributed Hypothesis Testing}), if 
\begin{itemize}
    \item replace the condition to enter case one (in Line \ref{algorithm:AB update }  of Algorithm \ref{alg:Synchronized Distributed Hypothesis Testing}) to for all $\theta'\neq\theta$, $|S(\theta,\theta')\cap\mathcal{N}_{i,t+1}|\geq 2f+2$, and
    \item replace Line 4 of Algorithm \ref{alg:Synchronized Distributed Hypothesis Testing} for case one with the procedure to get $\mathcal{M}_{i,t+1}^{\theta}$ instead of $\mathcal{N}_{i,t+1}^{\theta}$,
\end{itemize}  
then for any agent $j\in \mathcal{M}_{i,t+1}^{\theta}$ and $\theta'\neq\theta$, there exist  neighboring agents $j'\in\mathcal{N}_{i,t+1}$ and $j''\in\mathcal{N}_{i,t+1}\cap S(\theta,\theta')$ that both are good and
\begin{equation}\label{equation:convex hull}
    b^a_{j',t}(\theta)\leq b^a_{j,t}(\theta)\leq b^a_{j'',t}(\theta).
\end{equation}
\end{lem}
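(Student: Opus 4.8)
The plan is to prove the two inequalities in \eqref{equation:convex hull} separately, and in each case the engine is a single pigeonhole observation built on the standing assumption that a good agent has at most $f$ bad neighbors at any time: \emph{any} subset of $\mathcal{N}_{i,t+1}$ of cardinality $f+1$ must contain at least one good agent. The construction of the three sets $\mathcal{L}_{i,t+1}^{\theta}$, $\mathcal{M}_{i,t+1}^{\theta}$, and $\mathcal{H}_{i,t+1}^{\theta}$ is precisely what guarantees that enough good agents remain available on each side of the target agent $j$.

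First I would record the two ordering facts that follow directly from sorting the ABs on $\theta$. Since $\mathcal{L}_{i,t+1}^{\theta}$ holds the $f$ agents with the lowest ABs and $j\in\mathcal{M}_{i,t+1}^{\theta}$, every $\ell\in\mathcal{L}_{i,t+1}^{\theta}$ satisfies $b^a_{\ell,t}(\theta)\leq b^a_{j,t}(\theta)$; and since $\mathcal{H}_{i,t+1}^{\theta}$ holds the agents with the highest ABs (so that any $h\in\mathcal{H}_{i,t+1}^{\theta}$ dominates any agent outside $\mathcal{H}_{i,t+1}^{\theta}$, as noted in the preceding remark), every $h\in\mathcal{H}_{i,t+1}^{\theta}$ satisfies $b^a_{h,t}(\theta)\geq b^a_{j,t}(\theta)$. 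For the lower bound I would then consider the set $\mathcal{L}_{i,t+1}^{\theta}\cup\{j\}$: because the partition is pairwise disjoint, $j\notin\mathcal{L}_{i,t+1}^{\theta}$, so this set contains exactly $f+1$ distinct agents, all with AB on $\theta$ at most $b^a_{j,t}(\theta)$. By the pigeonhole observation at least one of them, call it $j'$, is good, which yields $b^a_{j',t}(\theta)\leq b^a_{j,t}(\theta)$ with $j'$ good (in the degenerate case that $j$ is itself good one may simply take $j'=j$).

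For the upper bound I would fix an arbitrary $\theta'\neq\theta$ and invoke the defining property \eqref{equation:average rule highest set} of $\mathcal{H}_{i,t+1}^{\theta}$, namely $|\mathcal{H}_{i,t+1}^{\theta}\cap S(\theta,\theta')|\geq f+1$. Every agent in this intersection is a source agent for $(\theta,\theta')$ and, by the ordering fact above, has AB on $\theta$ at least $b^a_{j,t}(\theta)$; among these $f+1$ agents the pigeonhole argument again supplies a good one, call it $j''$. Then $j''\in\mathcal{N}_{i,t+1}\cap S(\theta,\theta')$ is good and $b^a_{j'',t}(\theta)\geq b^a_{j,t}(\theta)$, which together with the lower bound establishes \eqref{equation:convex hull}.

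The argument is short and there is no serious obstacle; the only real care is bookkeeping. One must use disjointness of the partition to be certain $\mathcal{L}_{i,t+1}^{\theta}\cup\{j\}$ truly has $f+1$ members, and one must ensure that ties in the AB values do not corrupt the ordering facts — this is already absorbed into the construction, since $\mathcal{H}_{i,t+1}^{\theta}$ is defined to dominate its complement even when AB values coincide. The conceptual point worth stating explicitly in the proof is that the strengthened threshold $2f+2$ together with the requirement $|\mathcal{H}_{i,t+1}^{\theta}\cap S(\theta,\theta')|\geq f+1$ are exactly calibrated so that a \emph{good} source agent always survives on the high side; this is the property that \eqref{equation:convex hull} later lets us exploit when the minimum rule is replaced by an average over $\mathcal{M}_{i,t+1}^{\theta}$.
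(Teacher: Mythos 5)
Your proposal is correct and follows essentially the same argument as the paper: the upper bound via a good agent guaranteed by $|\mathcal{H}_{i,t+1}^{\theta}\cap S(\theta,\theta')|\geq f+1$ and the at-most-$f$ bad agents, and the lower bound via the existence of a good agent on the low side. The only cosmetic difference is in the lower bound, where you apply a single pigeonhole argument to $\mathcal{L}_{i,t+1}^{\theta}\cup\{j\}$ (an $(f+1)$-element set) rather than the paper's explicit case split on whether $j$ is good or bad; the two are logically equivalent.
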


\begin{proof}
\added{We first prove $b^a_{j',t}(\theta)\leq b^a_{j,t}$. If $j\in \mathcal{M}_{i,t+1}^{\theta}$ is a good agent, we can set $j=j'$ and $b^a_{j',t}(\theta)\leq b^a_{j,t}$ trivially holds. If $j\in \mathcal{M}_{i,t+1}^{\theta}$ is a bad agent, since there are at most $f$ bad agents in $\mathcal{N}_{i,t+1}$, it implies that  there exists at least one good agent $j'\in\mathcal{L}_{i,t+1}^{\theta}$. Otherwise, since $|\mathcal{L}_{i,t+1}^{\theta}|=f$, if there is no good agent in $\mathcal{L}_{i,t+1}^{\theta}$, it implies that  $\mathcal{L}_{i,t+1}^{\theta}$ contains all bad agents and thus there cannot exist a bad agent in $\mathcal{M}_{i,t+1}^{\theta}$ which leads to a contradiction. Therefore, $b^a_{j',t}(\theta)\leq b^a_{j,t}$ is proved.}

\added{Now we prove $b^a_{j,t}(\theta)\leq b^a_{j'',t}(\theta)$ for $j''\in\mathcal{N}_{i,t+1}\cap S(\theta,\theta')$. This is an immediate result from the fact that there are at most $f$ bad agents and there are at least $f+1$ agents that belong to  
$\mathcal{N}_{i,t+1}\cap S(\theta,\theta')\cap\mathcal{H}_{i,t+1}^{\theta}$ according to \eqref{equation:average rule highest set}. Therefore, there must exist at least one good agent in $\mathcal{N}_{i,t+1}\cap S(\theta,\theta')\cap\mathcal{H}_{i,t+1}^{\theta} $. Denote this good agent as $j''$, we know that $b^a_{j,t}(\theta)\leq b^a_{j'',t}(\theta)$ holds by the definition of $\mathcal{H}_{i,t+1}^{\theta}$.
Combine $b^a_{j',t}(\theta)\leq b^a_{j,t}$ and $b^a_{j,t}(\theta)\leq b^a_{j'',t}(\theta)$, we know that \eqref{equation:convex hull} holds. }
\end{proof}

For ADHT, we can have a lemma below that is a counterpart of Lemma \ref{lemma:lower and upper bounded beliefs}. 
\begin{lem}\label{lemma:lower and upper bounded beliefs ADHT}
 In ADHT (Algorithm \ref{alg:Asynchronous Distributed Hypothesis Testing}), if
\begin{itemize}
    \item replace the condition to check whether to enter case one (in Line \ref{algorithm:number of neighbor}  of Algorithm \ref{alg:ABU}) to for all $\theta'\neq\theta$, $|\mathcal{N}^{\theta}_{i}\cap S(\theta,\theta')|< 2f+2$, and
    \item replace Line 4 for case one in Algorithm \ref{alg:Asynchronous Distributed Hypothesis Testing} with the procedure to  get $\mathcal{M}_{i,t+1}^{\theta}$ instead of $\tilde{\mathcal{N}}^\theta_{i}$,
\end{itemize}
then for any $j\in \mathcal{M}_{i,t+1}^{\theta}$ and $\theta'\neq\theta$, there exist  agents $j'\in\mathcal{N}_{i}^\theta$ and $j''\in\mathcal{N}_{i}^\theta\cap S(\theta,\theta')$ that both are good and
\begin{equation}\label{equation:convex hull ADHT}
    b^a_{j'}(\theta)\leq b^a_{j}(\theta)\leq b^a_{j''}(\theta).
\end{equation}
\end{lem}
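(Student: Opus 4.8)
The plan is to mirror the proof of Lemma~\ref{lemma:lower and upper bounded beliefs} almost verbatim, replacing the single‑time neighbor set $\mathcal{N}_{i,t+1}$ by the accumulated set $\mathcal{N}_i^\theta$ maintained by Algorithm~\ref{alg:ABU} and the instantaneous ABs $b^a_{j,t}(\theta)$ by the stored values $b^a_j(\theta)$. As in the synchronous case, I would split the two inequalities in \eqref{equation:convex hull ADHT} and prove each separately, using only (i) the cardinalities of the partition $\mathcal{L}_{i,t+1}^\theta,\mathcal{M}_{i,t+1}^\theta,\mathcal{H}_{i,t+1}^\theta$ of $\mathcal{N}_i^\theta$ and (ii) the bound that at most $f$ of the agents contributing to $\mathcal{N}_i^\theta$ are bad.

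First I would establish the lower inequality $b^a_{j'}(\theta)\leq b^a_j(\theta)$. If $j\in\mathcal{M}_{i,t+1}^\theta$ is good, then choosing $j'=j$ makes the inequality trivial. If $j$ is bad, I would argue by contradiction: since $|\mathcal{L}_{i,t+1}^\theta|=f$ and at most $f$ agents in $\mathcal{N}_i^\theta$ are bad, were $\mathcal{L}_{i,t+1}^\theta$ to contain no good agent it would contain all the bad agents, leaving $\mathcal{M}_{i,t+1}^\theta$ with only good agents and contradicting that $j$ is bad. Hence $\mathcal{L}_{i,t+1}^\theta$ contains a good agent $j'$, and because $\mathcal{L}_{i,t+1}^\theta$ holds the lowest $f$ stored ABs on $\theta$, we get $b^a_{j'}(\theta)\leq b^a_j(\theta)$.

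Next I would establish the upper inequality $b^a_j(\theta)\leq b^a_{j''}(\theta)$. The modified entry condition (the analogue of \eqref{equation:average rule highest set}, now imposed on the accumulated set) guarantees $|\mathcal{H}_{i,t+1}^\theta\cap S(\theta,\theta')|\geq f+1$ for every $\theta'\neq\theta$; since at most $f$ of these are bad, at least one good agent $j''\in\mathcal{N}_i^\theta\cap S(\theta,\theta')$ lies in $\mathcal{H}_{i,t+1}^\theta$. By the defining property of $\mathcal{H}_{i,t+1}^\theta$ (every agent outside it has an AB on $\theta$ no larger than every agent inside it) together with $j\notin\mathcal{H}_{i,t+1}^\theta$, it follows that $b^a_j(\theta)\leq b^a_{j''}(\theta)$. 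Chaining the two inequalities then yields \eqref{equation:convex hull ADHT}.

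The one point requiring care, and the only substantive departure from the synchronous argument, is the \emph{at most $f$ bad agents} bound. In SDHT it is immediate from the standing assumption that every good agent has at most $f$ bad neighbors at a single instant, whereas here $\mathcal{N}_i^\theta$ aggregates ABs received across several time steps since the last reset. I would therefore make explicit that the relevant invariant is that at most $f$ distinct bad agents ever contribute a stored AB to $\mathcal{N}_i^\theta$ between resets—the same invariant used implicitly in the proof of Theorem~\ref{theorem:main result1}. Once this bookkeeping is stated, the two counting arguments above carry over without change, so I expect no genuine difficulty beyond correctly transcribing the synchronous reasoning to the accumulated set.
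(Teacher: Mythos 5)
Your proof is correct and matches the paper's intent exactly: the paper omits this proof entirely, stating only that it is ``similar to the proof of Lemma \ref{lemma:lower and upper bounded beliefs},'' and your transcription of that argument to the accumulated set $\mathcal{N}_i^\theta$ (good-or-contradiction for the lower bound via $\mathcal{L}_{i,t+1}^{\theta}$, pigeonhole on $\mathcal{H}_{i,t+1}^{\theta}\cap S(\theta,\theta')$ for the upper bound) is precisely the intended argument. Your explicit statement of the invariant that at most $f$ distinct bad agents contribute stored ABs to $\mathcal{N}_i^\theta$ between resets is a worthwhile addition rather than a defect, since the paper's standing assumption only bounds the number of bad neighbors at each single time instant, and the accumulated-set version of that bound is left implicit both here and in the proof of Theorem \ref{theorem:main result1}.
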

\begin{proof}
We omit the proof here since it is similar to the proof of Lemma \ref{lemma:lower and upper bounded beliefs}.
\end{proof}

\added{We further define the average of beliefs in $\mathcal{M}^\theta_{i,t+1}$ as}
\begin{equation}\label{equation:average belief}
   \bar{b}^{a}_{i,t+1}(\theta) :=\frac{1}{|\mathcal{M}^\theta_{i,t+1}|} \sum_{j\in \mathcal{M}^\theta_{i,t+1}} b^a_{j,t}(\theta),
\end{equation}
\added{and instead of \eqref{equation:acutual belief update rule for case one}, we use the following  rule}
\begin{equation}\label{equation:acutual belief update rule average}
     \tilde{b}^a_{i,t+1}(\theta) = \min\{\bar{b}^{a}_{i,t+1}(\theta),b^l_{i,t+1}(\theta)\}.   
    \end{equation}
    
\added{For ADHT, we use the update rule similar to  \eqref{equation:average belief} and \eqref{equation:acutual belief update rule average} to replace \eqref{equation:asynchronous acutual belief update rule for case one} but the  sets $\mathcal{L}_{i,t+1}^{\theta}$, $\mathcal{M}_{i,t+1}^{\theta}$, and $\mathcal{H}_{i,t+1}^{\theta}$ are found from $\mathcal{N}^\theta_i$ instead}.
\begin{example}
Following Example \ref{example:average rule}, recall that $\mathcal{M}_{i,t+1}=\{0,2\}$ and $
b^a_{0,t}(\theta) = 0.25, b^a_{2,t}(\theta) = 0.35
$. Then by \eqref{equation:average belief}, 
$$
\bar{b}^{a}_{0,t+1}(\theta) = \frac{1}{2}(0.25+0.35) = 0.3.
$$
\end{example}
\added{Next, we show that the convergence is still guaranteed using the average update rule by the following two theorems. }
\begin{thm}\label{thm:3}
\added{If the same changes are made as in Lemma \ref{lemma:lower and upper bounded beliefs} to SDHT (Algorithm \ref{alg:Synchronized Distributed Hypothesis Testing}), the same conditions in Theorem \ref{theorem:main result} hold and we use \eqref{equation:acutual belief update rule average} for case one in SDHT where $\tilde{b}^{a}_{i,t}$ is from \eqref{equation:average belief}, $b^a_{i,t}(\theta^*)\rightarrow 1$ almost surely for any good agent $i$ as $t\rightarrow\infty$.}
\end{thm}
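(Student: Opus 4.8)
The plan is to follow the same two-part template as the proof of Theorem~\ref{theorem:main result1}, restricting throughout to the measure-one set $\hat{\Omega}$ of Remark~\ref{remark:lemma 1}, and to replace every appeal to the minimum-rule bound by the corresponding average-rule bound supplied by Lemma~\ref{lemma:lower and upper bounded beliefs}. Concretely, I would first lower-bound $b^a_{i,t}(\theta^*)$ and then upper-bound $b^a_{i,t}(\theta)$ for each $\theta\neq\theta^*$. Since the ABs sum to one after normalization, combining $b^a_{i,t}(\theta^*)\geq\delta$ with $b^a_{i,t}(\theta)\leq\epsilon$ for every $\theta\neq\theta^*$ forces $b^a_{i,t}(\theta^*)\geq 1-(m-1)\epsilon$, and letting $\epsilon\to 0$ yields $b^a_{i,t}(\theta^*)\to 1$ almost surely.

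For the lower bound I would reuse the constants $\bar{t}_1$, $\delta_2$ and $\delta=\min\{\delta_1-\alpha,\delta_2\}$ defined in the proof of Theorem~\ref{theorem:main result1}: Lemma~\ref{lemma:convergence} eventually bounds every good agent's LB on $\theta^*$ below by $\delta_1-\alpha$, and Lemma~\ref{lemma:nonzero of true hypothesis} guarantees $\delta_2>0$. The induction claim is that $b^a_{j,t}(\theta^*)\geq\delta$ for every good $j$ and all large $t$. The only new step is the case-one update: when averaging is used, Lemma~\ref{lemma:lower and upper bounded beliefs} (with $\theta=\theta^*$) certifies that every $j\in\mathcal{M}^{\theta^*}_{i,t+1}$, \emph{including the bad agents in} $\mathcal{M}^{\theta^*}_{i,t+1}$, is sandwiched below by a \emph{good} agent $j'$, so $b^a_{j,t}(\theta^*)\geq b^a_{j',t}(\theta^*)\geq\delta$; hence the average $\bar{b}^a_{i,t+1}(\theta^*)$ of \eqref{equation:average belief} is at least $\delta$, giving $\tilde{b}^a_{i,t+1}(\theta^*)=\min\{\bar{b}^a_{i,t+1}(\theta^*),b^l_{i,t+1}(\theta^*)\}\geq\delta$. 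The normalization argument of \eqref{proof2:normalization1}, which uses only $\tilde{b}^a\leq b^l$, then preserves $b^a_{i,t+1}(\theta^*)\geq\delta$, and case two is handled exactly as before.

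For the upper bound, fix $\theta\neq\theta^*$ and $0<\epsilon<\delta$. For a good source agent $i\in S(\theta,\theta^*)$, Lemma~\ref{lemma:convergence} gives $b^l_{i,t}(\theta)\to 0$, so $\tilde{b}^a_{i,t+1}(\theta)\leq b^l_{i,t+1}(\theta)$ is eventually below $\epsilon^3$ and the normalization of \eqref{proof2:epsilon} yields $b^a_{i,t}(\theta)\leq\epsilon$. For a good non-source agent $i\notin S(\theta,\theta^*)$, condition~2 forces case one infinitely often; when it occurs I would invoke the upper sandwich of Lemma~\ref{lemma:lower and upper bounded beliefs} with $\theta'=\theta^*$, so each $j\in\mathcal{M}^\theta_{i,t+1}$ satisfies $b^a_{j,t}(\theta)\leq b^a_{j'',t}(\theta)$ for some good $j''\in S(\theta,\theta^*)$ whose AB on $\theta$ is already below $\epsilon$ by the previous case. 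Averaging preserves this, so $\tilde{b}^a_{i,t+1}(\theta)\leq\epsilon$ and normalization keeps it small. Synchronizing the relevant times via analogues of $\tilde{t}_3,\tilde{t}_4$ and inducting exactly as in \eqref{proof2:part 2}--\eqref{proof2:part 22} yields $b^a_{i,t}(\theta)\leq\epsilon$ eventually for every good $i$.

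The main obstacle, and the only genuine departure from Theorem~\ref{theorem:main result1}, is that $\mathcal{M}^\theta_{i,t+1}$ need not consist solely of good agents, so a Byzantine report could in principle corrupt the mean. Lemma~\ref{lemma:lower and upper bounded beliefs} is precisely the device that neutralizes this: it forces every value fed into the average, adversarial ones included, into the interval $[b^a_{j',t}(\theta),b^a_{j'',t}(\theta)]$ spanned by two \emph{honest} entries, so the $f$ adversaries can never drive the mean outside the range attained by good agents. The one calculation deserving care is checking that this sandwich meshes cleanly with the normalization step, preserving both the $\geq\delta$ and $\leq\epsilon^2$ estimates, but it is identical in form to \eqref{proof2:normalization1} and \eqref{proof2:epsilon}.
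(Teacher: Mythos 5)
Your proposal is correct and follows essentially the same route as the paper's proof: the same two-part decomposition over $\hat{\Omega}$ (lower-bounding $b^a_{i,t}(\theta^*)$ by $\delta=\min\{\delta_1-\alpha,\delta_2\}$ via Lemmas \ref{lemma:convergence} and \ref{lemma:nonzero of true hypothesis}, then upper-bounding $b^a_{i,t}(\theta)$ for $\theta\neq\theta^*$ with the $\epsilon^3/\epsilon^2/\epsilon$ normalization bookkeeping), with Lemma \ref{lemma:lower and upper bounded beliefs} supplying exactly the sandwich that neutralizes Byzantine entries inside $\mathcal{M}^{\theta}_{i,t+1}$ in case one. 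The only cosmetic difference is that the paper templates its argument on the proof of Theorem \ref{theorem:main result} (the synchronous proof, which avoids the asynchronous time bookkeeping of $\tilde{t}_3,\tilde{t}_4$), whereas you template on Theorem \ref{theorem:main result1}; the substance is identical.
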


The proof of Theorem \ref{thm:3} is similar to that of Theorem \ref{theorem:main result}. We explain the proof sketch here and the complete proof can be found in the appendix.

\emph{Proof sketch}: 
Like the proof of Theorem \ref{theorem:main result}, we prove the convergences in two parts. The first part shows that the AB over the true hypothesis for any good agent $i$ is lower-bounded from zero. The second part shows that the AB over any hypothesis other than the true hypothesis is upper-bounded by an arbitrarily small constant.

The main differences in the proof, when compared to the proof of Theorem \ref{theorem:main result}, are the following. For part one, to establish that the AB over the true hypothesis is lower-bounded from zero, for case one, instead of referring to the fact that the $\mathcal{N}^\theta_{i,t+1}$ contains at least one good agent whose AB over the true hypothesis is guaranteed to be nonzero from Lemma \ref{lemma:nonzero of true hypothesis}, we use  \eqref{equation:convex hull} in Lemma \ref{lemma:lower and upper bounded beliefs} together with Lemma \ref{lemma:nonzero of true hypothesis}. In other words, the AB over the true hypothesis $\theta^*$ for any agent $j$ in $\mathcal{M}_{i,t+1}^{\theta}$ is guaranteed to be bounded away from zero since $b^a_{j',t}(\theta^*)\leq b^a_{j,t}(\theta^*)$ where $j'\in\mathcal{N}_{i,t+1}$ is good and from Lemma \ref{lemma:nonzero of true hypothesis} we know that $b^a_{j',t}(\theta^*)>0$ when any $j'$ is good. Therefore, the average $\bar{b}^{a}_{i,t+1}(\theta)$ computed in \eqref{equation:average belief} is guaranteed to be lower-bounded from zero, and $\tilde{b}^{a}_{i,t+1}(\theta)$ in \eqref{equation:acutual belief update rule average} is also guaranteed to be lower-bounded from zero.

For part two, we need to establish that the AB over the true hypothesis is upper-bounded by an arbitrarily small constant.  For case one, we use  \eqref{equation:convex hull} in Lemma \ref{lemma:lower and upper bounded beliefs} together with Lemma \ref{lemma:convergence}, which differs from the approach Theorem \ref{theorem:main result} that relies on the fact that the $\mathcal{N}^\theta_{i,t+1}$ contains at least one good agent whose AB over hypotheses other than the true hypothesis is guaranteed to be upper-bounded from Lemma \ref{lemma:nonzero of true hypothesis}. For any hypothesis $\theta\neq\theta^*$, we know that there exists a good agent $j''\in\mathcal{N}_{i,t+1}\cap S(\theta,\theta')$ such that $b^a_{j,t}(\theta)\leq b^a_{j'',t}(\theta)$. Furthermore, it is guaranteed that  $b^a_{j'',t}(\theta)$ will be upper-bounded by an arbitrarily small constant from Remark \ref{remark:for lemma 1} for Lemma \ref{lemma:convergence}.

\begin{thm}\label{thm:4}
\added{If the same changes are made as in Lemma \ref{lemma:lower and upper bounded beliefs ADHT}, the same conditions in Theorem \ref{theorem:main result1} hold and we use \eqref{equation:acutual belief update rule average} for case one in ADHT where $\tilde{b}^{a}_{i,t}$ is from \eqref{equation:average belief}, $b^a_{i,t}(\theta^*)\rightarrow 1$ almost surely for any good agent $i$ as $t\rightarrow\infty$.}
\end{thm}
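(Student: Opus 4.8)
The plan is to follow the proof of Theorem~\ref{theorem:main result1} almost verbatim, substituting every place where that argument exploits the minimum rule (the removal of the $f$ lowest collected beliefs) with the sandwich bounds that Lemma~\ref{lemma:lower and upper bounded beliefs ADHT} supplies for the average rule. As there, I restrict attention to the measure-one set $\hat{\Omega}$ of Remark~\ref{remark:lemma 1}, fix a good agent $i\in G$, and split the argument into a lower-bound part for $b^a_{i,t}(\theta^*)$ and an upper-bound part for $b^a_{i,t}(\theta)$ with $\theta\neq\theta^*$.

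For the lower bound I would again distinguish the two scenarios of Theorem~\ref{theorem:main result1}. If case one fires only finitely often for $\theta^*$, condition~2 gives $i\in S(\theta^*,\theta)$ for every $\theta\neq\theta^*$, and Lemma~\ref{lemma:convergence} together with the fact that case two then runs infinitely often already forces $b^a_{i,t}(\theta^*)\to1$, so nothing new is needed. In the remaining scenario case one fires infinitely often; here I reuse the constants $\bar t_1$, $\delta_1$, $\alpha$, $\delta_2$ and $\delta$ exactly as in Theorem~\ref{theorem:main result1}, and exploit the reset mechanism of the ABU subroutine to pick a time $t'_i\ge\bar t_1$ at which a reset occurs, guaranteeing that every belief $b^a_{j}(\theta^*)$ subsequently collected from a good agent $j$ satisfies $b^a_{j}(\theta^*)\ge\delta$. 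The only substantive change enters at the update step: instead of arguing that the surviving set after removing the $f$ lowest values contains a good agent with belief at least $\delta$, I invoke \eqref{equation:convex hull ADHT} of Lemma~\ref{lemma:lower and upper bounded beliefs ADHT} with $\theta=\theta^*$. This yields, for every $j\in\mathcal{M}^{\theta^*}_{i,t+1}$, a good agent $j'$ with $b^a_{j'}(\theta^*)\le b^a_{j}(\theta^*)$; since $b^a_{j'}(\theta^*)\ge\delta$, every term entering the average \eqref{equation:average belief} is at least $\delta$, so $\bar b^a_{i,t+1}(\theta^*)\ge\delta$ and hence $\tilde b^a_{i,t+1}(\theta^*)\ge\delta$ by \eqref{equation:acutual belief update rule average}. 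The normalization bound and the induction over subsequent case-one/case-two steps then go through unchanged, establishing $b^a_{i,t}(\theta^*)\ge\delta$ for all $t$ past some $\tilde t_2$ and all $i\in G$.

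For the upper bound I would treat $\theta\neq\theta^*$ and split on whether $i\in S(\theta,\theta^*)$. When $i\in S(\theta,\theta^*)\cap G$, Lemma~\ref{lemma:convergence} drives $b^l_{i,t}(\theta)$ below $\epsilon^3$, and since \eqref{equation:acutual belief update rule average} keeps $\tilde b^a_{i,t+1}(\theta)\le b^l_{i,t+1}(\theta)$ in both cases, combining with the lower bound $b^a_{i,t}(\theta^*)\ge\delta$ and normalization gives $b^a_{i,t}(\theta)\le\epsilon$ exactly as in the minimum-rule proof. When $i\notin S(\theta,\theta^*)$, condition~2 makes case one fire infinitely often for $\theta$; here I again replace the min argument with \eqref{equation:convex hull ADHT}, now using the upper half of the sandwich: every $j\in\mathcal{M}^{\theta}_{i,t+1}$ admits a good agent $j''\in S(\theta,\theta^*)$ with $b^a_{j}(\theta)\le b^a_{j''}(\theta)$, and by Remark~\ref{remark:for lemma 1} the collected $b^a_{j''}(\theta)$ falls below the prescribed small constant once enough time has passed. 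Hence the average \eqref{equation:average belief} is small, and the induction drives $b^a_{i,t}(\theta)\le\epsilon$ after some $\tilde t_4$. Letting $\epsilon\downarrow0$ and combining the two parts yields $b^a_{i,t}(\theta^*)\to1$ on $\hat\Omega$, hence almost surely.

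The main obstacle I expect is the bookkeeping of the asynchronous accumulation rather than any genuinely new inequality: because the beliefs averaged in case one are collected across several time instants between resets, I must ensure that the good-agent bounds ($\ge\delta$ for $\theta^*$ and $\le\epsilon$ for $\theta$) hold uniformly over the entire collection window, which is precisely what choosing $t'_i$ at a reset (so that all stored values postdate it) secures. Once that is in place, Lemma~\ref{lemma:lower and upper bounded beliefs ADHT} does all the work of controlling the bad agents' contributions to the average, and the proof reduces to transplanting the two sandwich bounds into the already-established asynchronous induction of Theorem~\ref{theorem:main result1}.
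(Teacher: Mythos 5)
Your proposal matches the paper's own proof essentially step for step: the same restriction to $\hat{\Omega}$, the same two-part structure (lower-bounding $b^a_{i,t}(\theta^*)$ via the reset mechanism and the lower half of the sandwich \eqref{equation:convex hull ADHT}, then upper-bounding $b^a_{i,t}(\theta)$ for $\theta\neq\theta^*$ via the upper half together with Lemma \ref{lemma:convergence}), and the same normalization-plus-induction bookkeeping inherited from Theorem \ref{theorem:main result1}. Your closing observation about choosing $t'_i$ at a reset so that all collected good-agent beliefs postdate it is precisely the device the paper uses, so the proposal is correct and not materially different from the published argument.
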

The proof of Theorem \ref{thm:4} is similar to that of Theorem \ref{theorem:main result1} and Theorem \ref{thm:3}. We again explain the proof sketch here and the complete proof can be found in the appendix.

\added{\emph{Proof sketch}: 
Like the proof of Theorem \ref{theorem:main result}, we prove the convergences in two parts where we show that 1) the AB over the true hypothesis for any good agent $i$ is lower-bounded from zero and 2) the AB over any hypothesis $\theta\neq\theta^*$ is upper-bounded by an arbitrarily small constant. The main difference is similar to the difference between Theorem \ref{theorem:main result} and Theorem \ref{thm:3} where we make use of \eqref{equation:convex hull ADHT} in Lemma \ref{lemma:lower and upper bounded beliefs ADHT} to establish the two parts that we need to prove.}

\added{
By $b^a_{j'}(\theta^*)\leq b^a_{j}(\theta^*)$ from \eqref{equation:convex hull ADHT} for a good agent $j'$ and any $j\in\mathcal{M}_{i,t+1}^{\theta}$ we know that the average $\tilde{b}^{a}_{i,t}(\theta)$ computed in \eqref{equation:average belief} is guaranteed to be lower-bounded from zero, and $\tilde{b}^{a}_{i,t+1}(\theta)$ in \eqref{equation:acutual belief update rule average} is also guaranteed to be lower-bounded from zero. By $b^a_{j}(\theta)\leq b^a_{j''}(\theta)$ for a good agent $j''\in\mathcal{N}_{i}^\theta\cap S(\theta,\theta')$ and any $j\in\mathcal{M}_{i,t+1}^{\theta}$, we know that the average $\bar{b}^{a}_{i,t+1}(\theta)$ in \eqref{equation:average belief} is upper-bounded by an arbitrarily small constant which leads to the fact that $\tilde{b}^{a}_{i,t+1}(\theta)$ in \eqref{equation:acutual belief update rule average} is also upper-bounded by an arbitrarily small constants since $\tilde{b}^{a}_{i,t+1}(\theta)\leq \bar{b}^{a}_{i,t+1}(\theta)$.}

\section{Case Study}\label{section:case study}

In this section, we consider a case study with a team of UAVs in a gridworld environment, as shown in Fig.~\ref{fig:case_env}. The objective is to identify the unknown set of compromised (bad) UAVs out of the UAV team.

\begin{figure}[h]
    \subfloat[Gridworld environment \label{fig:case_env}]{\newcommand{\gridScale}{0.4}
\newcommand{\fillGridAt}[3]{
	\node [xshift=.5*\gridScale cm,yshift=.5*\gridScale cm] at (#1*\gridScale,#2*\gridScale){#3};
}

\newcommand{\obsactcle}[3]{
\fill[#3] (#1*\gridScale+0.15*\gridScale,#2*\gridScale+0.15*\gridScale) rectangle (#1*\gridScale+0.85*\gridScale,#2*\gridScale+0.85*\gridScale);
}

\definecolor{maroon}{RGB}{139,0,0}
\definecolor{blueish}{RGB}{125,100,255}
\definecolor{greenish}{RGB}{26,102,46}
\definecolor{pink}{RGB}{255,0,144}
\definecolor{cyan}{RGB}{0,255,255}

\newcommand{\UAV}[4]{
\pgfmathsetmacro\linetck{#3/500.0}
\pgfmathsetmacro\uavcorn{#3/4.0+5*\linetck}
\draw[line width = \linetck mm, color=#4] (#1,#2) circle (10*\linetck);
\draw[line width = \linetck mm, color=#4] (#1+\uavcorn,#2+\uavcorn) circle (0.92*\uavcorn);
\draw[line width = \linetck mm, color=#4] (#1-\uavcorn,#2-\uavcorn) circle (0.92*\uavcorn);
\draw[line width = \linetck mm, color=#4] (#1+\uavcorn,#2-\uavcorn) circle (0.92*\uavcorn);
\draw[line width = \linetck mm, color=#4] (#1-\uavcorn,#2+\uavcorn) circle (0.92*\uavcorn);

\draw[line width = \linetck mm, color=#4] (#1+\uavcorn,#2+\uavcorn) circle (0.96*\uavcorn);
\draw[line width = \linetck mm, color=#4] (#1-\uavcorn,#2-\uavcorn) circle (0.96*\uavcorn);
\draw[line width = \linetck mm, color=#4] (#1+\uavcorn,#2-\uavcorn) circle (0.96*\uavcorn);
\draw[line width = \linetck mm, color=#4] (#1-\uavcorn,#2+\uavcorn) circle (0.96*\uavcorn);

\draw[line width = \linetck mm, color=#4] (#1+\uavcorn,#2+\uavcorn) circle (0.8*\uavcorn);
\draw[line width = \linetck mm, color=#4] (#1-\uavcorn,#2-\uavcorn) circle (0.8*\uavcorn);
\draw[line width = \linetck mm, color=#4] (#1+\uavcorn,#2-\uavcorn) circle (0.8*\uavcorn);
\draw[line width = \linetck mm, color=#4] (#1-\uavcorn,#2+\uavcorn) circle (0.8*\uavcorn);

\draw[line width = \linetck mm, color=#4] (#1+\uavcorn,#2+\uavcorn) circle (4*\linetck);
\draw[line width = \linetck mm, color=#4] (#1-\uavcorn,#2-\uavcorn) circle (4*\linetck);
\draw[line width = \linetck mm, color=#4] (#1+\uavcorn,#2-\uavcorn) circle (4*\linetck);
\draw[line width = \linetck mm, color=#4] (#1-\uavcorn,#2+\uavcorn) circle (4*\linetck);

\draw[line width = 20*\linetck mm, color=#4] (#1-7*\linetck, #2+7*\linetck) -- (#1+\uavcorn-3*\linetck,#2+\uavcorn+3*\linetck);
\draw[line width = 20*\linetck mm, color=#4] (#1+7*\linetck, #2-7*\linetck) -- (#1+\uavcorn+3*\linetck,#2+\uavcorn-3*\linetck);

\draw[line width = 20*\linetck mm, color=#4] (#1+7*\linetck, #2+7*\linetck) -- (#1+\uavcorn+3*\linetck,#2-\uavcorn+3*\linetck);
\draw[line width = 20*\linetck mm, color=#4] (#1-7*\linetck, #2-7*\linetck) -- (#1+\uavcorn-3*\linetck,#2-\uavcorn-3*\linetck);

\draw[line width = 20*\linetck mm, color=#4] (#1-7*\linetck, #2+7*\linetck) -- (#1-\uavcorn-3*\linetck,#2-\uavcorn+3*\linetck);
\draw[line width = 20*\linetck mm, color=#4] (#1+7*\linetck, #2-7*\linetck) -- (#1-\uavcorn+3*\linetck,#2-\uavcorn-3*\linetck);

\draw[line width = 20*\linetck mm, color=#4] (#1+7*\linetck, #2+7*\linetck) -- (#1-\uavcorn+3*\linetck,#2+\uavcorn+3*\linetck);
\draw[line width = 20*\linetck mm, color=#4] (#1-7*\linetck, #2-7*\linetck) -- (#1-\uavcorn-3*\linetck,#2+\uavcorn-3*\linetck);
}

\begin{tikzpicture}

\obsactcle{0}{9}{blue}
\obsactcle{9}{9}{blue}
\obsactcle{9}{6}{greenish}
\obsactcle{0}{7}{greenish}
\obsactcle{1}{8}{pink}
\obsactcle{9}{0}{pink}
\obsactcle{0}{3}{cyan}
\obsactcle{8}{3}{cyan}
\obsactcle{1}{0}{maroon}
\obsactcle{5}{4}{maroon}

\UAV{3*\gridScale+0.5*\gridScale}{6*\gridScale+0.5*\gridScale}{0.85*\gridScale}{blue}
\UAV{0*\gridScale+0.5*\gridScale}{1*\gridScale+0.5*\gridScale}{0.85*\gridScale}{maroon}
\UAV{7*\gridScale+0.5*\gridScale}{9*\gridScale+0.5*\gridScale}{0.85*\gridScale}{greenish}
\UAV{9*\gridScale+0.5*\gridScale}{3*\gridScale+0.5*\gridScale}{0.85*\gridScale}{pink}
\UAV{1*\gridScale+0.5*\gridScale}{5*\gridScale+0.5*\gridScale}{0.85*\gridScale}{cyan}

\fillGridAt{3}{5}{\textcolor{blue}{$0$}}
\fillGridAt{7}{8}{\textcolor{greenish}{$1$}}
\fillGridAt{0}{0}{\textcolor{maroon}{$3$}}
\fillGridAt{1}{4}{\textcolor{cyan}{$2$}}
\fillGridAt{9}{2}{\textcolor{pink}{$4$}}

\draw[black,line width=1*\gridScale pt] (0,0) grid[step=\gridScale] (10*\gridScale,10*\gridScale);

\draw[dashed,line width=3.5*\gridScale,color=greenish,opacity=0.75] (4*\gridScale,6*\gridScale) rectangle (10*\gridScale,10*\gridScale);

\draw[black,line width=2*\gridScale pt] (0,0) rectangle (10*\gridScale,10*\gridScale);

\end{tikzpicture}}\hfill
    \subfloat[True hypothesis $\theta^\star$]{\newcommand{\beliefSize}{0.95}





\begin{tikzpicture}
    \draw[black,line width=0.5pt] (0,0) circle (1*\beliefSize);

    \coordinate (origin) at (0, 0);
    
     \coordinate (1) at (90: 1*\beliefSize);
     \coordinate (edge-1) at (90: 1*\beliefSize);
     \node[inner sep=0.7pt,outer xsep=-2pt] (title-1) at (90:1.25*1) {\textcolor{blue}{$0$}};
     \draw[opacity=0.5] (origin) -- (edge-1);

     \coordinate (2) at (18: 1*\beliefSize);
     \coordinate (edge-2) at (18: 1*\beliefSize);
     \node[inner sep=0.7pt,outer xsep=-2pt] (title-2) at (18:1.25*1) {\textcolor{greenish}{$1$}};
      \draw[opacity=0.5] (origin) -- (edge-2);

     \coordinate (3) at (306: 1*\beliefSize);
     \coordinate (edge-3) at (306: 1*\beliefSize);
     \node[inner sep=0.7pt,outer xsep=-2pt] (title-3) at (306:1.25*1) {\textcolor{cyan}{$2$}};
      \draw[opacity=0.5] (origin) -- (edge-3);

     \coordinate (4) at (234: 0*\beliefSize);
     \coordinate (edge-4) at (234: 1*\beliefSize);
     \node[inner sep=0.7pt,outer xsep=-2pt] (title-4) at (234:1.25*1) {\textcolor{maroon}{$3$}};
      \draw[opacity=0.5] (origin) -- (edge-4);

     \coordinate (5) at (162: 1*\beliefSize);
     \coordinate (edge-5) at (162: 1*\beliefSize);
     \node[inner sep=0.7pt,outer xsep=-2pt] (title-5) at (162:1.25*1) {\textcolor{pink}{$4$}};
      \draw[opacity=0.5] (origin) -- (edge-5);

    \draw [fill=yellow!20, opacity=.7] (1)
                                \foreach \i in {2,...,5}{-- (\i)} --cycle;
\end{tikzpicture}}
    \caption{a) Case study environment - 5 agents each with an observation and communication range (only Agent 1's range is shown). b) The true hypothesis of the system $\theta = (1,1,1,0,1)$ in a radar plot over the probability simplex. The closer a vertex is to the edge of the radar plot, the higher that the belief of the corresponding agent is good.}
    \label{fig:environment}
\end{figure}

\begin{figure}
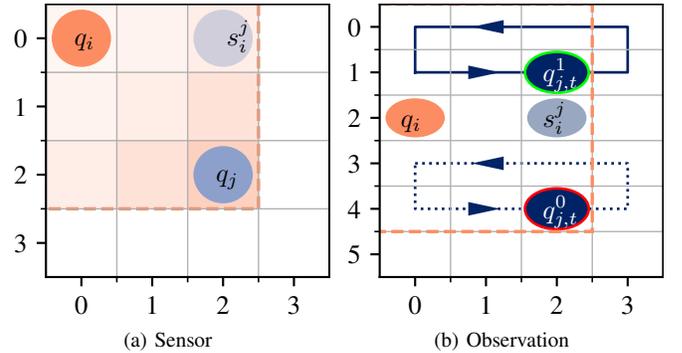

    \centering
    \subfloat[Sensor \label{fig:sen}]{
    \centering
    \input{observation.pgf}
    }
    \subfloat[Observation \label{fig:obs}]{
    \centering
    \input{likelihood_function.pgf}
    }
    \caption{(a) An example distribution and sensor output ($s^j_i = [2,0]$) for agent $i$ (true location $q_i = [0,0]$) sensing agent $j$ (true location $q_j = [2,2]$). Darker shades of orange indicate a higher probability of the sensor reading at that location (left).
    (b) An example likelihood function $l_i(s^j_i|q_{i},\theta(j))$ for a pair of agents $i$ and $j$ with  two possible models for agent $j$ ($\theta(j) \in \lbrace 0,1\rbrace$). 
    }
    \label{fig:observation}
\end{figure}

\subsection{Setting}\label{ssec:setting}

{We examine the proposed algorithms with 5 agents among which there is one bad agent. 
}
All the UAVs are at similar altitudes. Therefore, the state set $Q$ is the set of the two-dimensional locations in the gridworld. For agent $i$ at time $t$, its state is represented by  $q_{i,t} = [q^x_{i,t},q^y_{i,t}]$. We assign each individual agent a persistent surveillance task with a given state path. 

{Every agent has a communication and sensor range of $3$ units, i.e., they can view the locations that are within a $7\times 7$ square centered around the agent's position $q_i$ (see Fig.~\ref{fig:sen} for an example).}
Each agent $i$ could be either good or bad, therefore we denote a set $\Theta_i=\{0,1\}$, where $0$ denotes bad and $1$ denotes good. The hypothesis set is then $\Theta=\prod_i\Theta_i$. {For a hypothesis $\theta\in\Theta$, $\theta(i)$ denotes the hypothesis for agent $i$.}
The true hypothesis $\theta^*$ is the tuple $\theta^*=(1,1,1,0,1)$, i.e., all agents are good except for agent 3 since $\theta^*(3)=0$.



\subsection{Observation Model}
\subsubsection{Sensor}
 \added{If agent $i$ is at a location $q_{i}$, it will make an observation $s_i^j\in Q$ of agent $j$.} We use $ \mathcal{Q}_i(q_i)\subseteq Q $  to denote the set of locations that can be observed by agent $i$ at $q_i$.
 \added{If $q_j\in \mathcal{Q}_i(q_i) $, then agent $j$ is within the observation range of agent $i$. However, note that due to the observation noise, it is possible that $s_i^j\neq q_j$.   The probability of getting an observation $s_i^j$ for agent $j$ follows a probability distribution over $\mathcal{Q}_i(q_i)$ conditioned on $q_i$ and $q_j$, i.e., the locations of agent $i$ and $j$. In this example, we assume that this probability distribution is a truncated Gaussian distribution, a common choice in state estimation with noisy sensors  \cite{simon2006optimal}.} We center the distribution  around the actual location $q_j$ of agent $j$ and with a prescribed variance $\sigma^2$ (see Fig.~\ref{fig:sen}). \added{Intuitively, it means that the probability of observing $s_i^j=q_j$ is the highest and the probabilities of getting observations other than $s_i^j=q_j$ decreases as $s_i^j$ is further way from $q_j$.} As a result, the probability of agent $i$ observing $s_i^j$   is
\begin{align}
    P_{i}(s_i^j|{q_i},q_j) = \frac{e^{-\frac{1}{2\sigma^2}\norm{s_i^j-q_j}^2_2}}{\sum_{q\in \mathcal{Q}_i(q_i) } e^{-\frac{1}{2\sigma^2}\norm{q-q_j}^2_2}}.
    \label{equation:sensor probability}
\end{align}

If $q_j\notin  \mathcal{Q}_i(q_i)$, agent $i$ cannot observe agent $j$ and thus obtains an empty observation, i.e., $s_i^j=\emptyset$. To summarize, the observation $s_i^j$ follows 
\begin{align}\label{equation:obervation function for agent j in case study}
    s_i^j = \begin{cases}
    q \text{ with probability } P_{i}(q|{q_i},q_j) & \text{ if } q_j \in \mathcal{Q}_i(q_i),  \\
    \emptyset \text{ with probability 1 }& \text{ otherwise}.
    \end{cases}
\end{align}
From \eqref{equation:obervation function for agent j in case study}, we know that $s_i^j\in Q\cup\emptyset$. The observation set $S_i$ is then $S_i\subseteq\prod_{j\in\mathcal{N},j\neq i}(Q\cup\emptyset)$.

\subsubsection{Likelihood Functions}
{Given the sensor model, we define the probability to get an observation $s_i^j$ conditioned on agent $i$'s location $q_i$ and the hypothesis $\theta(j)$ by
$$
l_i^j(s_i^j|q_i,\theta(j)) = \sum_{q_j}  P_{i}(s_i^j|q_{i},q_{j})P(q_j|\theta(j),q_i),
$$
where $P(q_j|\theta(j),q_i)$ is the conditional probability of agent $j$ at location $q_j$.}

{We then form the local likelihood function $l_i(s_i|\theta, q_{i})$ by taking the product of the likelihoods for each sensor value $s^j_i$:
$$
    l_i(s_i|\theta, q_{i})  = \prod_{j\in \mathcal{N}, j\neq i} l^j_i(s^j_i|\theta(j),q_{i}).
$$
}

\subsubsection{Enforcing source agent requirements}
\added{For this case study to satisfy the conditions 1 and 2 in Theorem~\ref{theorem:main result}, we use a heuristic method where each agent's local state path $\omega_i \in \Omega_i$ needs to pass within the observable range of each other agent $\mathcal{Q}_j(q_j)$ for some $q_j$ on the path infinitely often. To generate local state paths we synthesize policies that ensure that the agent visits each pair of persistent surveillance task targets (both for the good and bad instances) infinitely often. These local state paths are indefinitely repeated, allowing one to compute which agents belong in the source set $S(\theta,\theta')$ based on the finite periods of these sequences.
While the generalized approach to designing the set of state observation paths $\Omega$ for all agents is outside the scope of this work, we chose to select target pairs such that their local state paths will pass within the observable window of all other agents.
An alternate planning approach that will enforce the source agent requirements for all possible persistent surveillance locations involves formulating the surveillance task as a GR(1) reactive synthesis problem~\cite{bharadwaj2018synthesis}.
}

\begin{figure}[t]
    \centering
    \subfloat[SDHT \label{fig:sdht}]{
    \pgfplotstableread[col sep = comma]{pics/plots/Min_sync_Belief.csv}\msB
\begin{tikzpicture}
	\begin{axis}[
	xlabel = {$t$},
	ylabel = {Actual Belief $b^a_j(\theta^\star)$},
	ymin= 0,
	ymax= 1.10,
	xmin=0,
	xmax=50,
	label style={font=\small},
	tick label style={font=\small},
	width = 0.45*\textwidth,
	height = 0.30\textwidth,
	legend style={at={(1,0.75)},anchor=north east}]
	\addplot+[no markers, dashed, line width=2pt,color=blue] table [x index ={0},y index ={1}]{\msB};\label{msb:a0}\addlegendentry{Agent 0}
	\addplot+[no markers, dashed, line width=2pt,color=greenish] table [x index ={0},y index ={2}]{\msB};\label{msb:a1}\addlegendentry{Agent 1} 
	\addplot+[no markers, dashed, line width=2pt,color=cyan] table [x index ={0},y index ={3}]{\msB};\label{msb:a2}\addlegendentry{Agent 2} 
	\addplot+[no markers, dashed, line width=2pt,color=maroon] table [x index ={0},y index ={4}]{\msB};\label{msb:a3}\addlegendentry{Agent 3}
	\addplot+[no markers, dashed, line width=2pt,color=pink] table [x index ={0},y index ={5}]{\msB};\label{msb:a4}\addlegendentry{Agent 4}
	\end{axis}
\end{tikzpicture}
    }\\
    \subfloat[ADHT \label{fig:adht}]{
    \pgfplotstableread[col sep = comma]{pics/plots/Min_async_Belief.csv}\maB
\begin{tikzpicture}
	\begin{axis}[
	xlabel = {$t$},
	ylabel = {Actual Belief $b^a_j(\theta^\star)$},
	ymin= 0,
	ymax= 1.10,
	xmin=0,
	xmax=50,
		label style={font=\small},
	tick label style={font=\small},
	width = 0.45*\textwidth,
	height = 0.30\textwidth,
	legend style={at={(1,0.75)},anchor=north east}]
	\addplot+[no markers, dashed, line width=2pt,color=blue] table [x index ={0},y index ={1}]{\maB};\label{mab:a00}\addlegendentry{Agent 0}
	\addplot+[no markers, dashed, line width=2pt,color=greenish] table [x index ={0},y index ={2}]{\maB};\label{mab:a10}\addlegendentry{Agent 1} 
	\addplot+[no markers, dashed, line width=2pt,color=cyan] table [x index ={0},y index ={3}]{\maB};\label{mab:a20}\addlegendentry{Agent 2} 
	\addplot+[no markers, dashed, line width=2pt,color=maroon] table [x index ={0},y index ={4}]{\maB};\label{mab:a30}\addlegendentry{Agent 3}
	\addplot+[no markers, dashed, line width=2pt,color=pink] table [x index ={0},y index ={5}]{\maB};\label{mab:a40}\addlegendentry{Agent 4}
	\end{axis}
\end{tikzpicture}
    }
    \caption{\added{Each agent's AB $b^a_{j,t}(\theta^*)$ for the true hypothesis $\theta^*=(1,1,1,0,1)$ over time $t$.}
    Agent 3 is the bad agent who shares randomly generated beliefs.
    }
    \label{fig:sdht-adht}
\end{figure}
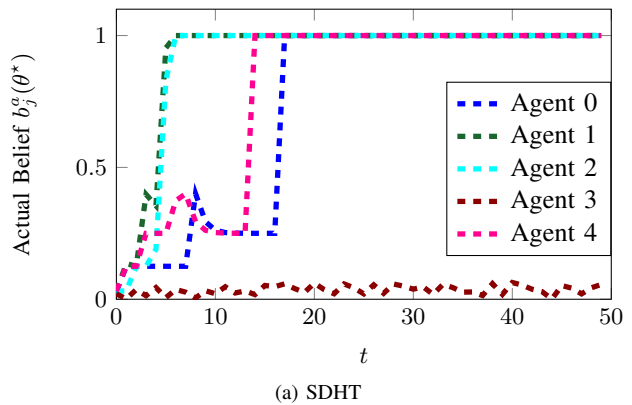
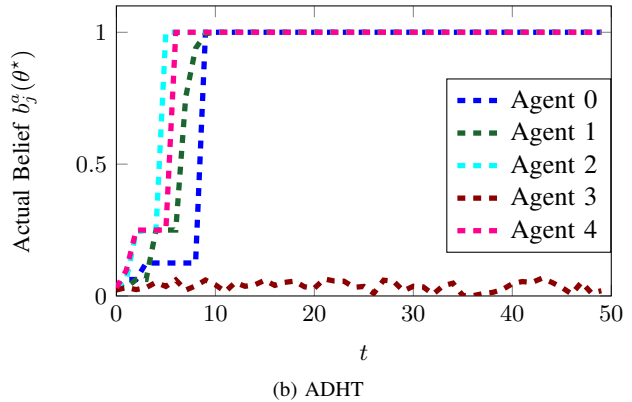

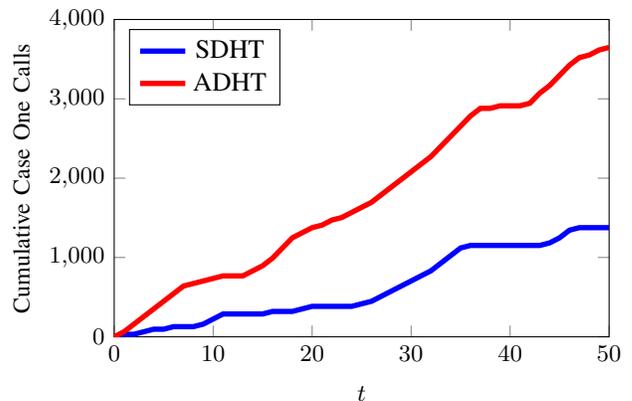
\begin{figure}
    \centering
    \pgfplotstableread[col sep = comma]{pics/plots/Calls.csv}\CallsB
\begin{tikzpicture}
	\begin{axis}[
	legend pos=north west,
	xlabel = {$t$},
	ylabel = {Cumulative Case One Calls},
	ymin= 0,
	ymax= 4000,
	xmin=0,
	xmax=50,
		label style={font=\small},
	tick label style={font=\small},
	width = 0.45*\textwidth,
	height = 0.32\textwidth]
	\addplot+[no markers, line width=2pt,color=blue] table [x index ={0},y index ={1}]{\CallsB};\label{mab:a0}\addlegendentry{SDHT}
	\addplot+[no markers, line width=2pt,color=red] table [x index ={0},y index ={2}]{\CallsB};\label{mab:a1}\addlegendentry{ADHT} 
	\end{axis}
\end{tikzpicture}
    \caption{\added{Number of times AB is updated with case one.} }
    \label{fig:belief_calls}
\end{figure} 

\begin{figure*}[t]
    \subfloat[SDHT $t=0$ \label{fig:ts0}]{ 
    \newcommand{\beliefSize}{0.75}

\begin{tabular}{@{}c@{}}
\begin{tikzpicture}
    \draw[black,line width=0.5pt] (0,0) circle (1*\beliefSize);

    \coordinate (origin) at (0, 0);
    
     \coordinate (1) at (90: 0.5*\beliefSize);
     \coordinate (edge-1) at (90: 1*\beliefSize);
     \node[inner sep=0.7pt,outer xsep=-2pt] (title-1) at (90:1.25*1*\beliefSize) {\textcolor{blue}{$0$}};
     \draw[opacity=0.5] (origin) -- (edge-1);

     \coordinate (2) at (18: 0.5*\beliefSize);
     \coordinate (edge-2) at (18: 1*\beliefSize);
     \node[inner sep=0.7pt,outer xsep=-2pt] (title-2) at (18:1.25*1*\beliefSize) {\textcolor{greenish}{$1$}};
      \draw[opacity=0.5] (origin) -- (edge-2);

     \coordinate (3) at (306: 0.5*\beliefSize);
     \coordinate (edge-3) at (306: 1*\beliefSize);
     \node[inner sep=0.7pt,outer xsep=-2pt] (title-3) at (306:1.25*1*\beliefSize) {\textcolor{cyan}{$2$}};
      \draw[opacity=0.5] (origin) -- (edge-3);

     \coordinate (4) at (234: 0.5*\beliefSize);
     \coordinate (edge-4) at (234: 1*\beliefSize);
     \node[inner sep=0.7pt,outer xsep=-2pt] (title-4) at (234:1.25*1*\beliefSize) {\textcolor{maroon}{$3$}};
      \draw[opacity=0.5] (origin) -- (edge-4);

     \coordinate (5) at (162: 0.5*\beliefSize);
     \coordinate (edge-5) at (162: 1*\beliefSize);
     \node[inner sep=0.7pt,outer xsep=-2pt] (title-5) at (162:1.25*1*\beliefSize) {\textcolor{pink}{$4$}};
      \draw[opacity=0.5] (origin) -- (edge-5);

    \draw [fill=blue, opacity=.35] (1)
                                \foreach \i in {2,...,5}{-- (\i)} --cycle;
\end{tikzpicture}
\\
\begin{tikzpicture}
    \draw[black,line width=0.5pt] (0,0) circle (1*\beliefSize);

    \coordinate (origin) at (0, 0);
    
     \coordinate (1) at (90: 0.5*\beliefSize);
     \coordinate (edge-1) at (90: 1*\beliefSize);
     \node[inner sep=0.7pt,outer xsep=-2pt] (title-1) at (90:1.25*1*\beliefSize) {\textcolor{blue}{$0$}};
     \draw[opacity=0.5] (origin) -- (edge-1);

     \coordinate (2) at (18: 0.5*\beliefSize);
     \coordinate (edge-2) at (18: 1*\beliefSize);
     \node[inner sep=0.7pt,outer xsep=-2pt] (title-2) at (18:1.25*1*\beliefSize) {\textcolor{greenish}{$1$}};
      \draw[opacity=0.5] (origin) -- (edge-2);

     \coordinate (3) at (306: 0.5*\beliefSize);
     \coordinate (edge-3) at (306: 1*\beliefSize);
     \node[inner sep=0.7pt,outer xsep=-2pt] (title-3) at (306:1.25*1*\beliefSize) {\textcolor{cyan}{$2$}};
      \draw[opacity=0.5] (origin) -- (edge-3);

     \coordinate (4) at (234: 0.5*\beliefSize);
     \coordinate (edge-4) at (234: 1*\beliefSize);
     \node[inner sep=0.7pt,outer xsep=-2pt] (title-4) at (234:1.25*1*\beliefSize) {\textcolor{maroon}{$3$}};
      \draw[opacity=0.5] (origin) -- (edge-4);

     \coordinate (5) at (162: 0.5*\beliefSize);
     \coordinate (edge-5) at (162: 1*\beliefSize);
     \node[inner sep=0.7pt,outer xsep=-2pt] (title-5) at (162:1.25*1*\beliefSize) {\textcolor{pink}{$4$}};
      \draw[opacity=0.5] (origin) -- (edge-5);

    \draw [fill=cyan, opacity=.35] (1)
                                \foreach \i in {2,...,5}{-- (\i)} --cycle;
\end{tikzpicture}
\\
\begin{tikzpicture}
    \draw[black,line width=0.5pt] (0,0) circle (1*\beliefSize);

    \coordinate (origin) at (0, 0);
    
     \coordinate (1) at (90: 0.536*\beliefSize);
     \coordinate (edge-1) at (90: 1*\beliefSize);
     \node[inner sep=0.7pt,outer xsep=-2pt] (title-1) at (90:1.25*1*\beliefSize) {\textcolor{blue}{$0$}};
     \draw[opacity=0.5] (origin) -- (edge-1);

     \coordinate (2) at (18: 0.502*\beliefSize);
     \coordinate (edge-2) at (18: 1*\beliefSize);
     \node[inner sep=0.7pt,outer xsep=-2pt] (title-2) at (18:1.25*1*\beliefSize) {\textcolor{greenish}{$1$}};
      \draw[opacity=0.5] (origin) -- (edge-2);

     \coordinate (3) at (306: 0.583*\beliefSize);
     \coordinate (edge-3) at (306: 1*\beliefSize);
     \node[inner sep=0.7pt,outer xsep=-2pt] (title-3) at (306:1.25*1*\beliefSize) {\textcolor{cyan}{$2$}};
      \draw[opacity=0.5] (origin) -- (edge-3);

     \coordinate (4) at (234: 0.537*\beliefSize);
     \coordinate (edge-4) at (234: 1*\beliefSize);
     \node[inner sep=0.7pt,outer xsep=-2pt] (title-4) at (234:1.25*1*\beliefSize) {\textcolor{maroon}{$3$}};
      \draw[opacity=0.5] (origin) -- (edge-4);

     \coordinate (5) at (162: 0.447*\beliefSize);
     \coordinate (edge-5) at (162: 1*\beliefSize);
     \node[inner sep=0.7pt,outer xsep=-2pt] (title-5) at (162:1.25*1*\beliefSize) {\textcolor{pink}{$4$}};
      \draw[opacity=0.5] (origin) -- (edge-5);

    \draw [fill=maroon, opacity=.35] (1)
                                \foreach \i in {2,...,5}{-- (\i)} --cycle;
\end{tikzpicture}

\end{tabular}
    }\hfill
    \subfloat[$t=4$ \label{fig:ts5}]{ 
    \newcommand{\beliefSize}{0.75}

\begin{tabular}{@{}c@{}}
\begin{tikzpicture}
    \draw[black,line width=0.5pt] (0,0) circle (1*\beliefSize);

    \coordinate (origin) at (0, 0);
    
     \coordinate (1) at (90: 1.0*\beliefSize);
     \coordinate (edge-1) at (90: 1*\beliefSize);
     \node[inner sep=0.7pt,outer xsep=-2pt] (title-1) at (90:1.25*1*\beliefSize) {\textcolor{blue}{$0$}};
     \draw[opacity=0.5] (origin) -- (edge-1);

     \coordinate (2) at (18: 1.0*\beliefSize);
     \coordinate (edge-2) at (18: 1*\beliefSize);
     \node[inner sep=0.7pt,outer xsep=-2pt] (title-2) at (18:1.25*1*\beliefSize) {\textcolor{greenish}{$1$}};
      \draw[opacity=0.5] (origin) -- (edge-2);

     \coordinate (3) at (306: 1.0*\beliefSize);
     \coordinate (edge-3) at (306: 1*\beliefSize);
     \node[inner sep=0.7pt,outer xsep=-2pt] (title-3) at (306:1.25*1*\beliefSize) {\textcolor{cyan}{$2$}};
      \draw[opacity=0.5] (origin) -- (edge-3);

     \coordinate (4) at (234: 0.5*\beliefSize);
     \coordinate (edge-4) at (234: 1*\beliefSize);
     \node[inner sep=0.7pt,outer xsep=-2pt] (title-4) at (234:1.25*1*\beliefSize) {\textcolor{maroon}{$3$}};
      \draw[opacity=0.5] (origin) -- (edge-4);

     \coordinate (5) at (162: 0.5*\beliefSize);
     \coordinate (edge-5) at (162: 1*\beliefSize);
     \node[inner sep=0.7pt,outer xsep=-2pt] (title-5) at (162:1.25*1*\beliefSize) {\textcolor{pink}{$4$}};
      \draw[opacity=0.5] (origin) -- (edge-5);

    \draw [fill=blue, opacity=.35] (1)
                                \foreach \i in {2,...,5}{-- (\i)} --cycle;
\end{tikzpicture}
\\
\begin{tikzpicture}
    \draw[black,line width=0.5pt] (0,0) circle (1*\beliefSize);

    \coordinate (origin) at (0, 0);
    
     \coordinate (1) at (90: 1.0*\beliefSize);
     \coordinate (edge-1) at (90: 1*\beliefSize);
     \node[inner sep=0.7pt,outer xsep=-2pt] (title-1) at (90:1.25*1*\beliefSize) {\textcolor{blue}{$0$}};
     \draw[opacity=0.5] (origin) -- (edge-1);

     \coordinate (2) at (18: 1.0*\beliefSize);
     \coordinate (edge-2) at (18: 1*\beliefSize);
     \node[inner sep=0.7pt,outer xsep=-2pt] (title-2) at (18:1.25*1*\beliefSize) {\textcolor{greenish}{$1$}};
      \draw[opacity=0.5] (origin) -- (edge-2);

     \coordinate (3) at (306: 0.5*\beliefSize);
     \coordinate (edge-3) at (306: 1*\beliefSize);
     \node[inner sep=0.7pt,outer xsep=-2pt] (title-3) at (306:1.25*1*\beliefSize) {\textcolor{cyan}{$2$}};
      \draw[opacity=0.5] (origin) -- (edge-3);

     \coordinate (4) at (234: 0.0*\beliefSize);
     \coordinate (edge-4) at (234: 1*\beliefSize);
     \node[inner sep=0.7pt,outer xsep=-2pt] (title-4) at (234:1.25*1*\beliefSize) {\textcolor{maroon}{$3$}};
      \draw[opacity=0.5] (origin) -- (edge-4);

     \coordinate (5) at (162: 0.5*\beliefSize);
     \coordinate (edge-5) at (162: 1*\beliefSize);
     \node[inner sep=0.7pt,outer xsep=-2pt] (title-5) at (162:1.25*1*\beliefSize) {\textcolor{pink}{$4$}};
      \draw[opacity=0.5] (origin) -- (edge-5);

    \draw [fill=cyan, opacity=.35] (1)
                                \foreach \i in {2,...,5}{-- (\i)} --cycle;
\end{tikzpicture}
\\
\begin{tikzpicture}
    \draw[black,line width=0.5pt] (0,0) circle (1*\beliefSize);

    \coordinate (origin) at (0, 0);
    
     \coordinate (1) at (90: 0.473*\beliefSize);
     \coordinate (edge-1) at (90: 1*\beliefSize);
     \node[inner sep=0.7pt,outer xsep=-2pt] (title-1) at (90:1.25*1*\beliefSize) {\textcolor{blue}{$0$}};
     \draw[opacity=0.5] (origin) -- (edge-1);

     \coordinate (2) at (18: 0.540*\beliefSize);
     \coordinate (edge-2) at (18: 1*\beliefSize);
     \node[inner sep=0.7pt,outer xsep=-2pt] (title-2) at (18:1.25*1*\beliefSize) {\textcolor{greenish}{$1$}};
      \draw[opacity=0.5] (origin) -- (edge-2);

     \coordinate (3) at (306: 0.481*\beliefSize);
     \coordinate (edge-3) at (306: 1*\beliefSize);
     \node[inner sep=0.7pt,outer xsep=-2pt] (title-3) at (306:1.25*1*\beliefSize) {\textcolor{cyan}{$2$}};
      \draw[opacity=0.5] (origin) -- (edge-3);

     \coordinate (4) at (234: 0.467*\beliefSize);
     \coordinate (edge-4) at (234: 1*\beliefSize);
     \node[inner sep=0.7pt,outer xsep=-2pt] (title-4) at (234:1.25*1*\beliefSize) {\textcolor{maroon}{$3$}};
      \draw[opacity=0.5] (origin) -- (edge-4);

     \coordinate (5) at (162: 0.556*\beliefSize);
     \coordinate (edge-5) at (162: 1*\beliefSize);
     \node[inner sep=0.7pt,outer xsep=-2pt] (title-5) at (162:1.25*1*\beliefSize) {\textcolor{pink}{$4$}};
      \draw[opacity=0.5] (origin) -- (edge-5);

    \draw [fill=maroon, opacity=.35] (1)
                                \foreach \i in {2,...,5}{-- (\i)} --cycle;
\end{tikzpicture}

\end{tabular}
    }\hfill
    \subfloat[$t=9$ \label{fig:ts9}]{ 
    \newcommand{\beliefSize}{0.75}

\begin{tabular}{@{}c@{}}
\begin{tikzpicture}
    \draw[black,line width=0.5pt] (0,0) circle (1*\beliefSize);

    \coordinate (origin) at (0, 0);
    
     \coordinate (1) at (90: 1.0*\beliefSize);
     \coordinate (edge-1) at (90: 1*\beliefSize);
     \node[inner sep=0.7pt,outer xsep=-2pt] (title-1) at (90:1.25*1*\beliefSize) {\textcolor{blue}{$0$}};
     \draw[opacity=0.5] (origin) -- (edge-1);

     \coordinate (2) at (18: 1.0*\beliefSize);
     \coordinate (edge-2) at (18: 1*\beliefSize);
     \node[inner sep=0.7pt,outer xsep=-2pt] (title-2) at (18:1.25*1*\beliefSize) {\textcolor{greenish}{$1$}};
      \draw[opacity=0.5] (origin) -- (edge-2);

     \coordinate (3) at (306: 1.0*\beliefSize);
     \coordinate (edge-3) at (306: 1*\beliefSize);
     \node[inner sep=0.7pt,outer xsep=-2pt] (title-3) at (306:1.25*1*\beliefSize) {\textcolor{cyan}{$2$}};
      \draw[opacity=0.5] (origin) -- (edge-3);

     \coordinate (4) at (234: 0.5*\beliefSize);
     \coordinate (edge-4) at (234: 1*\beliefSize);
     \node[inner sep=0.7pt,outer xsep=-2pt] (title-4) at (234:1.25*1*\beliefSize) {\textcolor{maroon}{$3$}};
      \draw[opacity=0.5] (origin) -- (edge-4);

     \coordinate (5) at (162: 1.0*\beliefSize);
     \coordinate (edge-5) at (162: 1*\beliefSize);
     \node[inner sep=0.7pt,outer xsep=-2pt] (title-5) at (162:1.25*1*\beliefSize) {\textcolor{pink}{$4$}};
      \draw[opacity=0.5] (origin) -- (edge-5);

    \draw [fill=blue, opacity=.35] (1)
                                \foreach \i in {2,...,5}{-- (\i)} --cycle;
\end{tikzpicture}
\\
\begin{tikzpicture}
    \draw[black,line width=0.5pt] (0,0) circle (1*\beliefSize);

    \coordinate (origin) at (0, 0);
    
     \coordinate (1) at (90: 1.0*\beliefSize);
     \coordinate (edge-1) at (90: 1*\beliefSize);
     \node[inner sep=0.7pt,outer xsep=-2pt] (title-1) at (90:1.25*1*\beliefSize) {\textcolor{blue}{$0$}};
     \draw[opacity=0.5] (origin) -- (edge-1);

     \coordinate (2) at (18: 1.0*\beliefSize);
     \coordinate (edge-2) at (18: 1*\beliefSize);
     \node[inner sep=0.7pt,outer xsep=-2pt] (title-2) at (18:1.25*1*\beliefSize) {\textcolor{greenish}{$1$}};
      \draw[opacity=0.5] (origin) -- (edge-2);

     \coordinate (3) at (306: 1.0*\beliefSize);
     \coordinate (edge-3) at (306: 1*\beliefSize);
     \node[inner sep=0.7pt,outer xsep=-2pt] (title-3) at (306:1.25*1*\beliefSize) {\textcolor{cyan}{$2$}};
      \draw[opacity=0.5] (origin) -- (edge-3);

     \coordinate (4) at (234: 0.0*\beliefSize);
     \coordinate (edge-4) at (234: 1*\beliefSize);
     \node[inner sep=0.7pt,outer xsep=-2pt] (title-4) at (234:1.25*1*\beliefSize) {\textcolor{maroon}{$3$}};
      \draw[opacity=0.5] (origin) -- (edge-4);

     \coordinate (5) at (162: 1.0*\beliefSize);
     \coordinate (edge-5) at (162: 1*\beliefSize);
     \node[inner sep=0.7pt,outer xsep=-2pt] (title-5) at (162:1.25*1*\beliefSize) {\textcolor{pink}{$4$}};
      \draw[opacity=0.5] (origin) -- (edge-5);

    \draw [fill=cyan, opacity=.35] (1)
                                \foreach \i in {2,...,5}{-- (\i)} --cycle;
\end{tikzpicture}
\\
\begin{tikzpicture}
    \draw[black,line width=0.5pt] (0,0) circle (1*\beliefSize);

    \coordinate (origin) at (0, 0);
    
     \coordinate (1) at (90: 0.498*\beliefSize);
     \coordinate (edge-1) at (90: 1*\beliefSize);
     \node[inner sep=0.7pt,outer xsep=-2pt] (title-1) at (90:1.25*1*\beliefSize) {\textcolor{blue}{$0$}};
     \draw[opacity=0.5] (origin) -- (edge-1);

     \coordinate (2) at (18: 0.486*\beliefSize);
     \coordinate (edge-2) at (18: 1*\beliefSize);
     \node[inner sep=0.7pt,outer xsep=-2pt] (title-2) at (18:1.25*1*\beliefSize) {\textcolor{greenish}{$1$}};
      \draw[opacity=0.5] (origin) -- (edge-2);

     \coordinate (3) at (306: 0.518*\beliefSize);
     \coordinate (edge-3) at (306: 1*\beliefSize);
     \node[inner sep=0.7pt,outer xsep=-2pt] (title-3) at (306:1.25*1*\beliefSize) {\textcolor{cyan}{$2$}};
      \draw[opacity=0.5] (origin) -- (edge-3);

     \coordinate (4) at (234: 0.464*\beliefSize);
     \coordinate (edge-4) at (234: 1*\beliefSize);
     \node[inner sep=0.7pt,outer xsep=-2pt] (title-4) at (234:1.25*1*\beliefSize) {\textcolor{maroon}{$3$}};
      \draw[opacity=0.5] (origin) -- (edge-4);

     \coordinate (5) at (162: 0.444*\beliefSize);
     \coordinate (edge-5) at (162: 1*\beliefSize);
     \node[inner sep=0.7pt,outer xsep=-2pt] (title-5) at (162:1.25*1*\beliefSize) {\textcolor{pink}{$4$}};
      \draw[opacity=0.5] (origin) -- (edge-5);

    \draw [fill=maroon, opacity=.35] (1)
                                \foreach \i in {2,...,5}{-- (\i)} --cycle;
\end{tikzpicture}

\end{tabular}
    }\hfill
    \subfloat[$t=14$ \label{fig:ts14}]{
    \newcommand{\beliefSize}{0.75}

\begin{tabular}{@{}c@{}}
\begin{tikzpicture}
    \draw[black,line width=0.5pt] (0,0) circle (1*\beliefSize);

    \coordinate (origin) at (0, 0);
    
     \coordinate (1) at (90: 1.05*\beliefSize);
     \coordinate (edge-1) at (90: 1*\beliefSize);
     \node[inner sep=0.7pt,outer xsep=-2pt] (title-1) at (90:1.25*1*\beliefSize) {\textcolor{blue}{$0$}};
     \draw[opacity=0.5] (origin) -- (edge-1);

     \coordinate (2) at (18: 1.0*\beliefSize);
     \coordinate (edge-2) at (18: 1*\beliefSize);
     \node[inner sep=0.7pt,outer xsep=-2pt] (title-2) at (18:1.25*1*\beliefSize) {\textcolor{greenish}{$1$}};
      \draw[opacity=0.5] (origin) -- (edge-2);

     \coordinate (3) at (306: 1.0*\beliefSize);
     \coordinate (edge-3) at (306: 1*\beliefSize);
     \node[inner sep=0.7pt,outer xsep=-2pt] (title-3) at (306:1.25*1*\beliefSize) {\textcolor{cyan}{$2$}};
      \draw[opacity=0.5] (origin) -- (edge-3);

     \coordinate (4) at (234: 0.471*\beliefSize);
     \coordinate (edge-4) at (234: 1*\beliefSize);
     \node[inner sep=0.7pt,outer xsep=-2pt] (title-4) at (234:1.25*1*\beliefSize) {\textcolor{maroon}{$3$}};
      \draw[opacity=0.5] (origin) -- (edge-4);

     \coordinate (5) at (162: 1.0*\beliefSize);
     \coordinate (edge-5) at (162: 1*\beliefSize);
     \node[inner sep=0.7pt,outer xsep=-2pt] (title-5) at (162:1.25*1*\beliefSize) {\textcolor{pink}{$4$}};
      \draw[opacity=0.5] (origin) -- (edge-5);

    \draw [fill=blue, opacity=.35] (1)
                                \foreach \i in {2,...,5}{-- (\i)} --cycle;
\end{tikzpicture}
\\
\begin{tikzpicture}
    \draw[black,line width=0.5pt] (0,0) circle (1*\beliefSize);

    \coordinate (origin) at (0, 0);
    
     \coordinate (1) at (90: 1.0*\beliefSize);
     \coordinate (edge-1) at (90: 1*\beliefSize);
     \node[inner sep=0.7pt,outer xsep=-2pt] (title-1) at (90:1.25*1*\beliefSize) {\textcolor{blue}{$0$}};
     \draw[opacity=0.5] (origin) -- (edge-1);

     \coordinate (2) at (18: 1.0*\beliefSize);
     \coordinate (edge-2) at (18: 1*\beliefSize);
     \node[inner sep=0.7pt,outer xsep=-2pt] (title-2) at (18:1.25*1*\beliefSize) {\textcolor{greenish}{$1$}};
      \draw[opacity=0.5] (origin) -- (edge-2);

     \coordinate (3) at (306: 1.0*\beliefSize);
     \coordinate (edge-3) at (306: 1*\beliefSize);
     \node[inner sep=0.7pt,outer xsep=-2pt] (title-3) at (306:1.25*1*\beliefSize) {\textcolor{cyan}{$2$}};
      \draw[opacity=0.5] (origin) -- (edge-3);

     \coordinate (4) at (234: 0.0*\beliefSize);
     \coordinate (edge-4) at (234: 1*\beliefSize);
     \node[inner sep=0.7pt,outer xsep=-2pt] (title-4) at (234:1.25*1*\beliefSize) {\textcolor{maroon}{$3$}};
      \draw[opacity=0.5] (origin) -- (edge-4);

     \coordinate (5) at (162: 1.0*\beliefSize);
     \coordinate (edge-5) at (162: 1*\beliefSize);
     \node[inner sep=0.7pt,outer xsep=-2pt] (title-5) at (162:1.25*1*\beliefSize) {\textcolor{pink}{$4$}};
      \draw[opacity=0.5] (origin) -- (edge-5);

    \draw [fill=cyan, opacity=.35] (1)
                                \foreach \i in {2,...,5}{-- (\i)} --cycle;
\end{tikzpicture}
\\
\begin{tikzpicture}
    \draw[black,line width=0.5pt] (0,0) circle (1*\beliefSize);

    \coordinate (origin) at (0, 0);
    
     \coordinate (1) at (90: 0.468*\beliefSize);
     \coordinate (edge-1) at (90: 1*\beliefSize);
     \node[inner sep=0.7pt,outer xsep=-2pt] (title-1) at (90:1.25*1*\beliefSize) {\textcolor{blue}{$0$}};
     \draw[opacity=0.5] (origin) -- (edge-1);

     \coordinate (2) at (18: 0.537*\beliefSize);
     \coordinate (edge-2) at (18: 1*\beliefSize);
     \node[inner sep=0.7pt,outer xsep=-2pt] (title-2) at (18:1.25*1*\beliefSize) {\textcolor{greenish}{$1$}};
      \draw[opacity=0.5] (origin) -- (edge-2);

     \coordinate (3) at (306: 0.454*\beliefSize);
     \coordinate (edge-3) at (306: 1*\beliefSize);
     \node[inner sep=0.7pt,outer xsep=-2pt] (title-3) at (306:1.25*1*\beliefSize) {\textcolor{cyan}{$2$}};
      \draw[opacity=0.5] (origin) -- (edge-3);

     \coordinate (4) at (234: 0.398*\beliefSize);
     \coordinate (edge-4) at (234: 1*\beliefSize);
     \node[inner sep=0.7pt,outer xsep=-2pt] (title-4) at (234:1.25*1*\beliefSize) {\textcolor{maroon}{$3$}};
      \draw[opacity=0.5] (origin) -- (edge-4);

     \coordinate (5) at (162: 0.503*\beliefSize);
     \coordinate (edge-5) at (162: 1*\beliefSize);
     \node[inner sep=0.7pt,outer xsep=-2pt] (title-5) at (162:1.25*1*\beliefSize) {\textcolor{pink}{$4$}};
      \draw[opacity=0.5] (origin) -- (edge-5);

    \draw [fill=maroon, opacity=.35] (1)
                                \foreach \i in {2,...,5}{-- (\i)} --cycle;
\end{tikzpicture}

\end{tabular}
    }\hfill
    \subfloat[$t=16$ \label{fig:ts16}]{
    \newcommand{\beliefSize}{0.75}

\begin{tabular}{@{}c@{}}
\begin{tikzpicture}
    \draw[black,line width=0.5pt] (0,0) circle (1*\beliefSize);

    \coordinate (origin) at (0, 0);
    
     \coordinate (1) at (90: 1.0*\beliefSize);
     \coordinate (edge-1) at (90: 1*\beliefSize);
     \node[inner sep=0.7pt,outer xsep=-2pt] (title-1) at (90:1.25*1*\beliefSize) {\textcolor{blue}{$0$}};
     \draw[opacity=0.5] (origin) -- (edge-1);

     \coordinate (2) at (18: 1.0*\beliefSize);
     \coordinate (edge-2) at (18: 1*\beliefSize);
     \node[inner sep=0.7pt,outer xsep=-2pt] (title-2) at (18:1.25*1*\beliefSize) {\textcolor{greenish}{$1$}};
      \draw[opacity=0.5] (origin) -- (edge-2);

     \coordinate (3) at (306: 1.0*\beliefSize);
     \coordinate (edge-3) at (306: 1*\beliefSize);
     \node[inner sep=0.7pt,outer xsep=-2pt] (title-3) at (306:1.25*1*\beliefSize) {\textcolor{cyan}{$2$}};
      \draw[opacity=0.5] (origin) -- (edge-3);

     \coordinate (4) at (234: 0.492*\beliefSize);
     \coordinate (edge-4) at (234: 1*\beliefSize);
     \node[inner sep=0.7pt,outer xsep=-2pt] (title-4) at (234:1.25*1*\beliefSize) {\textcolor{maroon}{$3$}};
      \draw[opacity=0.5] (origin) -- (edge-4);

     \coordinate (5) at (162: 1.0*\beliefSize);
     \coordinate (edge-5) at (162: 1*\beliefSize);
     \node[inner sep=0.7pt,outer xsep=-2pt] (title-5) at (162:1.25*1*\beliefSize) {\textcolor{pink}{$4$}};
      \draw[opacity=0.5] (origin) -- (edge-5);

    \draw [fill=blue, opacity=.35] (1)
                                \foreach \i in {2,...,5}{-- (\i)} --cycle;
\end{tikzpicture}
\\
\begin{tikzpicture}
    \draw[black,line width=0.5pt] (0,0) circle (1*\beliefSize);

    \coordinate (origin) at (0, 0);
    
     \coordinate (1) at (90: 1.0*\beliefSize);
     \coordinate (edge-1) at (90: 1*\beliefSize);
     \node[inner sep=0.7pt,outer xsep=-2pt] (title-1) at (90:1.25*1*\beliefSize) {\textcolor{blue}{$0$}};
     \draw[opacity=0.5] (origin) -- (edge-1);

     \coordinate (2) at (18: 1.0*\beliefSize);
     \coordinate (edge-2) at (18: 1*\beliefSize);
     \node[inner sep=0.7pt,outer xsep=-2pt] (title-2) at (18:1.25*1*\beliefSize) {\textcolor{greenish}{$1$}};
      \draw[opacity=0.5] (origin) -- (edge-2);

     \coordinate (3) at (306: 1.0*\beliefSize);
     \coordinate (edge-3) at (306: 1*\beliefSize);
     \node[inner sep=0.7pt,outer xsep=-2pt] (title-3) at (306:1.25*1*\beliefSize) {\textcolor{cyan}{$2$}};
      \draw[opacity=0.5] (origin) -- (edge-3);

     \coordinate (4) at (234: 0.0*\beliefSize);
     \coordinate (edge-4) at (234: 1*\beliefSize);
     \node[inner sep=0.7pt,outer xsep=-2pt] (title-4) at (234:1.25*1*\beliefSize) {\textcolor{maroon}{$3$}};
      \draw[opacity=0.5] (origin) -- (edge-4);

     \coordinate (5) at (162: 1.0*\beliefSize);
     \coordinate (edge-5) at (162: 1*\beliefSize);
     \node[inner sep=0.7pt,outer xsep=-2pt] (title-5) at (162:1.25*1*\beliefSize) {\textcolor{pink}{$4$}};
      \draw[opacity=0.5] (origin) -- (edge-5);

    \draw [fill=cyan, opacity=.35] (1)
                                \foreach \i in {2,...,5}{-- (\i)} --cycle;
\end{tikzpicture}
\\
\begin{tikzpicture}
    \draw[black,line width=0.5pt] (0,0) circle (1*\beliefSize);

    \coordinate (origin) at (0, 0);
    
     \coordinate (1) at (90: 0.502*\beliefSize);
     \coordinate (edge-1) at (90: 1*\beliefSize);
     \node[inner sep=0.7pt,outer xsep=-2pt] (title-1) at (90:1.25*1*\beliefSize) {\textcolor{blue}{$0$}};
     \draw[opacity=0.5] (origin) -- (edge-1);

     \coordinate (2) at (18: 0.539*\beliefSize);
     \coordinate (edge-2) at (18: 1*\beliefSize);
     \node[inner sep=0.7pt,outer xsep=-2pt] (title-2) at (18:1.25*1*\beliefSize) {\textcolor{greenish}{$1$}};
      \draw[opacity=0.5] (origin) -- (edge-2);

     \coordinate (3) at (306: 0.493*\beliefSize);
     \coordinate (edge-3) at (306: 1*\beliefSize);
     \node[inner sep=0.7pt,outer xsep=-2pt] (title-3) at (306:1.25*1*\beliefSize) {\textcolor{cyan}{$2$}};
      \draw[opacity=0.5] (origin) -- (edge-3);

     \coordinate (4) at (234: 0.480*\beliefSize);
     \coordinate (edge-4) at (234: 1*\beliefSize);
     \node[inner sep=0.7pt,outer xsep=-2pt] (title-4) at (234:1.25*1*\beliefSize) {\textcolor{maroon}{$3$}};
      \draw[opacity=0.5] (origin) -- (edge-4);

     \coordinate (5) at (162: 0.547*\beliefSize);
     \coordinate (edge-5) at (162: 1*\beliefSize);
     \node[inner sep=0.7pt,outer xsep=-2pt] (title-5) at (162:1.25*1*\beliefSize) {\textcolor{pink}{$4$}};
      \draw[opacity=0.5] (origin) -- (edge-5);

    \draw [fill=maroon, opacity=.35] (1)
                                \foreach \i in {2,...,5}{-- (\i)} --cycle;
\end{tikzpicture}

\end{tabular}
    }\hfill
    \subfloat[$t=17$ \label{fig:ts17}]{ 
    \newcommand{\beliefSize}{0.75}

\begin{tabular}{@{}c@{}}
\begin{tikzpicture}
    \draw[black,line width=0.5pt] (0,0) circle (1*\beliefSize);

    \coordinate (origin) at (0, 0);
    
     \coordinate (1) at (90: 1.0*\beliefSize);
     \coordinate (edge-1) at (90: 1*\beliefSize);
     \node[inner sep=0.7pt,outer xsep=-2pt] (title-1) at (90:1.25*1*\beliefSize) {\textcolor{blue}{$0$}};
     \draw[opacity=0.5] (origin) -- (edge-1);

     \coordinate (2) at (18: 1.0*\beliefSize);
     \coordinate (edge-2) at (18: 1*\beliefSize);
     \node[inner sep=0.7pt,outer xsep=-2pt] (title-2) at (18:1.25*1*\beliefSize) {\textcolor{greenish}{$1$}};
      \draw[opacity=0.5] (origin) -- (edge-2);

     \coordinate (3) at (306: 1.0*\beliefSize);
     \coordinate (edge-3) at (306: 1*\beliefSize);
     \node[inner sep=0.7pt,outer xsep=-2pt] (title-3) at (306:1.25*1*\beliefSize) {\textcolor{cyan}{$2$}};
      \draw[opacity=0.5] (origin) -- (edge-3);

     \coordinate (4) at (234: 0.0*\beliefSize);
     \coordinate (edge-4) at (234: 1*\beliefSize);
     \node[inner sep=0.7pt,outer xsep=-2pt] (title-4) at (234:1.25*1*\beliefSize) {\textcolor{maroon}{$3$}};
      \draw[opacity=0.5] (origin) -- (edge-4);

     \coordinate (5) at (162: 1.0*\beliefSize);
     \coordinate (edge-5) at (162: 1*\beliefSize);
     \node[inner sep=0.7pt,outer xsep=-2pt] (title-5) at (162:1.25*1*\beliefSize) {\textcolor{pink}{$4$}};
      \draw[opacity=0.5] (origin) -- (edge-5);

    \draw [fill=blue, opacity=.35] (1)
                                \foreach \i in {2,...,5}{-- (\i)} --cycle;
\end{tikzpicture}
\\
\begin{tikzpicture}
    \draw[black,line width=0.5pt] (0,0) circle (1*\beliefSize);

    \coordinate (origin) at (0, 0);
    
     \coordinate (1) at (90: 1.0*\beliefSize);
     \coordinate (edge-1) at (90: 1*\beliefSize);
     \node[inner sep=0.7pt,outer xsep=-2pt] (title-1) at (90:1.25*1*\beliefSize) {\textcolor{blue}{$0$}};
     \draw[opacity=0.5] (origin) -- (edge-1);

     \coordinate (2) at (18: 1.0*\beliefSize);
     \coordinate (edge-2) at (18: 1*\beliefSize);
     \node[inner sep=0.7pt,outer xsep=-2pt] (title-2) at (18:1.25*1*\beliefSize) {\textcolor{greenish}{$1$}};
      \draw[opacity=0.5] (origin) -- (edge-2);

     \coordinate (3) at (306: 1.0*\beliefSize);
     \coordinate (edge-3) at (306: 1*\beliefSize);
     \node[inner sep=0.7pt,outer xsep=-2pt] (title-3) at (306:1.25*1*\beliefSize) {\textcolor{cyan}{$2$}};
      \draw[opacity=0.5] (origin) -- (edge-3);

     \coordinate (4) at (234: 0.0*\beliefSize);
     \coordinate (edge-4) at (234: 1*\beliefSize);
     \node[inner sep=0.7pt,outer xsep=-2pt] (title-4) at (234:1.25*1*\beliefSize) {\textcolor{maroon}{$3$}};
      \draw[opacity=0.5] (origin) -- (edge-4);

     \coordinate (5) at (162: 1.0*\beliefSize);
     \coordinate (edge-5) at (162: 1*\beliefSize);
     \node[inner sep=0.7pt,outer xsep=-2pt] (title-5) at (162:1.25*1*\beliefSize) {\textcolor{pink}{$4$}};
      \draw[opacity=0.5] (origin) -- (edge-5);

    \draw [fill=cyan, opacity=.35] (1)
                                \foreach \i in {2,...,5}{-- (\i)} --cycle;
\end{tikzpicture}
\\
\begin{tikzpicture}
    \draw[black,line width=0.5pt] (0,0) circle (1*\beliefSize);

    \coordinate (origin) at (0, 0);
    
     \coordinate (1) at (90: 0.465*\beliefSize);
     \coordinate (edge-1) at (90: 1*\beliefSize);
     \node[inner sep=0.7pt,outer xsep=-2pt] (title-1) at (90:1.25*1*\beliefSize) {\textcolor{blue}{$0$}};
     \draw[opacity=0.5] (origin) -- (edge-1);

     \coordinate (2) at (18: 0.499*\beliefSize);
     \coordinate (edge-2) at (18: 1*\beliefSize);
     \node[inner sep=0.7pt,outer xsep=-2pt] (title-2) at (18:1.25*1*\beliefSize) {\textcolor{greenish}{$1$}};
      \draw[opacity=0.5] (origin) -- (edge-2);

     \coordinate (3) at (306: 0.587*\beliefSize);
     \coordinate (edge-3) at (306: 1*\beliefSize);
     \node[inner sep=0.7pt,outer xsep=-2pt] (title-3) at (306:1.25*1*\beliefSize) {\textcolor{cyan}{$2$}};
      \draw[opacity=0.5] (origin) -- (edge-3);

     \coordinate (4) at (234: 0.480*\beliefSize);
     \coordinate (edge-4) at (234: 1*\beliefSize);
     \node[inner sep=0.7pt,outer xsep=-2pt] (title-4) at (234:1.25*1*\beliefSize) {\textcolor{maroon}{$3$}};
      \draw[opacity=0.5] (origin) -- (edge-4);

     \coordinate (5) at (162: 0.456*\beliefSize);
     \coordinate (edge-5) at (162: 1*\beliefSize);
     \node[inner sep=0.7pt,outer xsep=-2pt] (title-5) at (162:1.25*1*\beliefSize) {\textcolor{pink}{$4$}};
      \draw[opacity=0.5] (origin) -- (edge-5);

    \draw [fill=maroon, opacity=.35] (1)
                                \foreach \i in {2,...,5}{-- (\i)} --cycle;
\end{tikzpicture}

\end{tabular}
    }\\
    \subfloat[ADHT $t=0$ \label{fig:ta0}]{ 
    \newcommand{\beliefSize}{0.75}

\begin{tabular}{@{}c@{}}
\begin{tikzpicture}
    \draw[black,line width=0.5pt] (0,0) circle (1*\beliefSize);

    \coordinate (origin) at (0, 0);
    
     \coordinate (1) at (90: 0.5*\beliefSize);
     \coordinate (edge-1) at (90: 1*\beliefSize);
     \node[inner sep=0.7pt,outer xsep=-2pt] (title-1) at (90:1.25*1*\beliefSize) {\textcolor{blue}{$0$}};
     \draw[opacity=0.5] (origin) -- (edge-1);

     \coordinate (2) at (18: 0.5*\beliefSize);
     \coordinate (edge-2) at (18: 1*\beliefSize);
     \node[inner sep=0.7pt,outer xsep=-2pt] (title-2) at (18:1.25*1*\beliefSize) {\textcolor{greenish}{$1$}};
      \draw[opacity=0.5] (origin) -- (edge-2);

     \coordinate (3) at (306: 0.5*\beliefSize);
     \coordinate (edge-3) at (306: 1*\beliefSize);
     \node[inner sep=0.7pt,outer xsep=-2pt] (title-3) at (306:1.25*1*\beliefSize) {\textcolor{cyan}{$2$}};
      \draw[opacity=0.5] (origin) -- (edge-3);

     \coordinate (4) at (234: 0.5*\beliefSize);
     \coordinate (edge-4) at (234: 1*\beliefSize);
     \node[inner sep=0.7pt,outer xsep=-2pt] (title-4) at (234:1.25*1*\beliefSize) {\textcolor{maroon}{$3$}};
      \draw[opacity=0.5] (origin) -- (edge-4);

     \coordinate (5) at (162: 0.5*\beliefSize);
     \coordinate (edge-5) at (162: 1*\beliefSize);
     \node[inner sep=0.7pt,outer xsep=-2pt] (title-5) at (162:1.25*1*\beliefSize) {\textcolor{pink}{$4$}};
      \draw[opacity=0.5] (origin) -- (edge-5);

    \draw [fill=blue, opacity=.35] (1)
                                \foreach \i in {2,...,5}{-- (\i)} --cycle;
\end{tikzpicture}
\\
\begin{tikzpicture}
    \draw[black,line width=0.5pt] (0,0) circle (1*\beliefSize);

    \coordinate (origin) at (0, 0);
    
     \coordinate (1) at (90: 0.5*\beliefSize);
     \coordinate (edge-1) at (90: 1*\beliefSize);
     \node[inner sep=0.7pt,outer xsep=-2pt] (title-1) at (90:1.25*1*\beliefSize) {\textcolor{blue}{$0$}};
     \draw[opacity=0.5] (origin) -- (edge-1);

     \coordinate (2) at (18: 0.5*\beliefSize);
     \coordinate (edge-2) at (18: 1*\beliefSize);
     \node[inner sep=0.7pt,outer xsep=-2pt] (title-2) at (18:1.25*1*\beliefSize) {\textcolor{greenish}{$1$}};
      \draw[opacity=0.5] (origin) -- (edge-2);

     \coordinate (3) at (306: 0.5*\beliefSize);
     \coordinate (edge-3) at (306: 1*\beliefSize);
     \node[inner sep=0.7pt,outer xsep=-2pt] (title-3) at (306:1.25*1*\beliefSize) {\textcolor{cyan}{$2$}};
      \draw[opacity=0.5] (origin) -- (edge-3);

     \coordinate (4) at (234: 0.5*\beliefSize);
     \coordinate (edge-4) at (234: 1*\beliefSize);
     \node[inner sep=0.7pt,outer xsep=-2pt] (title-4) at (234:1.25*1*\beliefSize) {\textcolor{maroon}{$3$}};
      \draw[opacity=0.5] (origin) -- (edge-4);

     \coordinate (5) at (162: 0.5*\beliefSize);
     \coordinate (edge-5) at (162: 1*\beliefSize);
     \node[inner sep=0.7pt,outer xsep=-2pt] (title-5) at (162:1.25*1*\beliefSize) {\textcolor{pink}{$4$}};
      \draw[opacity=0.5] (origin) -- (edge-5);

    \draw [fill=cyan, opacity=.35] (1)
                                \foreach \i in {2,...,5}{-- (\i)} --cycle;
\end{tikzpicture}
\\
\begin{tikzpicture}
    \draw[black,line width=0.5pt] (0,0) circle (1*\beliefSize);

    \coordinate (origin) at (0, 0);
    
     \coordinate (1) at (90: 0.536*\beliefSize);
     \coordinate (edge-1) at (90: 1*\beliefSize);
     \node[inner sep=0.7pt,outer xsep=-2pt] (title-1) at (90:1.25*1*\beliefSize) {\textcolor{blue}{$0$}};
     \draw[opacity=0.5] (origin) -- (edge-1);

     \coordinate (2) at (18: 0.502*\beliefSize);
     \coordinate (edge-2) at (18: 1*\beliefSize);
     \node[inner sep=0.7pt,outer xsep=-2pt] (title-2) at (18:1.25*1*\beliefSize) {\textcolor{greenish}{$1$}};
      \draw[opacity=0.5] (origin) -- (edge-2);

     \coordinate (3) at (306: 0.583*\beliefSize);
     \coordinate (edge-3) at (306: 1*\beliefSize);
     \node[inner sep=0.7pt,outer xsep=-2pt] (title-3) at (306:1.25*1*\beliefSize) {\textcolor{cyan}{$2$}};
      \draw[opacity=0.5] (origin) -- (edge-3);

     \coordinate (4) at (234: 0.537*\beliefSize);
     \coordinate (edge-4) at (234: 1*\beliefSize);
     \node[inner sep=0.7pt,outer xsep=-2pt] (title-4) at (234:1.25*1*\beliefSize) {\textcolor{maroon}{$3$}};
      \draw[opacity=0.5] (origin) -- (edge-4);

     \coordinate (5) at (162: 0.447*\beliefSize);
     \coordinate (edge-5) at (162: 1*\beliefSize);
     \node[inner sep=0.7pt,outer xsep=-2pt] (title-5) at (162:1.25*1*\beliefSize) {\textcolor{pink}{$4$}};
      \draw[opacity=0.5] (origin) -- (edge-5);

    \draw [fill=maroon, opacity=.35] (1)
                                \foreach \i in {2,...,5}{-- (\i)} --cycle;
\end{tikzpicture}

\end{tabular}
    }\hfill
    \subfloat[$t=4$ \label{fig:ta4}]{
    \newcommand{\beliefSize}{0.75}

\begin{tabular}{@{}c@{}}
\begin{tikzpicture}
    \draw[black,line width=0.5pt] (0,0) circle (1*\beliefSize);

    \coordinate (origin) at (0, 0);
    
     \coordinate (1) at (90: 1.0*\beliefSize);
     \coordinate (edge-1) at (90: 1*\beliefSize);
     \node[inner sep=0.7pt,outer xsep=-2pt] (title-1) at (90:1.25*1*\beliefSize) {\textcolor{blue}{$0$}};
     \draw[opacity=0.5] (origin) -- (edge-1);

     \coordinate (2) at (18: 0.5*\beliefSize);
     \coordinate (edge-2) at (18: 1*\beliefSize);
     \node[inner sep=0.7pt,outer xsep=-2pt] (title-2) at (18:1.25*1*\beliefSize) {\textcolor{greenish}{$1$}};
      \draw[opacity=0.5] (origin) -- (edge-2);

     \coordinate (3) at (306: 1.0*\beliefSize);
     \coordinate (edge-3) at (306: 1*\beliefSize);
     \node[inner sep=0.7pt,outer xsep=-2pt] (title-3) at (306:1.25*1*\beliefSize) {\textcolor{cyan}{$2$}};
      \draw[opacity=0.5] (origin) -- (edge-3);

     \coordinate (4) at (234: 0.5*\beliefSize);
     \coordinate (edge-4) at (234: 1*\beliefSize);
     \node[inner sep=0.7pt,outer xsep=-2pt] (title-4) at (234:1.25*1*\beliefSize) {\textcolor{maroon}{$3$}};
      \draw[opacity=0.5] (origin) -- (edge-4);

     \coordinate (5) at (162: 0.5*\beliefSize);
     \coordinate (edge-5) at (162: 1*\beliefSize);
     \node[inner sep=0.7pt,outer xsep=-2pt] (title-5) at (162:1.25*1*\beliefSize) {\textcolor{pink}{$4$}};
      \draw[opacity=0.5] (origin) -- (edge-5);

    \draw [fill=blue, opacity=.35] (1)
                                \foreach \i in {2,...,5}{-- (\i)} --cycle;
\end{tikzpicture}
\\
\begin{tikzpicture}
    \draw[black,line width=0.5pt] (0,0) circle (1*\beliefSize);

    \coordinate (origin) at (0, 0);
    
     \coordinate (1) at (90: 1.0*\beliefSize);
     \coordinate (edge-1) at (90: 1*\beliefSize);
     \node[inner sep=0.7pt,outer xsep=-2pt] (title-1) at (90:1.25*1*\beliefSize) {\textcolor{blue}{$0$}};
     \draw[opacity=0.5] (origin) -- (edge-1);

     \coordinate (2) at (18: 1.0*\beliefSize);
     \coordinate (edge-2) at (18: 1*\beliefSize);
     \node[inner sep=0.7pt,outer xsep=-2pt] (title-2) at (18:1.25*1*\beliefSize) {\textcolor{greenish}{$1$}};
      \draw[opacity=0.5] (origin) -- (edge-2);

     \coordinate (3) at (306: 0.15*\beliefSize);
     \coordinate (edge-3) at (306: 1*\beliefSize);
     \node[inner sep=0.7pt,outer xsep=-2pt] (title-3) at (306:1.25*1*\beliefSize) {\textcolor{cyan}{$2$}};
      \draw[opacity=0.5] (origin) -- (edge-3);

     \coordinate (4) at (234: 0.0*\beliefSize);
     \coordinate (edge-4) at (234: 1*\beliefSize);
     \node[inner sep=0.7pt,outer xsep=-2pt] (title-4) at (234:1.25*1*\beliefSize) {\textcolor{maroon}{$3$}};
      \draw[opacity=0.5] (origin) -- (edge-4);

     \coordinate (5) at (162: 0.5*\beliefSize);
     \coordinate (edge-5) at (162: 1*\beliefSize);
     \node[inner sep=0.7pt,outer xsep=-2pt] (title-5) at (162:1.25*1*\beliefSize) {\textcolor{pink}{$4$}};
      \draw[opacity=0.5] (origin) -- (edge-5);

    \draw [fill=cyan, opacity=.35] (1)
                                \foreach \i in {2,...,5}{-- (\i)} --cycle;
\end{tikzpicture}
\\
\begin{tikzpicture}
    \draw[black,line width=0.5pt] (0,0) circle (1*\beliefSize);

    \coordinate (origin) at (0, 0);
    
     \coordinate (1) at (90: 0.473*\beliefSize);
     \coordinate (edge-1) at (90: 1*\beliefSize);
     \node[inner sep=0.7pt,outer xsep=-2pt] (title-1) at (90:1.25*1*\beliefSize) {\textcolor{blue}{$0$}};
     \draw[opacity=0.5] (origin) -- (edge-1);

     \coordinate (2) at (18: 0.541*\beliefSize);
     \coordinate (edge-2) at (18: 1*\beliefSize);
     \node[inner sep=0.7pt,outer xsep=-2pt] (title-2) at (18:1.25*1*\beliefSize) {\textcolor{greenish}{$1$}};
      \draw[opacity=0.5] (origin) -- (edge-2);

     \coordinate (3) at (306: 0.481*\beliefSize);
     \coordinate (edge-3) at (306: 1*\beliefSize);
     \node[inner sep=0.7pt,outer xsep=-2pt] (title-3) at (306:1.25*1*\beliefSize) {\textcolor{cyan}{$2$}};
      \draw[opacity=0.5] (origin) -- (edge-3);

     \coordinate (4) at (234: 0.467*\beliefSize);
     \coordinate (edge-4) at (234: 1*\beliefSize);
     \node[inner sep=0.7pt,outer xsep=-2pt] (title-4) at (234:1.25*1*\beliefSize) {\textcolor{maroon}{$3$}};
      \draw[opacity=0.5] (origin) -- (edge-4);

     \coordinate (5) at (162: 0.556*\beliefSize);
     \coordinate (edge-5) at (162: 1*\beliefSize);
     \node[inner sep=0.7pt,outer xsep=-2pt] (title-5) at (162:1.25*1*\beliefSize) {\textcolor{pink}{$4$}};
      \draw[opacity=0.5] (origin) -- (edge-5);

    \draw [fill=maroon, opacity=.35] (1)
                                \foreach \i in {2,...,5}{-- (\i)} --cycle;
\end{tikzpicture}

\end{tabular}
    }\hfill
    \subfloat[$t=5$ \label{fig:ta5}]{
    \newcommand{\beliefSize}{0.75}

\begin{tabular}{@{}c@{}}
\begin{tikzpicture}
    \draw[black,line width=0.5pt] (0,0) circle (1*\beliefSize);

    \coordinate (origin) at (0, 0);
    
     \coordinate (1) at (90: 1.0*\beliefSize);
     \coordinate (edge-1) at (90: 1*\beliefSize);
     \node[inner sep=0.7pt,outer xsep=-2pt] (title-1) at (90:1.25*1*\beliefSize) {\textcolor{blue}{$0$}};
     \draw[opacity=0.5] (origin) -- (edge-1);

     \coordinate (2) at (18: 0.5*\beliefSize);
     \coordinate (edge-2) at (18: 1*\beliefSize);
     \node[inner sep=0.7pt,outer xsep=-2pt] (title-2) at (18:1.25*1*\beliefSize) {\textcolor{greenish}{$1$}};
      \draw[opacity=0.5] (origin) -- (edge-2);

     \coordinate (3) at (306: 1.0*\beliefSize);
     \coordinate (edge-3) at (306: 1*\beliefSize);
     \node[inner sep=0.7pt,outer xsep=-2pt] (title-3) at (306:1.25*1*\beliefSize) {\textcolor{cyan}{$2$}};
      \draw[opacity=0.5] (origin) -- (edge-3);

     \coordinate (4) at (234: 0.5*\beliefSize);
     \coordinate (edge-4) at (234: 1*\beliefSize);
     \node[inner sep=0.7pt,outer xsep=-2pt] (title-4) at (234:1.25*1*\beliefSize) {\textcolor{maroon}{$3$}};
      \draw[opacity=0.5] (origin) -- (edge-4);

     \coordinate (5) at (162: 0.5*\beliefSize);
     \coordinate (edge-5) at (162: 1*\beliefSize);
     \node[inner sep=0.7pt,outer xsep=-2pt] (title-5) at (162:1.25*1*\beliefSize) {\textcolor{pink}{$4$}};
      \draw[opacity=0.5] (origin) -- (edge-5);

    \draw [fill=blue, opacity=.35] (1)
                                \foreach \i in {2,...,5}{-- (\i)} --cycle;
\end{tikzpicture}
\\
\begin{tikzpicture}
    \draw[black,line width=0.5pt] (0,0) circle (1*\beliefSize);

    \coordinate (origin) at (0, 0);
    
     \coordinate (1) at (90: 1.0*\beliefSize);
     \coordinate (edge-1) at (90: 1*\beliefSize);
     \node[inner sep=0.7pt,outer xsep=-2pt] (title-1) at (90:1.25*1*\beliefSize) {\textcolor{blue}{$0$}};
     \draw[opacity=0.5] (origin) -- (edge-1);

     \coordinate (2) at (18: 1.0*\beliefSize);
     \coordinate (edge-2) at (18: 1*\beliefSize);
     \node[inner sep=0.7pt,outer xsep=-2pt] (title-2) at (18:1.25*1*\beliefSize) {\textcolor{greenish}{$1$}};
      \draw[opacity=0.5] (origin) -- (edge-2);

     \coordinate (3) at (306: 1.0*\beliefSize);
     \coordinate (edge-3) at (306: 1*\beliefSize);
     \node[inner sep=0.7pt,outer xsep=-2pt] (title-3) at (306:1.25*1*\beliefSize) {\textcolor{cyan}{$2$}};
      \draw[opacity=0.5] (origin) -- (edge-3);

     \coordinate (4) at (234: 0.0*\beliefSize);
     \coordinate (edge-4) at (234: 1*\beliefSize);
     \node[inner sep=0.7pt,outer xsep=-2pt] (title-4) at (234:1.25*1*\beliefSize) {\textcolor{maroon}{$3$}};
      \draw[opacity=0.5] (origin) -- (edge-4);

     \coordinate (5) at (162: 1.0*\beliefSize);
     \coordinate (edge-5) at (162: 1*\beliefSize);
     \node[inner sep=0.7pt,outer xsep=-2pt] (title-5) at (162:1.25*1*\beliefSize) {\textcolor{pink}{$4$}};
      \draw[opacity=0.5] (origin) -- (edge-5);

    \draw [fill=cyan, opacity=.35] (1)
                                \foreach \i in {2,...,5}{-- (\i)} --cycle;
\end{tikzpicture}
\\
\begin{tikzpicture}
    \draw[black,line width=0.5pt] (0,0) circle (1*\beliefSize);

    \coordinate (origin) at (0, 0);
    
     \coordinate (1) at (90: 0.435*\beliefSize);
     \coordinate (edge-1) at (90: 1*\beliefSize);
     \node[inner sep=0.7pt,outer xsep=-2pt] (title-1) at (90:1.25*1*\beliefSize) {\textcolor{blue}{$0$}};
     \draw[opacity=0.5] (origin) -- (edge-1);

     \coordinate (2) at (18: 0.491*\beliefSize);
     \coordinate (edge-2) at (18: 1*\beliefSize);
     \node[inner sep=0.7pt,outer xsep=-2pt] (title-2) at (18:1.25*1*\beliefSize) {\textcolor{greenish}{$1$}};
      \draw[opacity=0.5] (origin) -- (edge-2);

     \coordinate (3) at (306: 0.534*\beliefSize);
     \coordinate (edge-3) at (306: 1*\beliefSize);
     \node[inner sep=0.7pt,outer xsep=-2pt] (title-3) at (306:1.25*1*\beliefSize) {\textcolor{cyan}{$2$}};
      \draw[opacity=0.5] (origin) -- (edge-3);

     \coordinate (4) at (234: 0.540*\beliefSize);
     \coordinate (edge-4) at (234: 1*\beliefSize);
     \node[inner sep=0.7pt,outer xsep=-2pt] (title-4) at (234:1.25*1*\beliefSize) {\textcolor{maroon}{$3$}};
      \draw[opacity=0.5] (origin) -- (edge-4);

     \coordinate (5) at (162: 0.536*\beliefSize);
     \coordinate (edge-5) at (162: 1*\beliefSize);
     \node[inner sep=0.7pt,outer xsep=-2pt] (title-5) at (162:1.25*1*\beliefSize) {\textcolor{pink}{$4$}};
      \draw[opacity=0.5] (origin) -- (edge-5);

    \draw [fill=maroon, opacity=.35] (1)
                                \foreach \i in {2,...,5}{-- (\i)} --cycle;
\end{tikzpicture}

\end{tabular}
    }\hfill
    \subfloat[$t=6$ \label{fig:ta6}]{
    \newcommand{\beliefSize}{0.75}

\begin{tabular}{@{}c@{}}
\begin{tikzpicture}
    \draw[black,line width=0.5pt] (0,0) circle (1*\beliefSize);

    \coordinate (origin) at (0, 0);
    
     \coordinate (1) at (90: 1.0*\beliefSize);
     \coordinate (edge-1) at (90: 1*\beliefSize);
     \node[inner sep=0.7pt,outer xsep=-2pt] (title-1) at (90:1.25*1*\beliefSize) {\textcolor{blue}{$0$}};
     \draw[opacity=0.5] (origin) -- (edge-1);

     \coordinate (2) at (18: 0.5*\beliefSize);
     \coordinate (edge-2) at (18: 1*\beliefSize);
     \node[inner sep=0.7pt,outer xsep=-2pt] (title-2) at (18:1.25*1*\beliefSize) {\textcolor{greenish}{$1$}};
      \draw[opacity=0.5] (origin) -- (edge-2);

     \coordinate (3) at (306: 1.0*\beliefSize);
     \coordinate (edge-3) at (306: 1*\beliefSize);
     \node[inner sep=0.7pt,outer xsep=-2pt] (title-3) at (306:1.25*1*\beliefSize) {\textcolor{cyan}{$2$}};
      \draw[opacity=0.5] (origin) -- (edge-3);

     \coordinate (4) at (234: 0.5*\beliefSize);
     \coordinate (edge-4) at (234: 1*\beliefSize);
     \node[inner sep=0.7pt,outer xsep=-2pt] (title-4) at (234:1.25*1*\beliefSize) {\textcolor{maroon}{$3$}};
      \draw[opacity=0.5] (origin) -- (edge-4);

     \coordinate (5) at (162: 0.5*\beliefSize);
     \coordinate (edge-5) at (162: 1*\beliefSize);
     \node[inner sep=0.7pt,outer xsep=-2pt] (title-5) at (162:1.25*1*\beliefSize) {\textcolor{pink}{$4$}};
      \draw[opacity=0.5] (origin) -- (edge-5);

    \draw [fill=blue, opacity=.35] (1)
                                \foreach \i in {2,...,5}{-- (\i)} --cycle;
\end{tikzpicture}
\\
\begin{tikzpicture}
    \draw[black,line width=0.5pt] (0,0) circle (1*\beliefSize);

    \coordinate (origin) at (0, 0);
    
     \coordinate (1) at (90: 1.0*\beliefSize);
     \coordinate (edge-1) at (90: 1*\beliefSize);
     \node[inner sep=0.7pt,outer xsep=-2pt] (title-1) at (90:1.25*1*\beliefSize) {\textcolor{blue}{$0$}};
     \draw[opacity=0.5] (origin) -- (edge-1);

     \coordinate (2) at (18: 1.0*\beliefSize);
     \coordinate (edge-2) at (18: 1*\beliefSize);
     \node[inner sep=0.7pt,outer xsep=-2pt] (title-2) at (18:1.25*1*\beliefSize) {\textcolor{greenish}{$1$}};
      \draw[opacity=0.5] (origin) -- (edge-2);

     \coordinate (3) at (306: 1.0*\beliefSize);
     \coordinate (edge-3) at (306: 1*\beliefSize);
     \node[inner sep=0.7pt,outer xsep=-2pt] (title-3) at (306:1.25*1*\beliefSize) {\textcolor{cyan}{$2$}};
      \draw[opacity=0.5] (origin) -- (edge-3);

     \coordinate (4) at (234: 0.0*\beliefSize);
     \coordinate (edge-4) at (234: 1*\beliefSize);
     \node[inner sep=0.7pt,outer xsep=-2pt] (title-4) at (234:1.25*1*\beliefSize) {\textcolor{maroon}{$3$}};
      \draw[opacity=0.5] (origin) -- (edge-4);

     \coordinate (5) at (162: 1.0*\beliefSize);
     \coordinate (edge-5) at (162: 1*\beliefSize);
     \node[inner sep=0.7pt,outer xsep=-2pt] (title-5) at (162:1.25*1*\beliefSize) {\textcolor{pink}{$4$}};
      \draw[opacity=0.5] (origin) -- (edge-5);

    \draw [fill=cyan, opacity=.35] (1)
                                \foreach \i in {2,...,5}{-- (\i)} --cycle;
\end{tikzpicture}
\\
\begin{tikzpicture}
    \draw[black,line width=0.5pt] (0,0) circle (1*\beliefSize);

    \coordinate (origin) at (0, 0);
    
     \coordinate (1) at (90: 0.569*\beliefSize);
     \coordinate (edge-1) at (90: 1*\beliefSize);
     \node[inner sep=0.7pt,outer xsep=-2pt] (title-1) at (90:1.25*1*\beliefSize) {\textcolor{blue}{$0$}};
     \draw[opacity=0.5] (origin) -- (edge-1);

     \coordinate (2) at (18: 0.586*\beliefSize);
     \coordinate (edge-2) at (18: 1*\beliefSize);
     \node[inner sep=0.7pt,outer xsep=-2pt] (title-2) at (18:1.25*1*\beliefSize) {\textcolor{greenish}{$1$}};
      \draw[opacity=0.5] (origin) -- (edge-2);

     \coordinate (3) at (306: 0.559*\beliefSize);
     \coordinate (edge-3) at (306: 1*\beliefSize);
     \node[inner sep=0.7pt,outer xsep=-2pt] (title-3) at (306:1.25*1*\beliefSize) {\textcolor{cyan}{$2$}};
      \draw[opacity=0.5] (origin) -- (edge-3);

     \coordinate (4) at (234: 0.539*\beliefSize);
     \coordinate (edge-4) at (234: 1*\beliefSize);
     \node[inner sep=0.7pt,outer xsep=-2pt] (title-4) at (234:1.25*1*\beliefSize) {\textcolor{maroon}{$3$}};
      \draw[opacity=0.5] (origin) -- (edge-4);

     \coordinate (5) at (162: 0.511*\beliefSize);
     \coordinate (edge-5) at (162: 1*\beliefSize);
     \node[inner sep=0.7pt,outer xsep=-2pt] (title-5) at (162:1.25*1*\beliefSize) {\textcolor{pink}{$4$}};
      \draw[opacity=0.5] (origin) -- (edge-5);

    \draw [fill=maroon, opacity=.35] (1)
                                \foreach \i in {2,...,5}{-- (\i)} --cycle;
\end{tikzpicture}

\end{tabular}
    }\hfill
    \subfloat[$t=7$ \label{fig:ta7}]{ 
    \newcommand{\beliefSize}{0.75}

\begin{tabular}{@{}c@{}}
\begin{tikzpicture}
    \draw[black,line width=0.5pt] (0,0) circle (1*\beliefSize);

    \coordinate (origin) at (0, 0);
    
     \coordinate (1) at (90: 1.0*\beliefSize);
     \coordinate (edge-1) at (90: 1*\beliefSize);
     \node[inner sep=0.7pt,outer xsep=-2pt] (title-1) at (90:1.25*1*\beliefSize) {\textcolor{blue}{$0$}};
     \draw[opacity=0.5] (origin) -- (edge-1);

     \coordinate (2) at (18: 0.5*\beliefSize);
     \coordinate (edge-2) at (18: 1*\beliefSize);
     \node[inner sep=0.7pt,outer xsep=-2pt] (title-2) at (18:1.25*1*\beliefSize) {\textcolor{greenish}{$1$}};
      \draw[opacity=0.5] (origin) -- (edge-2);

     \coordinate (3) at (306: 1.0*\beliefSize);
     \coordinate (edge-3) at (306: 1*\beliefSize);
     \node[inner sep=0.7pt,outer xsep=-2pt] (title-3) at (306:1.25*1*\beliefSize) {\textcolor{cyan}{$2$}};
      \draw[opacity=0.5] (origin) -- (edge-3);

     \coordinate (4) at (234: 0.5*\beliefSize);
     \coordinate (edge-4) at (234: 1*\beliefSize);
     \node[inner sep=0.7pt,outer xsep=-2pt] (title-4) at (234:1.25*1*\beliefSize) {\textcolor{maroon}{$3$}};
      \draw[opacity=0.5] (origin) -- (edge-4);

     \coordinate (5) at (162: 0.5*\beliefSize);
     \coordinate (edge-5) at (162: 1*\beliefSize);
     \node[inner sep=0.7pt,outer xsep=-2pt] (title-5) at (162:1.25*1*\beliefSize) {\textcolor{pink}{$4$}};
      \draw[opacity=0.5] (origin) -- (edge-5);

    \draw [fill=blue, opacity=.35] (1)
                                \foreach \i in {2,...,5}{-- (\i)} --cycle;
\end{tikzpicture}
\\
\begin{tikzpicture}
    \draw[black,line width=0.5pt] (0,0) circle (1*\beliefSize);

    \coordinate (origin) at (0, 0);
    
     \coordinate (1) at (90: 1.0*\beliefSize);
     \coordinate (edge-1) at (90: 1*\beliefSize);
     \node[inner sep=0.7pt,outer xsep=-2pt] (title-1) at (90:1.25*1*\beliefSize) {\textcolor{blue}{$0$}};
     \draw[opacity=0.5] (origin) -- (edge-1);

     \coordinate (2) at (18: 1.0*\beliefSize);
     \coordinate (edge-2) at (18: 1*\beliefSize);
     \node[inner sep=0.7pt,outer xsep=-2pt] (title-2) at (18:1.25*1*\beliefSize) {\textcolor{greenish}{$1$}};
      \draw[opacity=0.5] (origin) -- (edge-2);

     \coordinate (3) at (306: 1.0*\beliefSize);
     \coordinate (edge-3) at (306: 1*\beliefSize);
     \node[inner sep=0.7pt,outer xsep=-2pt] (title-3) at (306:1.25*1*\beliefSize) {\textcolor{cyan}{$2$}};
      \draw[opacity=0.5] (origin) -- (edge-3);

     \coordinate (4) at (234: 0.0*\beliefSize);
     \coordinate (edge-4) at (234: 1*\beliefSize);
     \node[inner sep=0.7pt,outer xsep=-2pt] (title-4) at (234:1.25*1*\beliefSize) {\textcolor{maroon}{$3$}};
      \draw[opacity=0.5] (origin) -- (edge-4);

     \coordinate (5) at (162: 1.0*\beliefSize);
     \coordinate (edge-5) at (162: 1*\beliefSize);
     \node[inner sep=0.7pt,outer xsep=-2pt] (title-5) at (162:1.25*1*\beliefSize) {\textcolor{pink}{$4$}};
      \draw[opacity=0.5] (origin) -- (edge-5);

    \draw [fill=cyan, opacity=.35] (1)
                                \foreach \i in {2,...,5}{-- (\i)} --cycle;
\end{tikzpicture}
\\
\begin{tikzpicture}
    \draw[black,line width=0.5pt] (0,0) circle (1*\beliefSize);

    \coordinate (origin) at (0, 0);
    
     \coordinate (1) at (90: 0.533*\beliefSize);
     \coordinate (edge-1) at (90: 1*\beliefSize);
     \node[inner sep=0.7pt,outer xsep=-2pt] (title-1) at (90:1.25*1*\beliefSize) {\textcolor{blue}{$0$}};
     \draw[opacity=0.5] (origin) -- (edge-1);

     \coordinate (2) at (18: 0.477*\beliefSize);
     \coordinate (edge-2) at (18: 1*\beliefSize);
     \node[inner sep=0.7pt,outer xsep=-2pt] (title-2) at (18:1.25*1*\beliefSize) {\textcolor{greenish}{$1$}};
      \draw[opacity=0.5] (origin) -- (edge-2);

     \coordinate (3) at (306: 0.536*\beliefSize);
     \coordinate (edge-3) at (306: 1*\beliefSize);
     \node[inner sep=0.7pt,outer xsep=-2pt] (title-3) at (306:1.25*1*\beliefSize) {\textcolor{cyan}{$2$}};
      \draw[opacity=0.5] (origin) -- (edge-3);

     \coordinate (4) at (234: 0.505*\beliefSize);
     \coordinate (edge-4) at (234: 1*\beliefSize);
     \node[inner sep=0.7pt,outer xsep=-2pt] (title-4) at (234:1.25*1*\beliefSize) {\textcolor{maroon}{$3$}};
      \draw[opacity=0.5] (origin) -- (edge-4);

     \coordinate (5) at (162: 0.494*\beliefSize);
     \coordinate (edge-5) at (162: 1*\beliefSize);
     \node[inner sep=0.7pt,outer xsep=-2pt] (title-5) at (162:1.25*1*\beliefSize) {\textcolor{pink}{$4$}};
      \draw[opacity=0.5] (origin) -- (edge-5);

    \draw [fill=maroon, opacity=.35] (1)
                                \foreach \i in {2,...,5}{-- (\i)} --cycle;
\end{tikzpicture}

\end{tabular}
    }\hfill
    \subfloat[$t=8$ \label{fig:ta8}]{ 
    \newcommand{\beliefSize}{0.75}

\begin{tabular}{@{}c@{}}
\begin{tikzpicture}
    \draw[black,line width=0.5pt] (0,0) circle (1*\beliefSize);

    \coordinate (origin) at (0, 0);
    
     \coordinate (1) at (90: 1.0*\beliefSize);
     \coordinate (edge-1) at (90: 1*\beliefSize);
     \node[inner sep=0.7pt,outer xsep=-2pt] (title-1) at (90:1.25*1*\beliefSize) {\textcolor{blue}{$0$}};
     \draw[opacity=0.5] (origin) -- (edge-1);

     \coordinate (2) at (18: 1.0*\beliefSize);
     \coordinate (edge-2) at (18: 1*\beliefSize);
     \node[inner sep=0.7pt,outer xsep=-2pt] (title-2) at (18:1.25*1*\beliefSize) {\textcolor{greenish}{$1$}};
      \draw[opacity=0.5] (origin) -- (edge-2);

     \coordinate (3) at (306: 1.0*\beliefSize);
     \coordinate (edge-3) at (306: 1*\beliefSize);
     \node[inner sep=0.7pt,outer xsep=-2pt] (title-3) at (306:1.25*1*\beliefSize) {\textcolor{cyan}{$2$}};
      \draw[opacity=0.5] (origin) -- (edge-3);

     \coordinate (4) at (234: 0.0*\beliefSize);
     \coordinate (edge-4) at (234: 1*\beliefSize);
     \node[inner sep=0.7pt,outer xsep=-2pt] (title-4) at (234:1.25*1*\beliefSize) {\textcolor{maroon}{$3$}};
      \draw[opacity=0.5] (origin) -- (edge-4);

     \coordinate (5) at (162: 1.0*\beliefSize);
     \coordinate (edge-5) at (162: 1*\beliefSize);
     \node[inner sep=0.7pt,outer xsep=-2pt] (title-5) at (162:1.25*1*\beliefSize) {\textcolor{pink}{$4$}};
      \draw[opacity=0.5] (origin) -- (edge-5);

    \draw [fill=blue, opacity=.35] (1)
                                \foreach \i in {2,...,5}{-- (\i)} --cycle;
\end{tikzpicture}
\\
\begin{tikzpicture}
    \draw[black,line width=0.5pt] (0,0) circle (1*\beliefSize);

    \coordinate (origin) at (0, 0);
    
     \coordinate (1) at (90: 1.0*\beliefSize);
     \coordinate (edge-1) at (90: 1*\beliefSize);
     \node[inner sep=0.7pt,outer xsep=-2pt] (title-1) at (90:1.25*1*\beliefSize) {\textcolor{blue}{$0$}};
     \draw[opacity=0.5] (origin) -- (edge-1);

     \coordinate (2) at (18: 1.0*\beliefSize);
     \coordinate (edge-2) at (18: 1*\beliefSize);
     \node[inner sep=0.7pt,outer xsep=-2pt] (title-2) at (18:1.25*1*\beliefSize) {\textcolor{greenish}{$1$}};
      \draw[opacity=0.5] (origin) -- (edge-2);

     \coordinate (3) at (306: 1.0*\beliefSize);
     \coordinate (edge-3) at (306: 1*\beliefSize);
     \node[inner sep=0.7pt,outer xsep=-2pt] (title-3) at (306:1.25*1*\beliefSize) {\textcolor{cyan}{$2$}};
      \draw[opacity=0.5] (origin) -- (edge-3);

     \coordinate (4) at (234: 0.0*\beliefSize);
     \coordinate (edge-4) at (234: 1*\beliefSize);
     \node[inner sep=0.7pt,outer xsep=-2pt] (title-4) at (234:1.25*1*\beliefSize) {\textcolor{maroon}{$3$}};
      \draw[opacity=0.5] (origin) -- (edge-4);

     \coordinate (5) at (162: 1.0*\beliefSize);
     \coordinate (edge-5) at (162: 1*\beliefSize);
     \node[inner sep=0.7pt,outer xsep=-2pt] (title-5) at (162:1.25*1*\beliefSize) {\textcolor{pink}{$4$}};
      \draw[opacity=0.5] (origin) -- (edge-5);

    \draw [fill=cyan, opacity=.35] (1)
                                \foreach \i in {2,...,5}{-- (\i)} --cycle;
\end{tikzpicture}
\\
\begin{tikzpicture}
    \draw[black,line width=0.5pt] (0,0) circle (1*\beliefSize);

    \coordinate (origin) at (0, 0);
    
     \coordinate (1) at (90: 0.552*\beliefSize);
     \coordinate (edge-1) at (90: 1*\beliefSize);
     \node[inner sep=0.7pt,outer xsep=-2pt] (title-1) at (90:1.25*1*\beliefSize) {\textcolor{blue}{$0$}};
     \draw[opacity=0.5] (origin) -- (edge-1);

     \coordinate (2) at (18: 0.599*\beliefSize);
     \coordinate (edge-2) at (18: 1*\beliefSize);
     \node[inner sep=0.7pt,outer xsep=-2pt] (title-2) at (18:1.25*1*\beliefSize) {\textcolor{greenish}{$1$}};
      \draw[opacity=0.5] (origin) -- (edge-2);

     \coordinate (3) at (306: 0.523*\beliefSize);
     \coordinate (edge-3) at (306: 1*\beliefSize);
     \node[inner sep=0.7pt,outer xsep=-2pt] (title-3) at (306:1.25*1*\beliefSize) {\textcolor{cyan}{$2$}};
      \draw[opacity=0.5] (origin) -- (edge-3);

     \coordinate (4) at (234: 0.468*\beliefSize);
     \coordinate (edge-4) at (234: 1*\beliefSize);
     \node[inner sep=0.7pt,outer xsep=-2pt] (title-4) at (234:1.25*1*\beliefSize) {\textcolor{maroon}{$3$}};
      \draw[opacity=0.5] (origin) -- (edge-4);

     \coordinate (5) at (162: 0.526*\beliefSize);
     \coordinate (edge-5) at (162: 1*\beliefSize);
     \node[inner sep=0.7pt,outer xsep=-2pt] (title-5) at (162:1.25*1*\beliefSize) {\textcolor{pink}{$4$}};
      \draw[opacity=0.5] (origin) -- (edge-5);

    \draw [fill=maroon, opacity=.35] (1)
                                \foreach \i in {2,...,5}{-- (\i)} --cycle;
\end{tikzpicture}

\end{tabular}
    }
    \caption{\added{Evolution of the  ABs of three agents (0,2, and 3) using radar plots over the probability simplex.}
    The first three rows of radar plots show the evolution of the SDHT algorithm and the second three rows show the evolution of the ADHT algorithm.} 
    \label{fig:belief_evo}
\end{figure*}
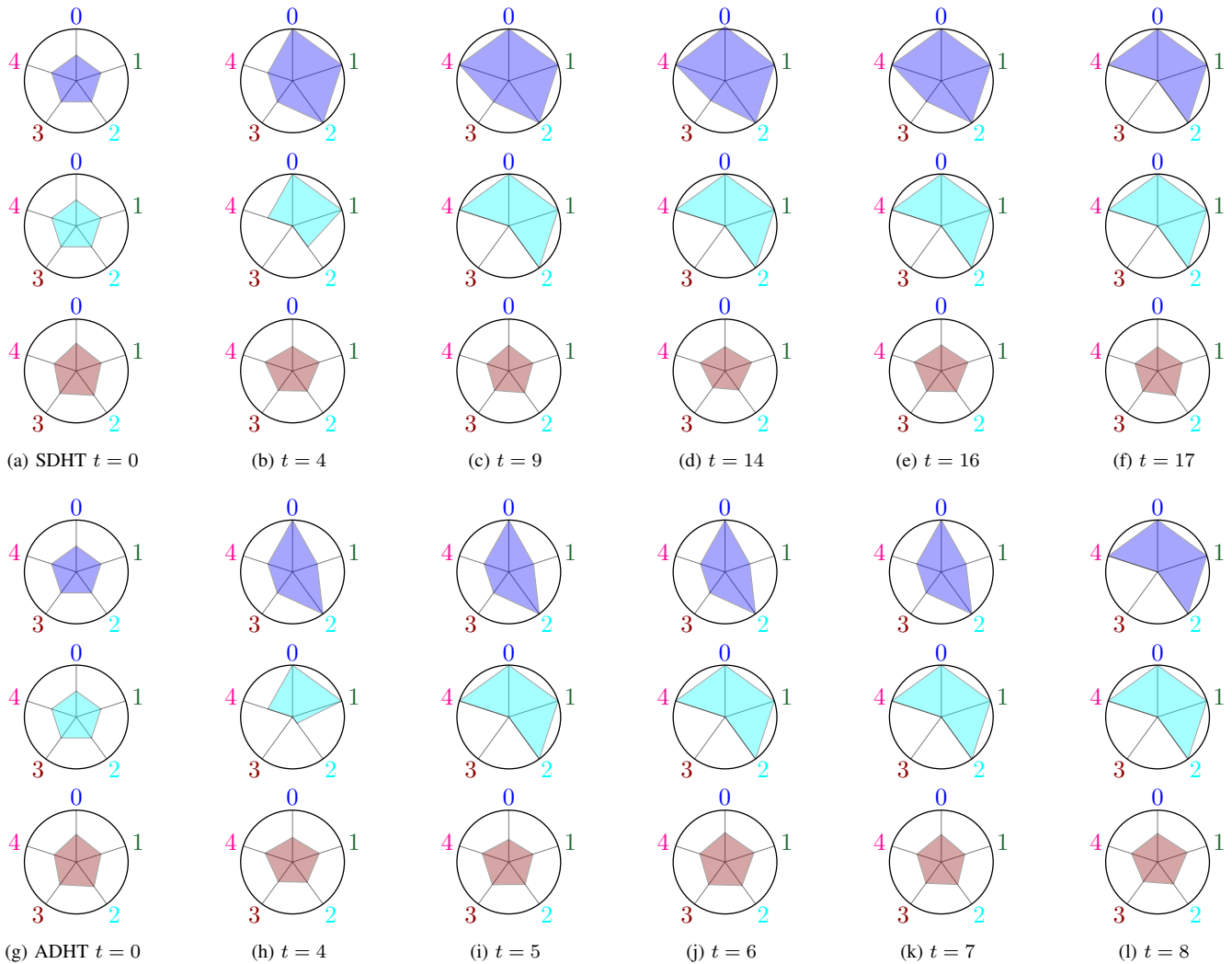

\subsection{Results}

In this section we present two simulation results\footnote{For videos and source code of all of these simulations see \url{https://u-t-autonomous.github.io/Decentralized_Hypothesis_Testing/}.}. The first result compares the SDHT and ADHT algorithms in the 5-agent scenario as shown in Fig.~\ref{fig:case_env}. The second result compares the minimum and averaging rules with high and low levels of sensor noise. {A high (low) sensor noise means local likelihood functions with high (low) variances}

In the simulations, at any time instant, we assume there are two possible locations of the agent $j$, namely $q_{j}^0$ and $q_{j}^1$, depending on the value of $\theta(j) \in \{ 0,1 \}$ (see Fig.~\ref{fig:obs}). {Therefore, $P(q_{j}^0|0,q_{i})=P(q_{j}^1|1,q_{i})=1$.}

For a given  hypothesis $\theta(j)$ and its corresponding  location $q_{j}^{\theta(j)}$,  
from \eqref{equation:obervation function for agent j in case study}, the likelihood function $l^j_i(s_i^j|\theta(j),q_{i})$ to get $s_i^j$ for agent $i$ is:
\begin{align}\label{equation:local_single}
  &l^j_i(s_i^j|\theta(j),q_{i})=\nonumber\\
  &\begin{cases}
   P_{i}(s_i^j|{q_{i}},q_{j}^{\theta(j)}) & \text{ if } s_i^j\neq\emptyset,\\
   0 & \text{ if } s_i^j=\emptyset \land q_{j}^{\theta(j)}\in\mathcal{Q}_i(q_{i}),\\
   1 & \text{ if } s_i^j=\emptyset \land q_{j}^{\theta(j)}\notin\mathcal{Q}_i(q_{i}).
\end{cases}
\end{align}

\subsubsection{SDHT vs ADHT}

\added{Fig.~\ref{fig:sdht-adht} compares how each agent's AB on the true hypothesis  $\theta^*=(1,1,1,0,1)$ evolves over time for SDHT and ADHT.} Agent 3 (grey) is a bad agent. All the good agents have the same prior belief that each agent is equally likely to be good or bad. \added{Both algorithms converge to the true hypothesis despite the bad agent (agent 3) sharing randomly generated ABs.}
SDHT in Fig.~\ref{fig:sdht} converges at around $t=16$, while the convergence with ADHT is faster at $t=9$ as shown in Fig.~\ref{fig:adht}. 
We also empirically observe that ADHT enters case one much more frequently from Fig.~\ref{fig:belief_calls}. 
Therefore, the agents make much more frequent use of neighbor information in ADHT  and converge faster than they do in SDHT.

\added{To better illustrate the agents' belief evolution, we pick agents $0,2$, and $3$ and show  their ABs at different time instants in both SDHT and ADHT algorithms in Fig.~\ref{fig:belief_evo}.  The radar plots indicate each agent's AB, where each vertex $i$ ($i\in\{0,1,2,3,4\}$) represents the probability that agent $i$ is bad.} From Fig.~\ref{fig:belief_evo}, agent $2$ converges to the true belief at $t=8$ for both algorithms. However, for SDHT, it is not until $t=17$ does agent $0$ make use of agent $2$'s AB and converge. \added{While in ADHT at $t=8$, agent $0$ has already accumulated enough shared beliefs to update its AB and converges.} 

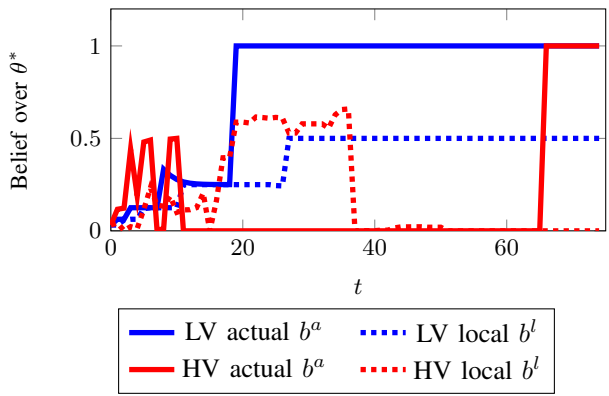
\begin{figure}[h]
    \centering
    \pgfplotstableread[col sep = comma]{pics/plots/Local_Low_Var.csv}\lLV
\pgfplotstableread[col sep = comma]{pics/plots/Local_UHigh_Var.csv}\lHV
\pgfplotstableread[col sep = comma]{pics/plots/Min_Low_Var.csv}\mLV
\pgfplotstableread[col sep = comma]{pics/plots/Min_High_Var.csv}\mHV

\begin{tikzpicture}
	\begin{axis}[
    xlabel = {$t$},
	ylabel = {Belief over $\theta^*$},
	ymin= 0,
	ymax= 1.20,
	xmin=0,
	xmax=75,
	label style={font=\small},
	tick label style={font=\small},
	width = 0.45*\textwidth,
	height = 0.25*\textwidth,
	legend style={at={($(0,0)+(3.0cm,-1.6cm)$)},legend columns=2,fill=none,draw=black,anchor=center,align=center}]
	
	\addplot+[no markers, line width=2pt,color=blue] table [x index ={0},y index ={1}]{\mLV};\label{mLv:a41}\addlegendentry{LV actual  $b^a$\quad}
	\addplot+[no markers, dotted, line width=2pt,color=blue] table [x index ={0},y index ={1}]{\lLV};\label{lLv:a01}\addlegendentry{LV local $b^l$}
    \addplot+[no markers, line width=2pt,color=red] table [x index ={0},y index ={3}]{\mHV};\label{mhv:a41}\addlegendentry{HV actual  $b^a$\quad}
	\addplot+[no markers, dotted, line width=2pt,color=red] table [x index ={0},y index ={1}]{\lHV};\label{lLv:a0}\addlegendentry{HV local $b^l$}
	
	\end{axis}
\end{tikzpicture}
    \caption{\added{Evolution of LBs and ABs for agent 0 for the low variance (LV) and high variance (HV) sensor noise cases. The solid line uses belief sharing with ADHT and the minimum rule.}}
    \label{fig:local_belief}
\end{figure}

\begin{figure}[h]
    \centering
    \subfloat[Averaging rule -- high sensor noise]{
    \pgfplotstableread[col sep = comma]{pics/plots/Avg_High_Var.csv}\aHV
\begin{tikzpicture}
	\begin{axis}[
	xlabel = {$t$},
	ylabel = {Actual Belief $b^a_j(\theta^\star)$},
	ymin= 0,
	ymax= 1.10,
	xmin=0,
	xmax=75,
	label style={font=\small},
	tick label style={font=\small},
	width = 0.45*\textwidth,
	height = 0.27*\textwidth,
	legend style={at={(1,0.75)},anchor=north east}]
	\addplot+[no markers, dashed, line width=2pt,color=blue] table [x index ={0},y index ={3}]{\aHV};\label{ahv:a0}\addlegendentry{Agent 0}
	\addplot+[no markers, dashed, line width=2pt,color=greenish] table [x index ={0},y index ={4}]{\aHV};\label{ahv:a1}\addlegendentry{Agent 1} 
	\addplot+[no markers, dashed, line width=2pt,color=cyan] table [x index ={0},y index ={5}]{\aHV};\label{ahv:a2}\addlegendentry{Agent 2} 
	\addplot+[no markers, dashed, line width=2pt,color=maroon] table [x index ={0},y index ={2}]{\aHV};\label{ahv:a3}\addlegendentry{Agent 3}
	\addplot+[no markers, dashed, line width=2pt,color=pink] table [x index ={0},y index ={1}]{\aHV};\label{ahv:a4}\addlegendentry{Agent 4}
	\end{axis}
\end{tikzpicture}
    }\\
    \subfloat[Minimum rule -- high sensor noise]{
    \pgfplotstableread[col sep = comma]{pics/plots/Min_High_Var.csv}\mHV
\begin{tikzpicture}
	\begin{axis}[
	xlabel = {$t$},
	ylabel = {Actual Belief $b^a_j(\theta^\star)$},
	ymin= 0,
	ymax= 1.10,
	xmin=0,
	xmax=75,
	label style={font=\small},
	tick label style={font=\small},
	width = 0.45*\textwidth,
	height = 0.25*\textwidth]
	\addplot+[no markers, dashed, line width=2pt,color=blue] table [x index ={0},y index ={5}]{\mHV};\label{mhv:a0}
	\addplot+[no markers, dashed, line width=2pt,color=greenish] table [x index ={0},y index ={1}]{\mHV};\label{mhv:a1}
	\addplot+[no markers, dashed, line width=2pt,color=cyan] table [x index ={0},y index ={4}]{\mHV};\label{mhv:a2}
	\addplot+[no markers, dashed, line width=2pt,color=maroon] table [x index ={0},y index ={2}]{\mHV};\label{mhv:a3}
	\addplot+[no markers, dashed, line width=2pt,color=pink] table [x index ={0},y index ={3}]{\mHV};\label{mhv:a4}
	\end{axis}
\end{tikzpicture}
    }
    \caption{\added{Each agent's AB $b^a_{j,t}(\theta^*)$ over time $t$ for the true hypothesis $\theta^*$ where $\theta^*=(1,1,1,0,1)$  with high sensor noises.} Two figures share the same legend.
    \label{fig:average_rule}}
    \label{fig:belief_plot}
\end{figure}
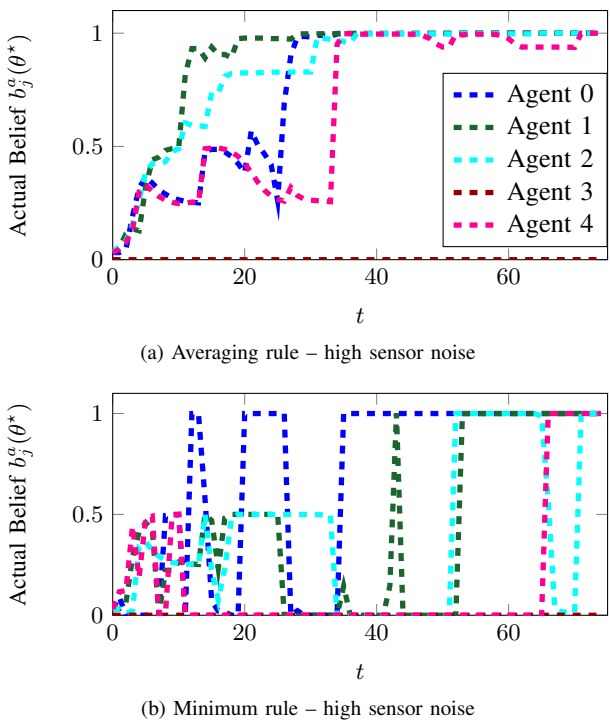

\begin{figure}[h]
    \centering
    \subfloat[Averaging rule -- low sensor noise\label{fig:average_rule_low}]{
    \pgfplotstableread[col sep = comma]{pics/plots/Avg_Low_Var.csv}\aLV
\begin{tikzpicture}
	\begin{axis}[
	xlabel = {$t$},
	ylabel = {Actual Belief $b^a_j(\theta^\star)$},
	ymin= 0,
	ymax= 1.10,
	xmin=0,
	xmax=75,
	label style={font=\small},
	tick label style={font=\small},
	width = 0.45*\textwidth,
	height = 0.27*\textwidth,
	legend style={at={(1,0.75)},anchor=north east}]
	\addplot+[no markers, dashed, line width=2pt,color=blue] table [x index ={0},y index ={3}]{\aLV};\label{aLv:a0}\addlegendentry{Agent 0}
	\addplot+[no markers, dashed, line width=2pt,color=greenish] table [x index ={0},y index ={4}]{\aLV};\label{aLv:a1}\addlegendentry{Agent 1} 
	\addplot+[no markers, dashed, line width=2pt,color=cyan] table [x index ={0},y index ={5}]{\aLV};\label{aLv:a2}\addlegendentry{Agent 2} 
	\addplot+[no markers, dashed, line width=2pt,color=maroon] table [x index ={0},y index ={2}]{\aLV};\label{aLv:a3}\addlegendentry{Agent 3}
	\addplot+[no markers, dashed, line width=2pt,color=pink] table [x index ={0},y index ={1}]{\aLV};\label{aLv:a4}\addlegendentry{Agent 4}
	\end{axis}
\end{tikzpicture}
    }\\
    \subfloat[Minimum rule -- low sensor noise]{
    \pgfplotstableread[col sep = comma]{pics/plots/Min_Low_Var.csv}\mLV
\begin{tikzpicture}
	\begin{axis}[
	xlabel = {$t$},
	ylabel = {Actual Belief $b^a_j(\theta^\star)$},
	ymin= 0,
	ymax= 1.10,
	xmin=0,
	xmax=75,
	label style={font=\small},
	tick label style={font=\small},
	width = 0.45*\textwidth,
	height = 0.27*\textwidth,
	legend pos=south east]
	\addplot+[no markers, dashed, line width=2pt,color=blue] table [x index ={0},y index ={3}]{\mLV};\label{mLv:a0}\addlegendentry{Agent 0}
	\addplot+[no markers, dashed, line width=2pt,color=greenish] table [x index ={0},y index ={4}]{\mLV};\label{mLv:a1}\addlegendentry{Agent 1} 
	\addplot+[no markers, dashed, line width=2pt,color=cyan] table [x index ={0},y index ={5}]{\mLV};\label{mLv:a2}\addlegendentry{Agent 2} 
	\addplot+[no markers, dashed, line width=2pt,color=maroon] table [x index ={0},y index ={2}]{\mLV};\label{mLv:a3}\addlegendentry{Agent 3}
	\addplot+[no markers, dashed, line width=2pt,color=pink] table [x index ={0},y index ={1}]{\mLV};\label{mLv:a40}\addlegendentry{Agent 4}
	\end{axis}
\end{tikzpicture}
    }
    \caption{\added{Each agent's AB $b^a_{j,t}(\theta^*)$ over time $t$  for the true hypothesis $\theta^*$ where $\theta^*(i)=(1,1,1,0,1)$  with low sensor noises.}
    }
    \label{fig:belief_plot_low}
\end{figure}
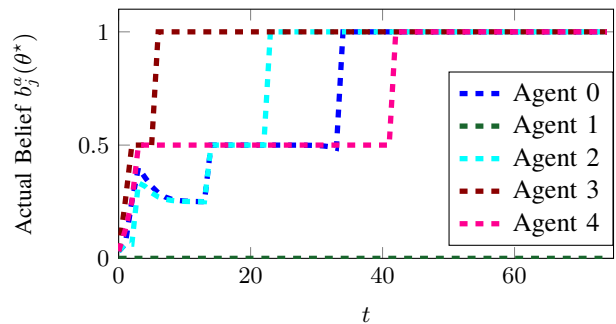
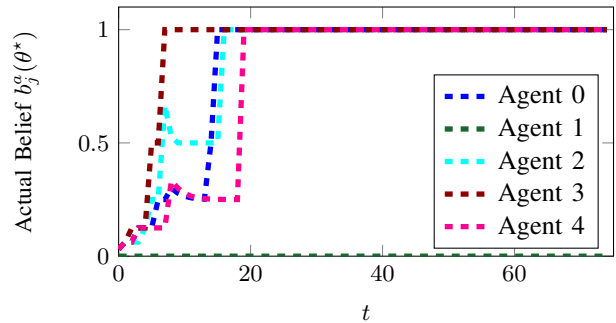

\subsubsection{\added{AB Update Rule}}

We showed in Section~\ref{section:average rule} that the average rule also guarantees the  convergence to the true underlying belief. Examining the effect of sensor noise in terms of the variances of the local likelihood functions provides a comparison between the average rule and the minimum rule. Agent 3 is the bad agent who always shares the same false belief $b_{3,t}^a(\theta)=1$ where $\theta=(1,0,1,1,1)$. In other words, it always broadcasts to its neighbors that agent $1$ is the bad agent almost surely. 

\added{Fig.~\ref{fig:local_belief} shows the evolution of LBs and ABs for agent $0$. It can be seen that, especially for the first $15$ time steps,  a high sensor noise frequently leads to fluctuations in an agent's LBs. Such fluctuations propagate to its ABs that are shared to its neighbors.} Consequently, as shown in Fig.~\ref{fig:average_rule}, the average rule outperforms the minimum rule in identifying the true hypothesis since it relies on more than one neighboring agent which may average out the fluctuation for each hypothesis. \added{In the low sensor noise scenario, the LB has much less fluctuations as shown in Fig.~\ref{fig:local_belief}.} Then we observe that the minimum rule converges faster since it may quickly and correctly rule out the wrong hypotheses by taking the minimum of the beliefs  as illustrated in Fig.~\ref{fig:belief_plot_low}.

\subsection{Expanded Case Studies}
\added{
We demonstrate the algorithm for Byzantine fault tolerance on two alternative case studies: one is the same setting as in \ref{ssec:setting} with an agent transmitting a fixed false hypothesis and another with an expanded version of the environment in Fig.~\ref{fig:environment} with ten good agents and two coordinating bad agents.}
\footnote{Videos of these case studies can be found at \url{https://u-t-autonomous.github.io/Decentralized_Hypothesis_Testing/}.}
\added{
The two bad agents are coordinating by constantly transmitting the same false hypothesis to its neighbours in the system.}

\added{
In Fig~\ref{fig:expanded}, instead of showing ten curves for ten good agents in one figure that may affect readability, we plot the average of the ABs (solid line) and LBs (dashed line) over time. One can see that the ADHT method converges to the correct hypothesis significantly faster than if no information was shared (LB that only rely on local information). Further, ADHT is robust against two coordinated bad agents.
Note that in this case study where the agent's location on the state paths define the system status, the size of the hypothesis set $|\Theta|$ scales exponentially with the number of agents.
}
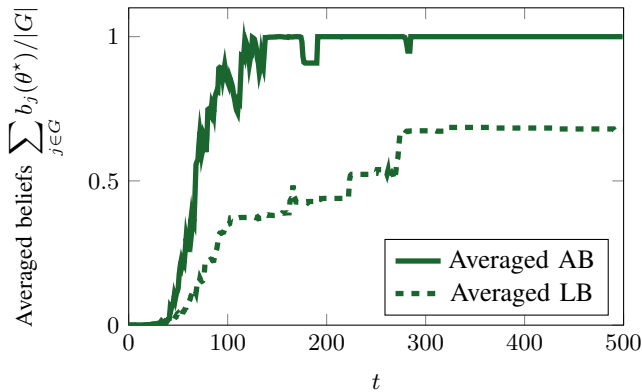
\begin{figure}
    \centering
    \pgfplotstableread[col sep = comma]{pics/plots/Expanded.csv}\expAND
\begin{tikzpicture}
	\begin{axis}[
	xlabel = {$t$},
	ylabel = {Averaged beliefs $\displaystyle \sum_{j\in G} b_j(\theta^\star)/|G|$},
	ymin= 0,
	ymax= 1.10,
	xmin=0,
	xmax=500,
	label style={font=\small},
	tick label style={font=\small},
	width = 0.45*\textwidth,
	height = 0.32*\textwidth,
	legend pos=south east]
	\addplot+[no markers, solid, line width=2pt,color=greenish] table [x index ={0},y index ={2}]{\expAND};\label{Expand:AB}\addlegendentry{Averaged AB}
	\addplot+[no markers, dashed, line width=2pt,color=greenish] table [x index ={0},y index ={1}]{\expAND};\label{Expand:LB}\addlegendentry{Averaged LB} 
	\end{axis}
\end{tikzpicture}
    \caption{\added{The averaged LBs and ABs in the expanded case study with ten good agents and two bad agents. Both will eventually converge to one but with ADHT the AB converges more quickly despite the two coordinated bad agents transmitting false hypothesis data.}}
    \label{fig:expanded}
\end{figure}

\section{Conclusion}\label{section:conclusion}
In this paper, we introduce two resilient distributed hypothesis testing algorithms in a time-varying network topology.  \added{Each agent makes local observations and keeps simulating shared information to update its LBs and ABs over all possible hypotheses.}   We prove that the proposed algorithms guarantee almost-sure convergence to the true hypothesis in the limit without {requiring that the underlying network topology to be connected.} The proposed algorithms are simple to implement and resilient to adversarial agents. The results in the simulated case studies illustrate the validity of the proposed approaches and compare their performance in different scenarios. \added{In particular, we show that the asynchronous algorithm constantly converges faster than the synchronous algorithm. Furthermore,  the performances of average and minimum rules that make use of shared ABs depend heavily on the sensor noise. With higher sensor noise, the former outperforms the latter. And with lower sensor noise, the reverse is true.}  Future work will study how to plan the state paths of the team in a distributed manner to satisfy the convergence conditions. 
\bibliographystyle{IEEEtran}
\bibliography{ref}

\begin{thebibliography}{10}
\providecommand{\url}[1]{#1}
\csname url@samestyle\endcsname
\providecommand{\newblock}{\relax}
\providecommand{\bibinfo}[2]{#2}
\providecommand{\BIBentrySTDinterwordspacing}{\spaceskip=0pt\relax}
\providecommand{\BIBentryALTinterwordstretchfactor}{4}
\providecommand{\BIBentryALTinterwordspacing}{\spaceskip=\fontdimen2\font plus
\BIBentryALTinterwordstretchfactor\fontdimen3\font minus
  \fontdimen4\font\relax}
\providecommand{\BIBforeignlanguage}[2]{{%
\expandafter\ifx\csname l@#1\endcsname\relax
\typeout{** WARNING: IEEEtran.bst: No hyphenation pattern has been}%
\typeout{** loaded for the language `#1'. Using the pattern for}%
\typeout{** the default language instead.}%
\else
\language=\csname l@#1\endcsname
\fi
#2}}
\providecommand{\BIBdecl}{\relax}
\BIBdecl

\bibitem{olfati2006belief}
R.~Olfati-Saber, E.~Franco, E.~Frazzoli, and J.~S. Shamma, ``Belief consensus
  and distributed hypothesis testing in sensor networks,'' in \emph{Networked
  Embedded Sensing and Control}.\hskip 1em plus 0.5em minus 0.4em\relax
  Springer, 2006, pp. 169--182.

\bibitem{cubuktepe2020policy}
M.~Cubuktepe, Z.~Xu, and U.~Topcu, ``Policy synthesis for factored mdps with
  graph temporal logic specifications,'' in \emph{AAMAS}, 2020.

\bibitem{Allerton2019}
Z.~{Xu}, F.~M. {Zegers}, B.~{Wu}, W.~{Dixon}, and U.~{Topcu}, ``Controller
  synthesis for multi-agent systems with intermittent communication. a metric
  temporal logic approach,'' in \emph{2019 57th Annual Allerton Conference on
  Communication, Control, and Computing (Allerton)}, Sep. 2019, pp. 1015--1022.

\bibitem{tarighati2017decentralized}
A.~Tarighati, J.~Gross, and J.~Jald{\'e}n, ``Decentralized hypothesis testing
  in energy harvesting wireless sensor networks,'' \emph{IEEE Transactions on
  signal processing}, vol.~65, no.~18, pp. 4862--4873, 2017.

\bibitem{nedic2016distributed}
A.~Nedi{\'c}, A.~Olshevsky, and C.~A. Uribe, ``Distributed learning with
  infinitely many hypotheses,'' in \emph{2016 IEEE 55th Conference on Decision
  and Control (CDC)}.\hskip 1em plus 0.5em minus 0.4em\relax IEEE, 2016, pp.
  6321--6326.

\bibitem{liu2017distributed}
Z.~Liu, B.~Wu, J.~Dai, and H.~Lin, ``Distributed communication-aware motion
  planning for multi-agent systems from stl and spatel specifications,'' in
  \emph{2017 IEEE 56th Annual Conference on Decision and Control (CDC)}.\hskip
  1em plus 0.5em minus 0.4em\relax IEEE, 2017, pp. 4452--4457.

\bibitem{liu2017communication}
Z.~Liu, J.~Dai, B.~Wu, and H.~Lin, ``Communication-aware motion planning for
  multi-agent systems from signal temporal logic specifications,'' in
  \emph{2017 American Control Conference (ACC)}.\hskip 1em plus 0.5em minus
  0.4em\relax IEEE, 2017, pp. 2516--2521.

\bibitem{liu2018distributed}
Z.~{Liu}, B.~{Wu}, J.~{Dai}, and H.~{Lin}, ``Distributed communication-aware
  motion planning for networked mobile robots under formal specifications,''
  \emph{IEEE Transactions on Control of Network Systems}, vol.~7, no.~4, pp.
  1801--1811, 2020.

\bibitem{wu2015combined}
B.~Wu, J.~Dai, and H.~Lin, ``Combined top-down and bottom-up approach to
  cooperative distributed multi-agent control with connectivity constraints,''
  \emph{IFAC-PapersOnLine}, vol.~48, no.~27, pp. 224--229, 2015.

\bibitem{Djeumou2020}
F.~Djeumou, Z.~Xu, and U.~Topcu, ``Probabilistic swarm guidance with graph
  temporal logic specifications,'' in \emph{Proc. Robotics: Science and Systems
  (RSS)}, 2020.

\bibitem{CensusSTL2016}
Z.~Xu and A.~A. Julius, ``Census signal temporal logic inference for multiagent
  group behavior analysis,'' \emph{IEEE Trans. Autom. Sci. Eng.}, vol.~15,
  no.~1, pp. 264--277, Jan. 2018.

\bibitem{rhim2014distributed}
J.~B. Rhim and V.~K. Goyal, ``Distributed hypothesis testing with social
  learning and symmetric fusion,'' \emph{IEEE Transactions on Signal
  Processing}, vol.~62, no.~23, pp. 6298--6308, 2014.

\bibitem{jadbabaie2012non}
A.~Jadbabaie, P.~Molavi, A.~Sandroni, and A.~Tahbaz-Salehi, ``Non-bayesian
  social learning,'' \emph{Games and Economic Behavior}, vol.~76, no.~1, pp.
  210--225, 2012.

\bibitem{lalitha2018social}
A.~Lalitha, T.~Javidi, and A.~D. Sarwate, ``Social learning and distributed
  hypothesis testing,'' \emph{IEEE Transactions on Information Theory},
  vol.~64, no.~9, pp. 6161--6179, 2018.

\bibitem{alanyali2004distributed}
M.~Alanyali, S.~Venkatesh, O.~Savas, and S.~Aeron, ``Distributed bayesian
  hypothesis testing in sensor networks,'' in \emph{Proceedings of the 2004
  American control conference}, vol.~6.\hskip 1em plus 0.5em minus 0.4em\relax
  IEEE, 2004, pp. 5369--5374.

\bibitem{salehkalaibar2018hypothesis}
S.~Salehkalaibar, M.~Wigger, and R.~Timo, ``On hypothesis testing against
  conditional independence with multiple decision centers,'' \emph{IEEE
  Transactions on Communications}, vol.~66, no.~6, pp. 2409--2420, 2018.

\bibitem{rahman2012optimality}
M.~S. Rahman and A.~B. Wagner, ``On the optimality of binning for distributed
  hypothesis testing,'' \emph{IEEE Transactions on Information Theory},
  vol.~58, no.~10, pp. 6282--6303, 2012.

\bibitem{veeravalli1993decentralized}
V.~V. Veeravalli, T.~Basar, and H.~V. Poor, ``Decentralized sequential
  detection with a fusion center performing the sequential test,'' \emph{IEEE
  Transactions on Information Theory}, vol.~39, no.~2, pp. 433--442, 1993.

\bibitem{nedic2017fast}
A.~Nedi{\'c}, A.~Olshevsky, and C.~A. Uribe, ``Fast convergence rates for
  distributed non-bayesian learning,'' \emph{IEEE Transactions on Automatic
  Control}, vol.~62, no.~11, pp. 5538--5553, 2017.

\bibitem{mitra2019new}
A.~{Mitra}, J.~A. {Richards}, and S.~{Sundaram}, ``A new approach for
  distributed hypothesis testing with extensions to byzantine-resilience,'' in
  \emph{2019 American Control Conference (ACC)}, July 2019, pp. 261--266.

\bibitem{Bo2019distributed}
B.~Wu, S.~Carr, S.~Bharadwaj, Z.~Xu, and U.~Topcu, ``Resilient distributed
  hypothesis testing with time-varying network topology,'' in \emph{2020
  American Control Conference (ACC)}.\hskip 1em plus 0.5em minus 0.4em\relax
  IEEE, 2020, pp. 1483--1488.

\bibitem{shahrampour2015distributed}
S.~Shahrampour, A.~Rakhlin, and A.~Jadbabaie, ``Distributed detection:
  Finite-time analysis and impact of network topology,'' \emph{IEEE
  Transactions on Automatic Control}, vol.~61, no.~11, pp. 3256--3268, 2015.

\bibitem{7322210}
Y.~{Mo} and B.~{Sinopoli}, ``On the performance degradation of cyber-physical
  systems under stealthy integrity attacks,'' \emph{IEEE Transactions on
  Automatic Control}, vol.~61, no.~9, pp. 2618--2624, 2016.

\bibitem{9069226}
T.~{Sui}, Y.~{Mo}, D.~{Marelli}, X.~{Sun}, and M.~{Fu}, ``The vulnerability of
  cyber-physical system under stealthy attacks,'' \emph{IEEE Transactions on
  Automatic Control}, vol.~66, no.~2, pp. 637--650, 2021.

\bibitem{pearl1982reverend}
J.~Pearl, \emph{Reverend Bayes on inference engines: A distributed hierarchical
  approach}.\hskip 1em plus 0.5em minus 0.4em\relax Cognitive Systems
  Laboratory, School of Engineering and Applied Science~…, 1982.

\bibitem{braunstein2005survey}
A.~Braunstein, M.~M{\'e}zard, and R.~Zecchina, ``Survey propagation: An
  algorithm for satisfiability,'' \emph{Random Structures \& Algorithms},
  vol.~27, no.~2, pp. 201--226, 2005.

\bibitem{weiss2001correctness}
Y.~Weiss and W.~T. Freeman, ``Correctness of belief propagation in gaussian
  graphical models of arbitrary topology,'' \emph{Neural computation}, vol.~13,
  no.~10, pp. 2173--2200, 2001.

\bibitem{sui2018accuracy}
T.~Sui, D.~E. Marelli, M.~Fu, and R.~Lu, ``Accuracy analysis for distributed
  weighted least-squares estimation in finite steps and loopy networks,''
  \emph{Automatica}, vol.~97, pp. 82--91, 2018.

\bibitem{su2019defending}
L.~Su and N.~H. Vaidya, ``Defending non-bayesian learning against adversarial
  attacks,'' \emph{Distributed Computing}, vol.~32, no.~4, pp. 277--289, 2019.

\bibitem{mitra2019new1}
A.~Mitra, J.~A. Richards, and S.~Sundaram, ``A new approach to distributed
  hypothesis testing and non-bayesian learning: Improved learning rate and
  byzantine-resilience,'' \emph{arXiv:1907.03588}, 2019.

\bibitem{dolev1986reaching}
D.~Dolev, N.~A. Lynch, S.~S. Pinter, E.~W. Stark, and W.~E. Weihl, ``Reaching
  approximate agreement in the presence of faults,'' \emph{Journal of the ACM
  (JACM)}, vol.~33, no.~3, pp. 499--516, 1986.

\bibitem{kullback1951information}
S.~Kullback and R.~A. Leibler, ``On information and sufficiency,'' \emph{The
  annals of mathematical statistics}, vol.~22, no.~1, pp. 79--86, 1951.

\bibitem{bharadwaj2018synthesis}
S.~Bharadwaj, R.~Dimitrova, and U.~Topcu, ``Synthesis of surveillance
  strategies via belief abstraction,'' in \emph{2018 IEEE Conference on
  Decision and Control (CDC)}.\hskip 1em plus 0.5em minus 0.4em\relax IEEE,
  2018, pp. 4159--4166.

\bibitem{simon2006optimal}
D.~Simon, \emph{Optimal state estimation: Kalman, H infinity, and nonlinear
  approaches}.\hskip 1em plus 0.5em minus 0.4em\relax John Wiley \& Sons, 2006.

\end{thebibliography}
\begin{IEEEbiography}[{\includegraphics[width=1in,height=1.25in,clip,keepaspectratio]{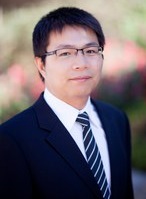}}]{Bo Wu}
received his B.E. degree from Harbin Institute of Technology, China, in 2008, an M.S. degree from Lund University, Sweden, in 2011 and Ph.D. degree from the University of Notre Dame, USA, in 2018, all in electrical engineering. He is currently a postdoctoral researcher at the Oden Institute for Computational Engineering and Sciences at the University of Texas at Austin. His research interest is to apply formal methods, learning, and control in autonomous systems, such as robotic systems, communication systems, and human-in-the-loop systems, to provide privacy, security, and performance guarantees.
\end{IEEEbiography}
\begin{IEEEbiography}
[{\includegraphics[width=1in,height=1.25in,clip,keepaspectratio]{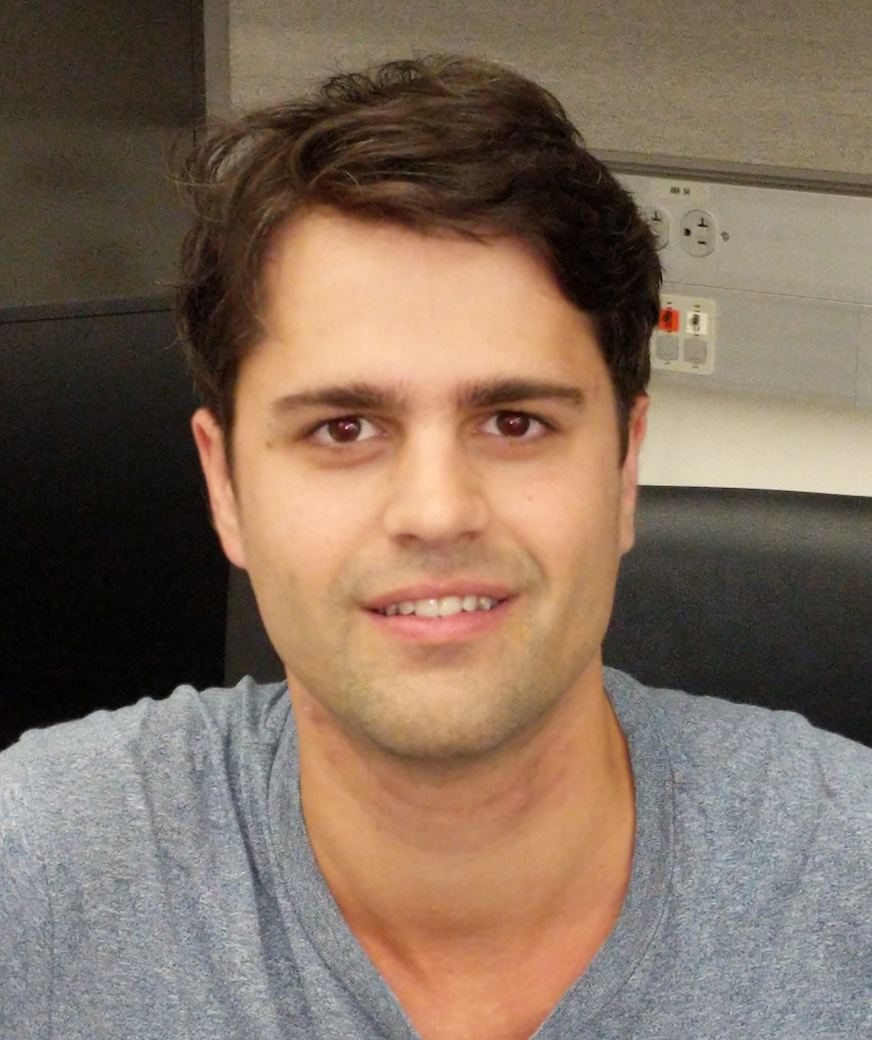}}]{Steven Carr} is currently pursing his Ph.D. degree from the University of Texas at Austin in the Department of Aerospace Engineering. He received the B.Eng./B.Sc. in aerospace and mathematics from the University of Sydney in 2014 and the M.Sc in aerospace engineering in 2018. His research interests include the intersection of control and learning in autonomous systems with a focus on aerospace applications.
\end{IEEEbiography}

\begin{IEEEbiography}[{\includegraphics[width=1in,height=1.25in,clip,keepaspectratio]{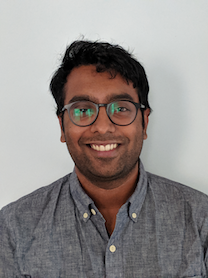}}]{Suda Bharadwaj}
Suda Bharadwaj received B.Sc. and B.E degrees in applied mathematics and aerospace engineering from the University of Sydney, NSW, Australia, in 2014. In 2016, he received an M.S. degree in aerospace engineering from the University of Texas at Austin, TX, USA. He is currently pursuing his Ph.D degree at the Department of Aerospace Engineering and Engineering Mechanics at the University of Texas at Austin. His research interests include the intersection of formal methods, reinforcement learning, and control with a focus on provable safety guarantees. 
\end{IEEEbiography}

\begin{IEEEbiography}[{\includegraphics[width=1in,height=1.25in,clip,keepaspectratio]{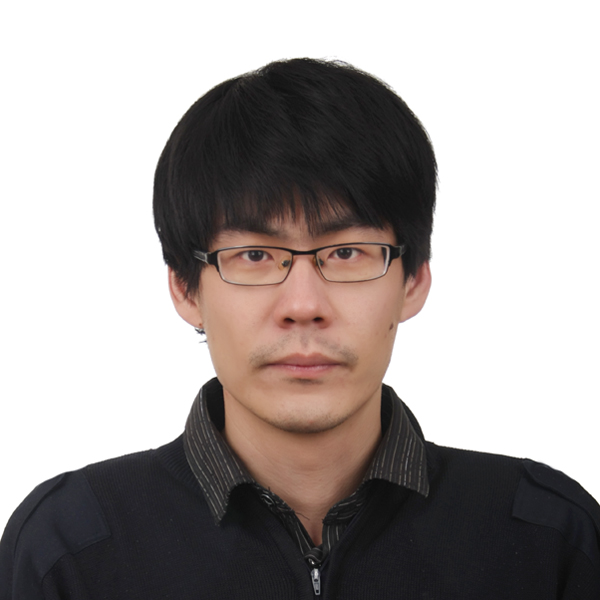}}]{Zhe Xu}
received the B.S. and M.S. degrees in Electrical Engineering from Tianjin University, Tianjin, China, in 2011 and 2014, respectively. He received the Ph.D. degree in Electrical Engineering at Rensselaer Polytechnic Institute, Troy, NY, in 2018. He is currently an assistant professor in the School for Engineering of Matter, Transport, and Energy at Arizona State University. Before joining ASU, he was a postdoctoral researcher in the Oden Institute for Computational Engineering and Sciences at the University of Texas at Austin, Austin, TX. His research interests include formal methods, autonomous systems, control systems and reinforcement learning. 
\end{IEEEbiography}
\begin{IEEEbiography}[{\includegraphics[width=1in,height=1.25in,clip,keepaspectratio]{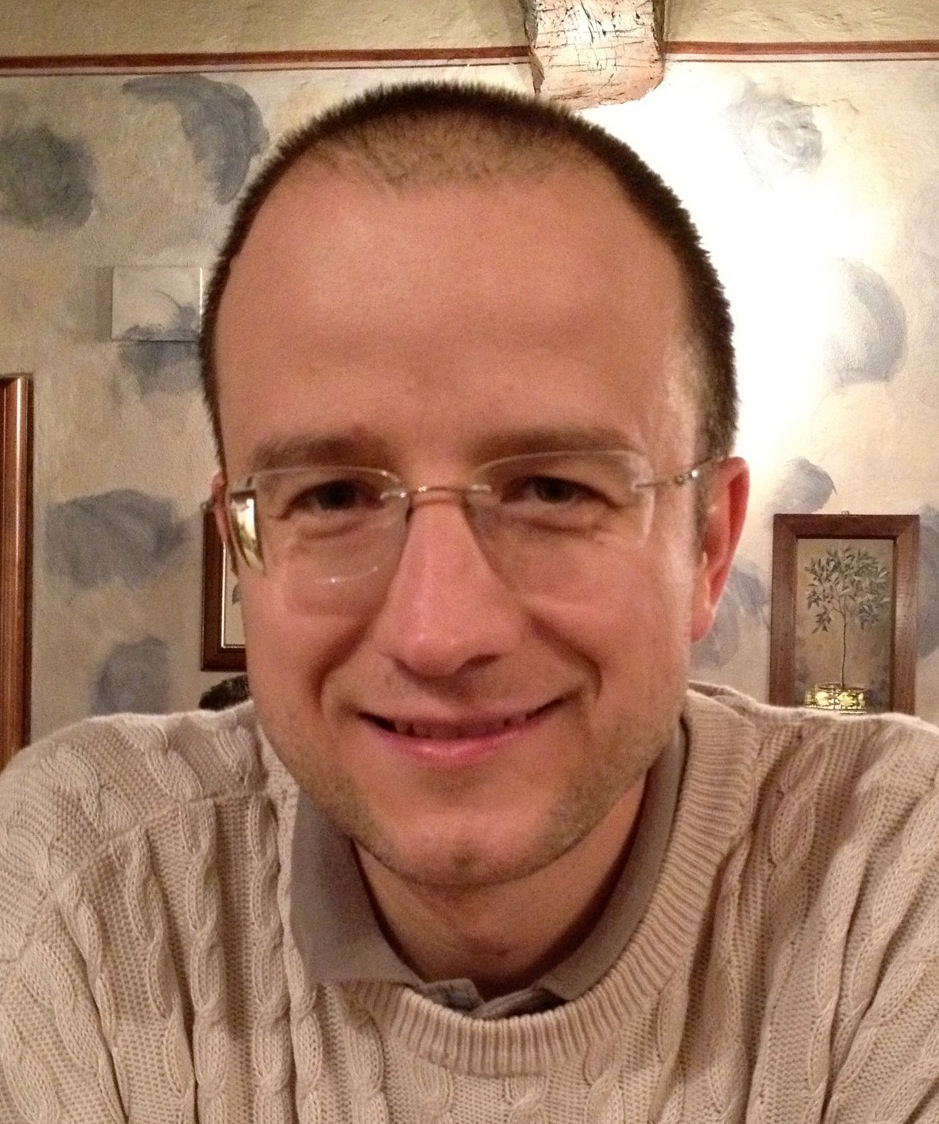}}]{Ufuk Topcu}
Ufuk Topcu joined the Department of Aerospace Engineering at the University of Texas at Austin as an assistant professor in Fall 2015. He received his Ph.D. degree from the University of California at Berkeley in 2008. He held research positions at the University of Pennsylvania and California Institute of Technology. His research focuses on the theoretical, algorithmic and computational aspects of design and verification of autonomous systems through novel connections between formal methods, learning theory and controls. 
\end{IEEEbiography}

\clearpage

\section{Appendix}
\beginsupplement

\subsection{Proof of Lemma \ref{lemma:convergence}}
\begin{proof}
For any good agent $i\in G$,  we define
\begin{equation}\label{equation:lemma1}
\begin{split}
     \rho_{i,t}(\theta)&:=\log\frac{b^l_{i,t}(\theta)}{b^l_{i,t}(\theta^*)} , \text{ and }\\   
    \lambda_{i,t}(\theta)&:=\log\frac{l_{i}(s_{i,t}|\theta,q_{i,t})}{l_{i}(s_{i,t}|\theta^*,q_{i,t})}.
\end{split}
\end{equation}
Note that $l_{i}(s_{i,t}|\theta^*,q_{i,t})>0$ for all $t$, $q_{i,t}$ and $s_{i,t}$ since $\theta^*$ is the true hypothesis that generates the observation $s_{i,t}$. Therefore, we know that, for any finite $t$, $b^l_{i,t}(\theta^*)>0$ and \eqref{equation:lemma1} is always well-defined. 
\added{Then according to the LB-update rule (\ref{equation:LB update rule}), we have }
$$
    \rho_{i,t+1}(\theta)=\rho_{i,t}(\theta)+\lambda_{i,t}(\theta),
$$
which yields
\begin{equation}\label{equation:rho sum}
    \added{\rho_{i,T+1}(\theta)=\rho_{i,0}(\theta)+\sum_{t=0}^{T}\lambda_{i,t}(\theta).}
\end{equation}
Note that, according to equation (\ref{equation:lemma convergence}), there are cases where $q_{i,t}\notin O_i(\theta,\theta^*)$, which implies
$$
l_i(.|\theta^*,q_{i,t})=l_i(.|\theta,q_{i,t}).
$$
In this case, $\lambda_{i,t}(\theta)=0$ and does not contribute to the sum in (\ref{equation:rho sum}). Therefore, we may only focus on the case where $q_{i,t}\in O_i(\theta,\theta^*)$ and thus $\lambda_{i,t}(\theta)\neq 0$. 

Note that  $\{\lambda_{i,t}(\theta)\}$ is a sequence of independent random variables. For a given $t$, we have
$$E_{\theta^*}[\lambda_{i,t}(\theta)]=-D(l_i(.|\theta^*,q_{i,t})||l_i(.|\theta,q_{i,t})).$$

We denote a set $Q_\infty\subseteq Q$ for those locations where $\theta$ and $\theta^*$ can be differentiated and are visited infinite times by agent $i$. Formally, 
\added{$$Q_\infty:=\{q|q\in O_i(\theta,\theta^*)\text{ and   }\lim_{T\rightarrow\infty}\sum_{t=0}^{T}I(q_{i,t}=q)=\infty\}.$$}
We claim that $Q_\infty$ is non-empty by contradiction. If $Q_\infty$ is empty, it implies that the agent visits none of the states $q\in O_i(\theta,\theta^*)$ infinitely often, which violates the condition implied by $i\in S(\theta,\theta^*)$ and equation (\ref{equation:lemma convergence}). 

For any $q\in Q_\infty$, the following is true based on the strong law of large numbers.
\begin{equation}
 \begin{split}
      &\lim_{T\rightarrow\infty}\frac{1}{T}\sum_{t=1}^TI(q_{i,t}=q)\lambda_{i,t}(\theta)\\&= -D(l_i(.|\theta^*,q)||l_i(.|\theta,q)) \text{ almost surely.}
 \end{split}   
\end{equation}

\added{We divide both sides of \eqref{equation:rho sum} by $T$ and take the limit which yields}
\begin{equation}\label{equation:kl}
\begin{split}
     &\lim_{T\rightarrow\infty}\frac{1}{T}\rho_{i,T+1}(\theta)=\lim_{T\rightarrow\infty} \frac{1}{T} (\rho_{i,0}(\theta)+\sum_{t=0}^{T}\lambda_{i,t}(\theta))    \\
     &=\lim_{T\rightarrow\infty} \frac{1}{T} \sum_{t=0}^{T}\lambda_{i,t}(\theta)\\
     &=-\sum_{q\in Q_\infty} D(l_i(.|\theta^*,q)||l_i(.|\theta,q))\text{ almost surely.}
\end{split}
\end{equation}
Note that, for those $q\in O_i(\theta,\theta^*)$ but $q\notin Q_\infty$, their contribution in \eqref{equation:kl} is zero since they are only visited a finite number of times. By definition of $O_i(\theta,\theta^*)$, we know that $D(l_i(.|\theta^*,q)||l_i(.|\theta,q))>0$ for $q\in O_i(\theta,\theta^*)$. Then from \eqref{equation:kl}, $\rho_{i,t+1}(\theta)\rightarrow-\infty$ almost surely which implies $b_{i,t}^l(\theta)\rightarrow 0$ almost surely and proves \eqref{equation:zero of nontrue hypothesis}.

\added{Additionally, to prove \eqref{equation:nonzero of true hypothesis}, we define a set $$\bar{\Theta}:=\{\theta|i\notin S(\theta,\theta^*)\}$$ to include every hypothesis $\theta$ that agent $i$ is not able to differentiate from $\theta^*$.} \added{Then from the second condition of Theorem \ref{theorem:main result}, for each $\theta\in\bar{\Theta}$, there must exist a time $T_\theta$ such that
$$
\lim_{T\rightarrow\infty}\sum_{t=T_\theta+1}^{T}I(q_{i,t}\in S(\theta,\theta^*))=0.
$$}
\noindent That is, there exists a time $T_\theta$ after which agent $i$ will never visit any position that can differentiate $\theta$ and $\theta^*$\footnote{\added{Since the time zone is discrete, we use
$\cdot +1$ in $ T_{\theta}+1$.}}. Given any local state observation path $\omega_i= \{(q_{i,0},s_{i,0}),(q_{i,1},s_{i,1}),...\}$ where \eqref{equation:zero of nontrue hypothesis} holds, it is immediate from \eqref{equation:rho sum} that 
\begin{equation}\label{equation:rho i t theta}
 \rho_{i,t}(\theta)=\rho_{i,0}(\theta)+\sum_{j=0}^{T_\theta}\lambda_{i,j}(\theta)=C_{\theta,\omega_i}<\infty, 
\end{equation}
for any $t\geq T_\theta$  and some constant $C_{\theta,\omega_i}$ that depends on both $\theta$ and $\omega_i$ due to the term $\lambda_{i,j}(\theta)$. For fixed $\omega_i$, it is then possible to find $\lim_{t\rightarrow\infty}b^l_{i,t}(\theta^*)$ from \eqref{equation:rho i t theta}, which is nonzero. \added{When combining with the fact that $b^l_{i,t}(\theta^*)$ is nonzero for any finite $t$ stated as a pre-assumption in Lemma \ref{lemma:convergence}, we conclude that \eqref{equation:nonzero of true hypothesis} is proved.}
\end{proof}

\subsection{Proof of Lemma \ref{lemma:nonzero of true hypothesis}}
\begin{proof}
We prove this lemma by contradiction. Suppose there is a time $t$ where $b^a_{i,t}(\theta^*)=0$ for the first time for a good agent $i$.  \added{From Lemma \ref{lemma:convergence} we know that
$b^{\ell}_{i,t-1} (\theta^*)>0$, consequently and logically,
$b^{a}_{i,t-1} (\theta^*)>0$ holds. }
Therefore, from \eqref{equation:acutual belief update rule for case two} it immediately follows that it cannot happen in case two in SDHT. 

Therefore we  infer that $b^a_{i,t}(\theta^*)=0$ can only result from an update in case one in SDHT. From \eqref{equation:acutual belief update rule for case one}, this is only possible when $\min_{j\in \mathcal{N}^{\theta^*}_{i,t}}\{b^a_{j,t-1}(\theta^*)\}=0$. Note that in case one, we remove $f$ number of lowest beliefs on $\theta^*$ as in Line \ref{algorithm:remove beliefs} of SDHT. \added{In the worst case, we remove all the $f$ ABs that are zero from the bad agents. Then what is left are the ABs from good agents, which are nonzero from the definition of this time $t$.} For all other cases, the removed lowest $f$ ABs must contain nonzero entries, which implies that all the beliefs for agents in $\mathcal{N}^{\theta^*}_{i,t}$ are nonzero as well. In either case, we have that $\min_{j\in \mathcal{N}^{\theta^*}_{i,t}}\{b^a_{j,t-1}(\theta^*)\}>0$ which leads to a contradiction. 
\end{proof}

\subsection{Proof of Theorem \ref{theorem:main result}}\label{subsec:proof of Theorem 1}
\begin{proof}
\added{With the proof of Lemma \ref{lemma:convergence} and \ref{lemma:lower and upper bounded beliefs}, now we are ready to give the proof for Theorem \ref{theorem:main result}.} We are interested in state observation path set $\hat{\Omega}$ as defined in Remark \ref{remark:lemma 1} since $\hat{\Omega}$ has measure one.

The proof consists of two parts. \added{First, we prove that the AB over the true hypothesis $b^a_{i,t}(\theta^*)$ for any good agent $i$ is lower-bounded. Then we show that  the AB over the rest of the hypotheses will become arbitrarily small.} These two parts together are sufficient to prove that the $b^a_{i,t}(\theta^*)$ will be arbitrarily close to one almost surely. 

\added{For the first part, if case one happens only finitely often for a good agent $i\in G$ for true hypothesis $\theta^*$, then by condition two in Theorem \ref{theorem:main result}, we know that $i\in S(\theta,\theta^*)$ for any $\theta\neq\theta^*$. Therefore, by Lemma \ref{lemma:convergence} we know that LB $b_{i,t}^l(\theta^*)\rightarrow 1$ almost surely and so is AB, then the proof is done. Otherwise, if case on happens infinitely often for a good agent $i$, we fix a path $\omega\in\hat{\Omega}$ and define $\delta_1:=\min_{i\in G} \lim_{t\rightarrow\infty}b^l_{i,t}(\theta^*)$.} Then, for each good agent $i\in G$, there exist a time $t_i$ and a constant $\alpha$ such that, for all $t\geq t_i$, we have $b^l_{i,t}(\theta^*)\geq\delta_1-\alpha$ where $\alpha<\delta_1$. We define 
\begin{equation}\label{equation:bar t 1}
\bar{t}_1 := \max_{i\in G}t_i. 
\end{equation}
We also define $\delta_2: = \min_{i\in G}b^a_{i,\bar{t}_1}(\theta^*)$. By Lemma \ref{lemma:nonzero of true hypothesis}, we know $\delta_2 >0$. We further define
\begin{equation}\label{equation:delta1}
    \delta:=\min\{\delta_1-\alpha,\delta_2\}. 
\end{equation}
\added{Then at $t= \bar{t}_1+1$, in SDHT, for AB update, either case one or case two happens.}  If case one happens, we use \eqref{equation:acutual belief update rule for case one} to update the belief for $\theta^*$, then we will have
\begin{equation}\label{proof:case one}
 \tilde{b}^a_{i,\bar{t}_1+1}(\theta^*)=\min\{\{b^a_{j,\bar{t}_1}(\theta^*)\}_{j\in \mathcal{N}^{\theta^*}_{i,\bar{t}_1+1}},b^l_{i,\bar{t}_1+1}(\theta^*)\}\geq\delta. 
\end{equation}
\added{\eqref{proof:case one} holds despite possible altered ABs from $f$ bad agents because in the update rule for case one, there is at least one good agent $i\in G$ in $\mathcal{N}^{\theta^*}_{i,\bar{t}_1+1}$ since we only eliminate $f$ smallest beliefs and we have at least $2f+1$ neighbors out of which at most $f$ are bad.} Therefore, the beliefs remaining in $\mathcal{N}^{\theta^*}_{i,\bar{t}_1+1}$ are lower-bounded by $\delta$.

\noindent \added{If case two happens in SDHT, we use \eqref{equation:acutual belief update rule for case two} which gives }
\begin{equation}\label{proof:case two}
     \tilde{b}^a_{i,\bar{t}_1+1}(\theta^*)=\min\{b^a_{i,\bar{t}_1}(\theta^*),b^l_{i,\bar{t}_1+1}(\theta^*)\}\geq\delta.
\end{equation}
Therefore, no matter which case occurs,  we have $\tilde{b}^a_{i,\bar{t}_1+1}(\theta^*)\geq\delta$ before normalization. Then we perform the normalization as in \eqref{equation:norm} and can derive 
\begin{equation}\label{proof:normalization1}
\begin{split}
b^a_{i,\bar{t}_1+1}(\theta^*) &= 
  \frac{\tilde{b}^a_{i,\bar{t}_1+1}(\theta^*)}{\sum_{p=1}^m \tilde{b}^a_{i,\bar{t}_1+1}(\theta_p)}\geq \frac{\delta}{\sum_{p=1}^m \tilde{b}^a_{i,\bar{t}_1+1}(\theta_p)}\\
  &\geq\frac{\delta}{\sum_{p=1}^m b^l_{i,\bar{t}_1+1}(\theta_p)}=\delta.
\end{split}
\end{equation}
The last inequality in \eqref{proof:normalization1} holds since by \eqref{equation:acutual belief update rule for case one} and \eqref{equation:acutual belief update rule for case two}, we know that $\tilde{b}^a_{i,\bar{t}_1+1}(\theta)\leq b^l_{i,\bar{t}_1+1}(\theta)$ for any $\theta\in\Theta$.

Because for all $t\geq \bar{t}_1$, we have $\tilde{b}^a_{i,t}(\theta^*)\geq\delta$, by induction, we can claim that 
\begin{equation}\label{proof:part 1}
    b^a_{i,t}(\theta^*)\geq\delta,\forall t\geq \bar{t}_1,\forall i\in G.
\end{equation}
Now we are ready to prove the second part, which establishes the fact that the beliefs for hypotheses other than the $\theta^*$ are upper-bounded. We pick a small $\epsilon>0$ such that $\epsilon < \delta$. Given a hypothesis $\theta\neq\theta^*$, for any agent $i\in S(\theta,\theta^*)$, by Lemma \ref{lemma:convergence}, we know that there exists a time $t_i^\theta$  such that
\begin{equation}\label{proof:t i theta}
  b_{i,t}^l(\theta)\leq \epsilon^3, \forall t\geq t_i^\theta.  
\end{equation}
We further define
$
\bar{t}_2:=\max\{\bar{t}_1,\max_{i\in S(\theta,\theta^*)}\{t_i^\theta\}\}.
$
Note that, since $\bar{t}_2\geq\bar{t}_1$, from \eqref{proof:part 1} we have that 
$$
b^a_{i,\bar{t}_2+1}(\theta^*)\geq\delta.
$$
\added{For any agent $i\in G$, if case one applies for AB update in SDHT, then we use \eqref{equation:acutual belief update rule for case one} to update $\theta\neq\theta^*$ and obtain}
\begin{equation}\label{proof:case one1}
 \tilde{b}^a_{i,\bar{t}_2+1}(\theta)=\min\{\{b^a_{j,\bar{t}_2}(\theta)\}_{j\in \mathcal{N}^{\theta}_{i,\bar{t}_2+1}},b^l_{i,\bar{t}_2+1}(\theta)\}\leq\epsilon^3.
\end{equation}
\added{If $i\in S(\theta,\theta^*)$, then \eqref{proof:case one1} holds trivially by the definition of $\epsilon$ in \eqref{proof:t i theta}. Otherwise, note that \eqref{proof:case one1} holds even with altered ABs shared from up to $f$ bad agents following  similar reasoning with \eqref{proof:case one}. From the belief update condition in case one, there is at least one good agent $j\in G\cap S(\theta,\theta^*)$ in $\mathcal{N}^{\theta^*}_{i,\bar{t}_1+1}$ since we only eliminate $f$ smallest beliefs and we have at least $2f+1$ neighbors that belong to $S(\theta,\theta^*)$.} \added{On the other hand, if SDHT is in the condition of case two, then for $i\in S(\theta,\theta^*)\cap G$ we have}
\begin{equation}\label{proof:case two1}
\tilde{b}^a_{i,\bar{t}_2+1}(\theta)=\min\{b^a_{i,\bar{t}_2}(\theta),b^l_{i,\bar{t}_2+1}(\theta)\}\leq\epsilon^3.    
\end{equation}
Therefore, no matter which case occurs, we have that
$$
\tilde{b}^a_{i,\bar{t}_2+1}(\theta)\leq\epsilon^3, \forall i\in S(\theta,\theta^*)\cap G
$$
before normalization. Then we perform the normalization as in \eqref{equation:norm} and can derive 
\begin{equation}\label{proof:epsilon}
\begin{split}
  b^a_{i,\bar{t}_2+1}(\theta^*)&=\frac{\tilde{b}^a_{i,\bar{t}_2+1}(\theta)}{\sum_{p=1}^m \tilde{b}b^a_{i,\bar{t}_2+1}(\theta_p)}\leq \frac{\epsilon^3}{\sum_{p=1}^m b\tilde{b}^a_{i,\bar{t}_2+1}(\theta_p)}\\
  &\leq\frac{\epsilon^3}{ b^l_{i,\bar{t}_2+1}(\theta^*)}\leq\frac{\epsilon^3}{\delta}<\epsilon^2  .
\end{split}
\end{equation}
The last inequality is due to the fact $ \epsilon<\delta$. Therefore, by induction we have proved that, ,
\begin{equation}\label{proof:epsilon1}
    b^a_{i,t}(\theta)<  \epsilon^2\leq\epsilon,\forall t\geq \bar{t}_2+1,\forall i\in S(\theta,\theta^*)\cap G.
\end{equation}
For any $i\in G\backslash S(\theta,\theta^*)$, by condition 2 in  Theorem \ref{theorem:main result}, we know that case one will happen infinitely often. As a result, for such agent  $i$, there exists a time $\bar{t}_{i,1}^\theta\geq \bar{t}_2+1$ such that case one occurs for the first time for $t\geq \bar{t}_2+1$. Then at $\bar{t}_{i,1}^\theta$ from \eqref{proof:epsilon1}, we know that
\begin{equation}
  b^a_{j,\bar{t}_{j,1}^\theta}(\theta)\leq \epsilon^2,\forall j\in S(\theta,\theta^*)
  \cap G.  
\end{equation}
Following a  reasoning similar to \eqref{proof:case one1} through \eqref{proof:epsilon}, we obtain that, after normalization, for any agent $i\in G\backslash S(\theta,\theta^*)$, 
\begin{equation}\label{proof:case one first time}
b^a_{i,\bar{t}_{i,1}^\theta}(\theta)<\epsilon.
\end{equation}
Then we define another time instant  $\bar{t}_{i,2}^\theta$ such that $\bar{t}_{i,2}^\theta\geq \bar{t}_{i,1}^\theta+1$ where the case one happens for  second time for $t\geq \bar{t}_2+1$. Notice that, from the conditions in Theorem \ref{theorem:main result}, case two may occur infinitely often  for agent $i\notin S(\theta,\theta^*)$. If this is the case, it then follows that case two happens for any $t\in(\bar{t}_{i,1}^\theta,\bar{t}_{i,2}^\theta)$. By \eqref{equation:acutual belief update rule for case two} and \eqref{proof:case one first time}, we have that
\begin{equation}\label{proof:case one second time}
    b^a_{i,t}(\theta)<\epsilon,\forall t\in(\bar{t}_{i,1}^\theta,\bar{t}_{i,2}^\theta).
\end{equation}
Combining \eqref{proof:case one first time} and \eqref{proof:case one second time},  we obtain that
\begin{equation}\label{proof:case one until}
    b^a_{i,t}(\theta)<\epsilon,\forall t\in[\bar{t}_{i,1}^\theta,\bar{t}_{i,2}^\theta-1].
\end{equation}
Note that \eqref{proof:case one until} holds trivially if $\bar{t}_{i,1}^\theta=\bar{t}_{i,2}^\theta-1$, i.e., there is no occurrence of the case two between two consecutive case one updates. So even if  case two happens only finitely often, \eqref{proof:case one until} still holds.  Then by induction, for agent $i\in G\backslash S(\theta,\theta^*)$, we have that 
\begin{equation}\label{proof:case one until 2}
    b^a_{i,t}(\theta)<\epsilon,\forall t\geq \bar{t}_{i,1}^\theta.
\end{equation}
We further define
$
\bar{t}_3 := \max_\theta\max_{i\notin S(\theta,\theta^*)} \bar{t}_{i,1}^\theta.
$
Since $\bar{t}_3> \bar{t}_2$, 
\begin{equation}\label{proof:part 2}
     b^a_{i,t}(\theta)<\epsilon,\forall t\geq \bar{t}_{3},\forall i\in G,\forall \theta\neq\theta^*.
\end{equation}
\added{Combining \eqref{proof:part 1} and \eqref{proof:part 2}, for any $\omega\in\hat{\Omega}$, \added{$\lim_{t\rightarrow\infty} b^a(i,t)(\theta)= 1$}.} Since the set $\hat{\Omega}$ has measure one as established in Remark \ref{remark:lemma 1}, the proof of Theorem \ref{theorem:main result} is complete.
\end{proof}

\subsection{Proof of Theorem \ref{thm:3}}
\begin{proof}

\added{Like the proof of Theorem \ref{theorem:main result}, 1) we are only interested in state observation path set $\hat{\Omega}$ as defined in Remark \ref{remark:lemma 1} since $\hat{\Omega}$ has measure one. 2) we prove the convergence in two steps for an arbitrary state observation path from $\hat{\Omega}$. The first step establishes that that the AB over the true hypothesis for any good agent $i\in G$ is always lower-bounded from zero. The second step shows that the AB over any hypothesis other than the true hypothesis is upper-bounded by an arbitrarily small constant over time.}

\added{We only consider the scenario that case one happens infinitely often since otherwise the proof trivially holds as discussed in proof of Theorem \ref{theorem:main result}. We fix a path $\omega\in\hat{\Omega}$ and define $$\delta_1:=\min_{i\in G} \lim_{t\rightarrow\infty}b^l_{i,t}(\theta^*).$$ Then, as in the proof of Theorem \ref{theorem:main result},  for each good agent $i\in G$, there exist a time $t_i$ and a constant $\alpha$ such that, for all $t\geq t_i$, we have $b^l_{i,t}(\theta^*)\geq\delta_1-\alpha$ where $\alpha<\delta_1$.}We define $\bar{t}_1$ as in \eqref{equation:bar t 1} and $\delta$ as in \eqref{equation:delta}.

\added{Then at $t= \bar{t}_1+1$, in SDHT, for AB update, either case one or case two happens. If case one happens, for average rule we know that for all $\theta'\neq\theta$, $|S(\theta,\theta')\cap\mathcal{N}_{i,\bar{t}+1}|\geq 2f+2$, then we use \eqref{equation:average belief} and \eqref{equation:acutual belief update rule average} instead of \eqref{equation:asynchronous acutual belief update rule for case one} to update the AB as in the following equation, where}
\begin{equation}\label{proof:case one avg}
\added{\tilde{b}^a_{i,\bar{t}_1+1}(\theta^*) = \min\{\bar{b}^{a}_{i,\bar{t}_1+1}(\theta^*),b^l_{i,\bar{t}_1+1}(\theta^*)\}\geq\delta.}
\end{equation}
\added{The inequality \eqref{proof:case one avg} holds despite possible altered ABs from $f$ bad agents because in the update rule for case one, from Lemma \ref{lemma:lower and upper bounded beliefs} we know that the ABs remaining in $\mathcal{M}^{\theta^*}_{i,\bar{t}_1+1}$ are lower-bounded by $\delta$ and so is the average $\bar{b}^{a}_{i,\bar{t}_1+1}(\theta^*)$ defined in \eqref{equation:average belief}. Combined with the fact that $b^l_{i,\bar{t}+1}(\theta^*)\geq\delta$ by the definition of $\delta$, we know that \eqref{proof:case one avg} holds true.}

\noindent \added{If case two happens in SDHT, we use \eqref{equation:acutual belief update rule for case two} and also have $\tilde{b}^a_{i,\bar{t}_1+1}(\theta^*)\geq\delta$ from \eqref{proof:case two}. Therefore, no matter case one or case two occurs,  we have $\tilde{b}^a_{i,\bar{t}_1+1}(\theta^*)\geq\delta$ before normalization. Then we perform the normalization as in \eqref{equation:norm} and can derive $b^a_{i,\bar{t}_1+1}(\theta^*)\geq\delta$ following the same steps as in \eqref{proof:normalization1}.}

\added{Then following the same induction logic that reaches \eqref{proof:part 1}, we can prove the first step where }
\begin{equation}\label{proof:part 1 avg}
    \added{b^a_{i,t}(\theta^*)\geq\delta,\forall t\geq \bar{t}_1,\forall i\in G.}
\end{equation}
\added{Now we move on to prove the second part, which establishes the fact that the beliefs for hypotheses other than the $\theta^*$ are upper-bounded by an arbitrarily small constant. We pick a small $\epsilon>0$ such that $\epsilon < \delta$. Given a hypothesis $\theta\neq\theta^*$, for any agent $i\in S(\theta,\theta^*)$, by Lemma \ref{lemma:convergence}, we know that there exists a time $t_i^\theta$  such that}
\begin{equation}\label{proof:t i theta avg}
  b_{i,t}^l(\theta)\leq \epsilon^3, \forall t\geq t_i^\theta.  
\end{equation}
We further define
$$
\bar{t}_2:=\max\{\bar{t}_1,\max_{i\in S(\theta,\theta^*)}\{t_i^\theta\}\}.
$$
Note that, since $\bar{t}_2\geq\bar{t}_1$, from \eqref{proof:part 1 avg} we have that 
$$
b^a_{i,\bar{t}_2+1}(\theta^*)\geq\delta.
$$
\added{If case one happens, for average rule we know that for all $\theta'\neq\theta$, $|S(\theta,\theta')\cap\mathcal{N}_{i,\bar{t}+1}|\geq 2f+2$, then we use \eqref{equation:average belief} and \eqref{equation:acutual belief update rule average} instead of \eqref{equation:asynchronous acutual belief update rule for case one} to update the AB and}
\begin{equation}\label{proof:case one1 avg}
\added{\tilde{b}^a_{i,\bar{t}_2+1}(\theta) = \min\{\tilde{b}^{a}_{i,\bar{t}_2+1}(\theta),b^l_{i,\bar{t}_2+1}(\theta)\}\leq\epsilon^3.}
\end{equation}
\added{If $i\in S(\theta,\theta^*)$, \eqref{proof:case one1 avg} holds trivially by the definition of $\epsilon$ in \eqref{proof:t i theta avg}. Otherwise, note that \eqref{proof:case one1 avg} holds even with altered ABs shared from up to $f$ bad agents. From the belief update condition in case one, we know that there exists at least one good agent $j''\in\mathcal{N}_{i,\bar{t}_2+1}\cap S(\theta,\theta')$ such that $b^a_{j,\bar{t}_2}(\theta)\leq b^a_{j'',\bar{t}_2}(\theta)$ for any agent $j$ in $\mathcal{M}^{\theta^*}_{i,\bar{t}_2+1}$. Furthermore, it is guaranteed that  $b^a_{j'',t_2}(\theta)$ will be upper-bounded by an arbitrarily small constant from Remark \ref{remark:for lemma 1} for Lemma \ref{lemma:convergence} and so is the average $\bar{b}^{a}_{i,\bar{t}_2+1}(\theta^*)$ in \eqref{equation:average belief}. On the other hand, if SDHT is in the condition of case two, then we have $\tilde{b}^a_{i,\bar{t}_2+1}(\theta)\leq\epsilon^3$ as in \eqref{proof:case two1}.}
\added{
Therefore, no matter which case occurs, we have that
$$
\tilde{b}^a_{i,\bar{t}_2+1}(\theta)\leq\epsilon^3, \forall i\in S(\theta,\theta^*)\cap G
$$
before normalization. Then we perform the normalization as in \eqref{equation:norm} and can derive 
$b^a_{i,\bar{t}_2+1}(\theta^*)<\epsilon^2$ following the same steps that reach \eqref{proof:epsilon}.}

\added{The rest of proof follows the proof of  Theorem \ref{theorem:main result} from \eqref{proof:epsilon1} on.}
\end{proof}

\subsection{Proof of Theorem \ref{thm:4}}
\begin{proof}

\added{Like the proof of Theorem \ref{theorem:main result1}, 1) we are only interested in state observation path set $\hat{\Omega}$ as defined in Remark \ref{remark:lemma 1} since $\hat{\Omega}$ has measure one. 2) we prove the convergences in two steps for an arbitrary state observation path from $\hat{\Omega}$.}

\added{For the first part that lower-bounds $b_i^a(\theta^*)$, as in the proof of Theorem \ref{theorem:main result1}, we study two different scenarios. In the first scenario where case one only happens finitely often to an agent $i\in G$ and $\theta^*$, the proof follows that of Theorem \ref{theorem:main result1}.}

\added{The second scenario indicates that case one happens infinitely often to an agent $i\in G$ and $\theta^*$. Then, as in the proof of Theorem \ref{theorem:main result1},  for each good agent $i\in G$, there exist a time $t_i$ and a constant $\alpha$ such that, for all $t\geq t_i$, we have $b^l_{i,t}(\theta^*)\geq\delta_1-\alpha$ where $\alpha<\delta_1$.} We define $\bar{t}_1$ as in \eqref{proof:bar t 1} and $\delta$ as in \eqref{equation:delta}.

\added{Since case one happens infinitely often, there must exist a time $t'_i\geq\bar{t}_1$ that Algorithm \ref{alg:ABU} returns true. For average rule, it means that for all $\theta'\neq\theta$, $|\mathcal{N}^{\theta}_{i}\cap S(\theta,\theta')|\geq 2f+2$. As a result, $ResetFlag$ is set to true and after AB update with \eqref{equation:acutual belief update rule average} at $t'_i$, all the saved ABs are deleted at $t'_i+1$. Therefore,  if $j\in\mathcal{N}_{i,t}^{\theta^*}$, we know that}
\begin{equation}\label{proof4:t'_i}
    \added{{b}_j^a(\theta^*)=b_{j,t}^a(\theta^*)\geq\delta,\forall t\geq t'_i.}
\end{equation}
\added{There must also exist a time $t''_i>t'_i$ such that case one happens again in ADHT for AB update with average rule, where we use \eqref{equation:average belief} and \eqref{equation:acutual belief update rule average} instead of \eqref{equation:asynchronous acutual belief update rule for case one} and obtain}
\begin{equation}\label{proof2:case one avg}
\added{\tilde{b}^a_{i,t''_i}(\theta^*)=\min\{\{b^a_{j}(\theta^*)\}_{j\in \tilde{\mathcal{N}}^{\theta^*}_{i,t''_i}},b^l_{i,t''_i}(\theta^*)\}\geq\delta.}
\end{equation}
\added{If $i\in S(\theta^*,\theta)$,  \eqref{proof2:case one avg} holds trivially. Otherwise, the inequality \eqref{proof2:case one avg} holds despite possible altered ABs from $f$ bad agents because in the update rule for case one, from Lemma \ref{lemma:lower and upper bounded beliefs ADHT} we know that the ABs remaining in $\mathcal{M}^{\theta^*}_{i,t''_i}$ are lower-bounded by $\delta$ and so is the average AB $\bar{b}^{a}_{i,t''}(\theta^*)$ defined in \eqref{equation:average belief}. Combined with the fact that $b^l_{i,t''_i}(\theta^*)\geq\delta$ by the definition of $\delta$, we know that \eqref{proof2:case one avg} holds true.}

\added{At $t''_i+1$, if case one happens again,  we know that $\tilde{b}^a_{i,t''_i+1}(\theta^*)\geq\delta$ by the same logic that reaches \eqref{proof2:case one avg}. Alternatively, if case two happens at $t''_i+1$, we use update rule \eqref{equation:acutual belief update rule for case two} and we have} 
\begin{equation}\label{proof2:case two avg}
     \added{\tilde{b}^a_{i,t''_i+1}(\theta^*)=\min\{b^a_{i,t''_i}(\theta^*),b^l_{i,t''_i+1}(\theta^*)\}\geq\delta}
\end{equation} 
by the definition of $\delta$. Therefore, no matter which case occurs,  we have $\tilde{b}^a_{i,t''_i+1}(\theta^*)\geq\delta$ before normalization. Then we perform the normalization as in \eqref{equation:norm} and can derive $b^a_{i,t''_i+1}(\theta^*)\geq\delta$ as in \eqref{proof2:normalization1}. Consequently, following the same logic and by induction, we reach \eqref{proof2:part 11}, where we rewritten below for readability. 

\begin{equation}\label{proof2:part 11 repeat}
        \added{b_{i,t}^a(\theta^*)\geq\delta,\forall t\geq \tilde{t}_2,\forall i\in G.}
\end{equation}

\added{Now we move on to prove that the ABs over $\theta\neq\theta^*$ are upper bounded by an arbitrarily small constant.}  \added{Given a hypothesis $\theta\neq\theta^*$, for any agent $i\in S(\theta,\theta^*)$, we pick a small $0<\epsilon<1$ such that $\epsilon<\delta$ and define $t_i^\theta$ 
such that}
\begin{equation}\label{proof2: epsilon 3 avg}
b_{i,t}^l\leq\epsilon^3,\forall t\geq t_i^\theta.
\end{equation}
Then we further define
$$
\added{\tilde{t}_3 := \max\{\tilde{t}_2,\max_{i\in S(\theta,\theta^*)}\{t_i^\theta\}\}.}
$$

For any agent $i\in G\cap S(\theta,\theta^*)$, if case one applies for AB update in ADHT, then we use \eqref{equation:average belief} and \eqref{equation:acutual belief update rule average} instead of \eqref{equation:asynchronous acutual belief update rule for case one} to update $\theta\neq\theta^*$ and obtain
\begin{equation}\label{proof2:case one1 avg}
 \tilde{b}^a_{i,\tilde{t}_3+1}(\theta)=\min\{\bar{b}^{a}_{i,\tilde{t}_3+1}(\theta^*),b^l_{i,\tilde{t}_3+1}(\theta)\}\leq\epsilon^3.
\end{equation}
\added{The inequality \eqref{proof2:case one1 avg} holds even with altered ABs shared from up to $f$ bad agents. From the belief update condition in case one, we know that there exists at least one good agent $j''\in\mathcal{N}_{i,\tilde{t}_3+1}\cap S(\theta,\theta^*)$ such that $b^a_{j,\tilde{t}_3}(\theta)\leq b^a_{j'',\tilde{t}_3}(\theta)$ for any agent $j$ in $\mathcal{M}^{\theta^*}_{i,\tilde{t}_3+1}$. Furthermore, it is guaranteed that  $b^a_{j'',\tilde{t}_3+1}(\theta)$ will be upper-bounded by $\epsilon^3$ and so is the average $\bar{b}^{a}_{i,\tilde{t}_3+1}(\theta)$ in \eqref{equation:average belief}.} On the other hand, if ADHT is in the condition of case two, then we have
\begin{equation}\label{proof2:case two1 avg}
\tilde{b}^a_{i,\tilde{t}_3+1}(\theta)=\min\{b^a_{i,\tilde{t}_3}(\theta),b^l_{i,\tilde{t}_3+1}(\theta)\}\leq\epsilon^3.    
\end{equation}
Therefore, no matter which case occurs, we have that
$$
\tilde{b}^a_{i,\tilde{t}_3+1}(\theta)\leq\epsilon^3, \forall i\in S(\theta,\theta^*)\cap G
$$
before normalization. Then we perform the normalization as in \eqref{equation:norm} and can derive $b^a_{i,\tilde{t}_3+1}(\theta^*<\epsilon^2$ following the same reasoning that reaches \eqref{proof2:epsilon}.

\added{The rest of proof follows the proof of  Theorem \ref{theorem:main result1} from \eqref{proof2:epsilon 2} on.}
\end{proof}

\end{document}